\documentclass[runningheads]{llncs}
\usepackage[T1]{fontenc}
\usepackage{graphicx}
\usepackage{mathtools}

\newcommand{\M}{\mathcal{M}}

\usepackage[dvipsnames]{xcolor}
\definecolor{bluegray}{rgb}{0.5, 0.65, 0.85}
\definecolor{pastelgreen}{HTML}{58bc82}
\definecolor{greengray}{rgb}{0.5, 0.85, 0.65}
\definecolor{sandstone}{rgb}{0.8, 0.6, 0.2}
\definecolor{ruby}{HTML}{D81E5B}
\definecolor{teal}{HTML}{0E6C6B}
\definecolor{cyan}{HTML}{08B2E3}
\definecolor{purple}{HTML}{52489C}
\definecolor{scblue}{HTML}{1F73A9}
\definecolor{scgray}{HTML}{656A6D}
\definecolor{scred}{HTML}{DA2C38} %

\usepackage[T1]{fontenc}
\usepackage{anyfontsize}
\usepackage{textcomp}
\usepackage[utf8x]{inputenc}

\usepackage{amsmath}
\usepackage{amssymb}
\usepackage{thm-restate}
\usepackage{url}
\usepackage{csquotes}
\usepackage{longtable}

\usepackage{algpseudocode}

\usepackage{etoolbox}

\newenvironment{proofsketch}{\begin{proof}[sketch]}{\end{proof}}

\makeatletter
\pretocmd{\thmt@rst@storecounters}{\Hy@SaveLastskip}{}{}
\apptocmd{\thmt@rst@storecounters}{\Hy@RestoreLastskip}{}{}
\makeatother

\usepackage{orcidlink}

\usepackage{algorithm}
\usepackage{algpseudocode}

\algrenewcommand\algorithmicrequire{\textbf{Input:}}
\algrenewcommand\algorithmicensure{\textbf{Output:}}

\usepackage{mdframed}

\usepackage{paralist}

\usepackage{graphicx}

\usepackage{adjustbox}
\usepackage{varwidth}

\usepackage{microtype}

\usepackage[capitalise]{cleveref}
\crefname{problem}{Problem}{Problems}
\crefname{definition}{Def.}{Defs.}
\usepackage{stmaryrd}

\usepackage{nicefrac}
\usepackage{underscore}
\usepackage{booktabs}
\usepackage{wrapfig}
\usepackage{multirow, multicol}
\usepackage[table]{xcolor}

\usepackage[english]{babel}

\usepackage{cite}

\renewcommand{\paragraph}[1]{\smallskip\noindent\emph{#1}~}
\renewcommand{\subsubsection}[1]{\medskip\noindent\textbf{#1}~}

\usepackage{tikz, pgf, tikz-cd}
\usetikzlibrary{automata, positioning, shapes.geometric, shapes.symbols, patterns, fit, backgrounds}
\tikzset{elliptic state/.style={draw,ellipse}}
\tikzstyle{block} = [rectangle, draw,
    text width=2.5cm, text centered, rounded corners, minimum height=4em]
\tikzstyle{decision} = [diamond, draw,
    text width=1.5cm, text badly centered, inner sep=2pt]

\tikzset{every state/.style={inner sep=0pt, minimum size=15pt}}
\tikzset{pMC/.style={node distance=1.4cm, on grid, auto, initial text=,every label/.style={font=\footnotesize}}}

\usepackage{pgfplots}
\pgfplotsset{compat=newest}
\usepgfplotslibrary{groupplots}
\usepgfplotslibrary{polar}
\usepgfplotslibrary{statistics}
\usepgfplotslibrary{fillbetween}

\usepackage{pifont}
\newcommand{\cmark}{\ding{51}}%
\newcommand{\xmark}{\ding{55}}%
\newcommand{\yes}{\textcolor{pastelgreen}{\cmark}}
\newcommand{\no}{\textcolor{ruby}{\xmark}}
\newcommand{\dontmatter}{\textcolor{gray!80}{\textbf{?}}}

\usepackage{xspace}
\newcounter{goodsmileys}
\newcounter{badsmileys}
\newcommand{\Smiley}[2]{%
	\begin{tikzpicture}[-,scale=#2]
	\newcommand*{\SmileyRadius}{1.0}%
	\pgfmathsetmacro{\eyeX}{0.5*\SmileyRadius*cos(30)}
	\pgfmathsetmacro{\eyeY}{0.3*\SmileyRadius*sin(30)}
	\draw [line width=0.28mm] (\eyeX-0.25,\eyeY) -- (\eyeX-0.25,\eyeY+0.4);
	\draw [line width=0.28mm] (-\eyeX+0.25,\eyeY) -- (-\eyeX+0.25,\eyeY+0.4);

	\pgfmathsetmacro{\xScale}{2*\eyeX/180}
	\pgfmathsetmacro{\yScale}{1.0*\eyeY}
	\draw[line width=0.28mm, domain=-\eyeX:\eyeX]
	plot ({\x},{
		-0.1+#1*0.15 %
		-#1*1.75*\yScale*(sin((\x+\eyeX)/\xScale))-\eyeY});
	\end{tikzpicture}%
}%
\newcommand{\good}{%
  \stepcounter{goodsmileys}%
  \Smiley{0.6}{0.22}\xspace%
}
\newcommand{\bad}{%
  \stepcounter{badsmileys}%
  \Smiley{-0.6}{0.22}\xspace%
}

\mdfdefinestyle{MyFrame}{%
    linecolor=black,
    outerlinewidth=0pt,
    innertopmargin=2pt,
    innerbottommargin=2pt,
    innerrightmargin=2pt,
    innerleftmargin=2pt,
    leftmargin = 2pt,
    rightmargin = 2pt
}

\usepackage{menukeys}

\newif\ifcamera
\camerafalse       %
\newcommand{\appref}[2]{%
  \ifcamera
    App.~#2~\cite{arxiv}%
  \else
    App.~\ref{#1}%
  \fi
}

\newcommand{\Paths}{\mathrm{Paths}}
\newcommand{\FinPaths}{\mathrm{FinPaths}}
\newcommand{\Distr}{\Delta}
\newcommand{\Hist}{\mathrm{Hist}}

\newcommand{\Pre}{\mathrm{Prefix}}
\newcommand{\Allow}{\mathrm{Allow}}
\newcommand{\Dirac}{\mathbf{1}}

\usepackage{scalerel}
\newcommand\shieldbase{%
  \tikz[baseline=0.2ex] \draw[semithick]
    (0,1.75ex) -- (0,0.75ex)
    arc [radius=0.75ex, start angle=-180, end angle=0]
    -- (1.5ex,1.75ex) -- cycle;%
}
\DeclareRobustCommand\shield{%
  \mathord{\scalerel*{\shieldbase}{\bigcirc}}%
}

\newcommand{\ensuremathspace}[1]{\ensuremath{#1}\xspace}
\newcommand{\mathsymbol}[2]{ \newcommand{#1}{\ensuremathspace{#2}} }

\mathsymbol{\sinit}{s_{0}}
\mathsymbol{\Act}{\mathrm{Act}}
\mathsymbol{\SafeAct}{\mathrm{Act^{Safe}}}
\mathsymbol{\act}{\alpha}
\mathsymbol{\mpm}{\mathcal{P}}
\mathsymbol{\mdp}{\mathcal{M}}
\mathsymbol{\mdpT}{(S,\sinit,\Act,\mpm)}
\mathsymbol{\mcT}{(S,\sinit,\mpm)}
\mathsymbol{\unitinterval}{[0,1]}
\newcommand{\F}[1]{\diamond #1}

\newcommand{\sched}[1][]{\ensuremathspace{ \pi_{#1} }}
\mathsymbol{\schedopt}{\sched[]^{*}}

\mathsymbol{\imc}{\mdp^{\sched}}

\renewcommand{\Pr}{\mathrm{Pr}}

\newcommand{\risk}{\mathit{irisk}}
\newcommand{\safety}{\mathit{isafety}}
\newcommand{\hpath}{\mathit{path}}
\newcommand{\last}{\mathit{last}}
\newcommand{\choice}{\mathit{choice}}

\newcommand{\E}{\mathbb{E}}
\newcommand{\mix}{\mathit{mix}}

\newcommand{\T}{\mathcal{T}}

\newcommand{\Safe}{\mathrm{Safe}}

\mathsymbol{\shieldSafe}{\shield_{\Safe}}
\mathsymbol{\shieldDelta}{\shield_{\delta}}
\mathsymbol{\shieldDeltaPlus}{\shield_{\delta^+}}
\mathsymbol{\shieldOpt}{\shield_{\mathit{opt}}}
\mathsymbol{\shieldPess}{\shield_{\mathit{pess}}}
\mathsymbol{\shieldOnl}{\shield_{\mathit{onl}}}
\mathsymbol{\shieldOff}{\shield_{\mathit{off}}}
\mathsymbol{\shieldMemoryless}{\shield_{\mathit{ML}}}

\newcommand{\SafetyStrong}{\textbf{(S+)}\xspace}
\newcommand{\SafetyWeak}{{\textbf{(S\(-\))}\xspace}}
\newcommand{\PermissivenessStrong}{\textbf{(P+)}\xspace}
\newcommand{\PermissivenessWeak}{\textbf{(P\(-\))}\xspace}
\newcommand{\SaturatedPermisiveness}{\textbf{(P*)}\xspace}

\newcommand{\opt}{%
    \mathop{\mathrm{dir}}
}

\newlength\myheight
\newlength\mydepth
\settototalheight\myheight{Xygp}
\newcommand*\inlinegraphics[1]{%
  \settototalheight\myheight{Xygp}%
  \settodepth\mydepth{Xygp}%
  \raisebox{-0.5\mydepth}{\includegraphics[height=\myheight]{#1}}%
}

\begin{document}
\title{Shields to Guarantee Probabilistic Safety~in~MDPs}

\author{%
  Linus~Heck\inst{1}\orcidlink{0000-0002-4774-7609} \and
  Filip~Macák\inst{2}\orcidlink{0009-0004-4277-2751} \and
  Roman~Andriushchenko\inst{2}\orcidlink{0000-0002-1286-934X} \and
  Milan~Češka\inst{2}\orcidlink{0000-0002-0300-9727} \and
  Sebastian~Junges\inst{1}\orcidlink{0000-0003-0978-8466}
}
\institute{
Radboud University, Nijmegen, the Netherlands\\ \email{\{linus.heck,sebastian.junges\}@ru.nl} \and
Brno University of Technology, Czechia\\
\email{\{iandri,ceskam,imacak\}@fit.vut.cz}
}

\authorrunning{L. Heck et al.}
\maketitle              %
\begin{abstract}
Shielding is a prominent model-based technique to ensure safety of autonomous agents. Classical shielding aims to ensure that nothing bad ever happens and comes with strong guarantees about safety and maximal permissiveness. 
However, shielding systems for probabilistic safety, where something bad is allowed to happen with an acceptable probability, has proven to be more intricate. This paper presents a formal framework that conservatively extends classical shields to probabilistic safety.  In this framework, we (i) demonstrate the impossibility of preserving the strong guarantees on safety and permissiveness, (ii)
provide natural shields with weaker guarantees, and (iii) introduce offline and online shield constructions ensuring strong safety guarantees.
The empirical evaluation highlights the practical advantages of the new shields, as well as their computational feasibility.
\end{abstract}

\section{Introduction}
Markov decision processes (MDPs) are ubiquitous models to describe sequential decision making under uncertainty. 
In any state, a (non-deterministic) action choice yields a distribution over the successor states. 
Agents (or policies) select these actions and thus resolve the nondeterminism. 
Reinforcement learning (RL) and online planning methods generate agents, often by iteratively interacting with the environment. 
A key concern preventing adoption in various domains is the lack of clear safety guarantees about the agents. Among the various research directions that aim to make RL safe~\cite{DBLP:journals/tmlr/KrasowskiTM0WA23,gu2024review}, runtime enforcement, in particular shielding~\cite{DBLP:conf/aaai/AlshiekhBEKNT18}, has become popular: A shield uses knowledge about the underlying model to enforce that an agent's decisions are safe (within the assumed model); operationally, this is done by blocking or overruling any unsafe actions that an agent wants to take.  Shielding is policy-agnostic, in contrast to (standard) policy synthesis or methods that train a policy and a certificate~\cite{10.1145/3737447}.

Shields (and safety masking~\cite{van2021no} and supervisory control theory~\cite{hsu2023safety}) were developed for non-stochastic settings and/or absolute safety guarantees~\cite{DBLP:conf/aaai/AlshiekhBEKNT18}: \emph{A shield should guarantee that a bad state is never reached}. In many scenarios, however, absolute safety is not attainable, and therefore shields for probabilistic safety have been developed~\cite{DBLP:conf/concur/0001KJSB20}. 
A natural statement for probabilistic shielding is that \emph{a shield should guarantee that a bad state is reached with probability at most~$\lambda$}; however, standard probabilistic shields do not provide this guarantee~\cite{DBLP:journals/cacm/KonighoferBJJP25}. This paper contributes a framework for shielding with various concrete instantiations that \emph{do} provide probabilistic safety guarantees.
As such, the paper is orthogonal to the vast literature on shielding that provides shields for continuous~\cite{DBLP:conf/ijcai/YangMRR23,10068193}, partially observable~\cite{DBLP:conf/aaai/Chatterjee0PRZ17,carr2023safe,sheng2024safe}, or multi-agent settings~\cite{DBLP:conf/ifaamas/BrorholtL025,multi-agent-shielding}, that addresses the construction of shields in model-free settings~\cite{tappler2022automata,takisaka-dynamic-shielding,chakraborty2023formally}, or other concerns~\cite{rodriguez2025shieldsynthesisltlmodulo}. %

\paragraph{Classical shields: Local and minimal interference for global safety.} Classical shields are \emph{safe}: They ensure that a bad state is globally not reached, whenever the system starts in a safe state. 
Operationally, classical (post-)shields take the system state $s$ and the action $\alpha$ selected by the agent; $\alpha$ is executed if it belongs to the set of actions that the shield \emph{allows} in $s$, otherwise $\alpha$ is \emph{blocked} and \emph{replaced} by one of the allowed actions.
Classical shields allow any action where the system transitions to safe states. Indeed, this simple \emph{local} mechanism guarantees \emph{global} safety. 
Classical shields are also \emph{maximally permissive} (aka \emph{minimally interfering}): If an agent already satisfies the (qualitative!) safety specification, the shield never blocks its actions. 

\begin{figure}[t]
    \begin{minipage}[b]{0.5\linewidth}
    \begin{center}
            \includegraphics{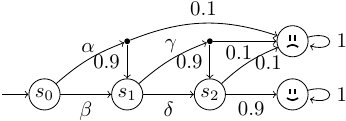}
            \vspace{-1.4em}
        \caption{MDP from \cite{DBLP:journals/cacm/KonighoferBJJP25}}
        \label{fig:mdp2}
    \end{center}
    \end{minipage}
    \begin{minipage}[b]{0.5\linewidth}
    \begin{center}
            \includegraphics{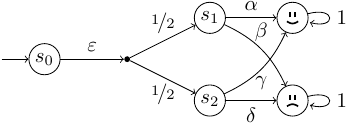}
             \vspace{-1em}
        \caption{MDP for \cref{thm:noshieldsat}}
        \label{fig:mdp}
        \label{fig:mdpnoshieldsat}
    \end{center}
    \end{minipage}
\end{figure}

\paragraph{$\delta$-shields: practical probabilistic shielding.}
Classical shields block each action that (later) leads to a bad state with any positive, potentially tiny, probability. Despite the maximal permissiveness, it may thus block most actions. Probabilistic $\delta$-shields~\cite{DBLP:conf/concur/0001KJSB20} take a practical approach to overcome this weakness: In state~$s$, they block any action where the probability to reach the bad state \emph{from}~\(s\) increases by more than factor \(\delta\). While this yields more permissive shields, the locally operating $\delta$-shields and variants thereof do not guarantee any safety. Consider \cref{fig:mdp2} and a safety specification where bad states are reached with probability ${\leq} 0.1$. A $\delta$-shield (with nontrivial $\delta$) allows $\alpha$ in $s_0$ and $\gamma$ in $s_1$; however, a policy taking both actions reaches the bad state with probability ${>}0.1$.

\paragraph{Novel framework for shields.}
Thus: What are safe \emph{and} permissive shields for probabilistic safety? 
\cref{fig:mdp2} demonstrates that such shields can be complex objects: Assume we must reach bad states with probability ${\leq}0.1$. When reaching $s_1$, it is only safe to allow $\gamma$ if previously action $\beta$ was taken: Whether an action can be allowed depends on the history, even though classical and $\delta$-shields are memoryless. Further, many policies in RL are stochastic and they can be history-dependent.  We provide a generic framework for history-dependent shields shielding arbitrary policies. The framework covers the classical and $\delta$-shields and novel shields introduced in this paper, see Tab.~\ref{tab:shieldsoverview} for an overview.

\begin{table}[t]
    \centering
    \caption{Shields in this paper. \SafetyStrong{}: strong~safety, \SafetyWeak{}: weak~safety, \PermissivenessStrong{}: strong permissiveness, \PermissivenessWeak{}: weak~perm., \SaturatedPermisiveness{}: saturated~perm.; \(^*\)converges toward \SaturatedPermisiveness}
    \label{tab:shieldsoverview}
    \fontsize{8}{8}\selectfont
    \begin{tabular}{c l c c l }
        \toprule
        & Shield & \multicolumn{2}{c}{Guarantees} & Notes \\
        &        & Safety & Perm.\ & \\
        \midrule
        \multirow{2}{*}{\rotatebox[origin=c]{90}{old}}
          & Classical \cite{DBLP:conf/aaai/AlshiekhBEKNT18} & \SafetyStrong{} & No & \PermissivenessStrong{} in qualitative case, overly conservative \\
        & \(\delta\)-shield \cite{DBLP:conf/concur/0001KJSB20} & No & No & Shields heuristically; hard to tune hyperparameter \\
        \midrule
        \multirow{5}{*}{\rotatebox[origin=c]{90}{new!}}
          & Optimistic (\cref{def:optshield}) & \SafetyWeak{} & \PermissivenessStrong{} & Extends classical shield, limited safety \\
        & Pessimistic (\cref{def:pessimisticshield}) & \SafetyStrong{} & \PermissivenessWeak{} & Extends classical shield, still overly conservative \\
        & Saturated (\cref{sec:saturated}) & \SafetyStrong{} & \SaturatedPermisiveness{} & Maximal element of the safe shields, no algorithm \\
        & Offline/Online (\cref{def:sequence}) & \SafetyStrong{} & No\(^*\) & Shields given log file and/or incrementally \\
        & Memoryless (\cref{def:memlessshield}) & \SafetyStrong{} & No & Offline, favors generalizing smaller logs \\
        \bottomrule
    \end{tabular}
    \vspace{-1.2em}
\end{table}
\paragraph{No maximally permissive safe shields.}
None of the shields for probabilistic safety are maximally permissive and safe: Using the framework, we prove with a surprisingly simple proof (\cref{thm:noshieldsat}), that no such shield exists.  Consider an MDP in \cref{fig:mdpnoshieldsat}. The safety specification is that bad states are reached with probability at most $0.5$. In particular, consider that if we reach $s_1$ (via a unique path), should we block $\beta$?
If not, then a policy that picks $\beta$ in $s_1$ and $\delta$ in $s_2$ reaches the bad state with probability one.
We then must block $\beta$ to be safe, thereby interfering with the safe policy taking action $\beta$ and $\gamma$, respectively.

\paragraph{Optimistic and pessimistic shields.}
The overarching challenge in providing safe and permissive shields is that probabilistic safety makes statements about the computation tree, whereas shields observe only one path in that tree. A~shield must therefore make assumptions about what happens elsewhere in the tree. We introduce \emph{pessimistic shields}, which assume a worst-case policy in terms of safety. They track how much the shielded policy exceeds that assumption along the observed path. Once enough safe actions are taken, it can permit unsafe actions. 
Dually, an optimistic shield takes a best-case baseline policy and tracks how much risk was taken violating the best-case assumption. Once this risk exceeds the threshold, only the safest actions are allowed. Both shields can be computed tractably and come with clear albeit weak guarantees. Empirically, pessimistic shields permit the same actions as classical shields, while optimistic shields block only a few (clearly dangerous) actions.

\paragraph{Saturated permissiveness.}
Towards less pessimistic but safe shielding, we introduce \emph{saturated permissiveness}. It is a guarantee rooted in comparing shields based on the set of policies they allow to be executed unchanged. Consider \cref{fig:mdp}: a safe shield can allow either the policy picking $\beta$ or the policy picking $\delta$. Once the shield chooses to allow one of these policies, no additional policies can be allowed without violating the safety guarantee. The shield is saturated, in the sense that we have reached a maximal element in the lattice of safe shields. There are generally many saturated shields in any MDP, all more permissive than the pessimistic shield, while providing the same safety guarantee. 

\paragraph{Offline and online shield construction.}
We develop an incremental approach that builds shields based on history-action pairs, obtained, e.g., from a log file. These safe shields become increasingly permissive and converge to saturated permissiveness. Beyond this offline construction of shields, we demonstrate an online construction that interleaves the safe, shielded execution with the incremental extension of the shields. Empirically, these shields are indeed permissive while also guaranteed to be safe. The main weakness is that constructing permissive shields can require large sets of history-action pairs. We investigate constructing shields that generalize by ignoring parts of the history: In the paper, we consider memoryless shields, which also provide a connection to the memoryless permissive policies from the literature~\cite{DBLP:journals/corr/DragerFK0U15,DBLP:conf/tacas/DavidJLMT15,DBLP:conf/tacas/Junges0DTK16}.

\paragraph{Contributions.}
In summary, this paper contributes: 
(1)~A generic framework for shielding in MDPs. Within the framework, we prove the impossibility of combining maximal permissiveness and probabilistic safety and instantiate various shields, such as the intuitive optimistic and pessimistic shields.
(2)~Saturated safe shields, which combine safety and a saturated permissiveness, a natural variant of maximal permissiveness. We pair these shields with online and offline learning routines.
(3)~Experiments for all new shields and a comparison with the baseline shields that demonstrate the relative strengths and weaknesses of these shields. In particular, the incremental online construction provides shields that are guaranteed to be safe and significantly more permissive than the alternatives. 
We prove all theorems and lemmas stated without proof in \appref{app:proofs}{A}.

\section{Preliminaries}
For any function \(f \colon A \rightarrow B\) and a set \(X \subseteq A\), we write \(f(X) \coloneq \{f(x) \mid x \in X\}\).
A \emph{distribution} over a countable set $A$ is a~function $d \colon A \rightarrow \unitinterval$ s.t.~$\sum_a d(a) {=} 1$. $\mathrm{Supp}(d) = \{a\in A \mid d(a) > 0\}$ is the support of $d$. The set $\Distr(A)$ contains all distributions over $A$. For \(a \in A\), let \(\Dirac_a \coloneq \{a \mapsto 1\}\) denote the Dirac distribution. %
Given a distribution $d$ and a set \(X\) with $X \cap \mathrm{Supp}(d) \neq \emptyset$, we define $d|_X$ using a suitable normalization constant $\alpha$ s.t.\ $d|_X(x) = 0$ if $x \not \in X$ and $d|_X(x) = \alpha \cdot d(x)$ otherwise. If $X \cap \mathrm{Supp}(d)= \emptyset$,  \(d|_X\) is the uniform distribution over \(X\).

\paragraph{Markov decision processes.}
A \emph{Markov decision process (MDP)} is a tuple $\M = \mdpT$ with a finite set $S$ of states, an initial state $\sinit \in S$, a finite (indexed) set $\Act$ of actions, and a partial transition function $\mpm \colon S \times \Act \nrightarrow \Distr(S)$.
For an MDP~$\M$, we define the \emph{available actions} in  $s \in S$ as
$\Act(s)$.
An MDP $\M$ with $|\Act(s)|=1$ for each $s \in S$ is a \emph{Markov chain (MC)} $M=\mcT$. 
 A finite path is denoted \(\tau = s_0 \alpha_1 s_1 \cdots s_t\) for \(s_i \in S, \alpha_i \in \Act(s_{i-1})\). The set of (finite) paths is denoted \(\Paths(\M)\) and $\FinPaths(\M)$. We consider stochastic policies: A \emph{policy} is a function $\sched \colon \FinPaths(\M) \rightarrow \Distr(\Act)$ with $\mathrm{Supp}(\sched(\tau)) \subseteq \Act(s)$ for all $\tau \in \FinPaths(\M)$ and with \(s\) the last state in \(\tau\). For an MDP \(\M\), \(\Pi_{\M}\) denotes the set of policies in \(\M\).
A~policy $\sched \in \Pi_{\M}$ induces the (infinite) MC~$\imc$ \cite[p.~843]{baier2008principles}.

\paragraph{Specifications and values.}
We consider a (quantitative) \emph{safety specification} given as an indefinite-horizon reachability property. For simplicity, we formalize our approach only for the safety specification given as \(\varphi = \Pr_\M(s_0 \vDash \F \bad) \leq \nu\), where $\bad \subseteq S$ denotes a set of bad states, $\nu \in \mathbb{Q}$, and $\Pr{}_\M(s \vDash \F \bad)$ denotes the probability of reaching (some state in) $\bad$ from $s$ in $\M$. We call \(\nu\) the \emph{safety threshold}.
For MDP $\M$ and every state $s \in S$, we define the \emph{minimal value} as $V_{\min}(s) \coloneqq \inf_{\sched \in \Pi_{\M}} \Pr_{\imc}(s \vDash \F \bad)$. Similarly, we define the \emph{maximal value} $V_{\max}(s) \coloneqq \sup_{\sched \in \Pi_{\M}} \Pr_{\imc}(s \vDash \F \bad)$.

\paragraph{Safe actions and policies.}
We call a policy $\pi$ safe w.r.t.\ an MDP $\mdp$ and specification $\varphi$ with threshold $\nu$ if $\Pr_{\imc}(s_0 \vDash \F \bad) \leq \nu$.
Let \(\Safe(\Pi_\M) \coloneq \{\pi \in \Pi_\M \mid \pi \text{ is safe}\}\) be the set of safe policies. 
For the rest of the paper, we assume that \(V_{\min}(s_0) \leq \nu\), i.e., at least one safe policy exists. A policy \(\pi\) is \emph{safety-optimal} if \ $\Pr_{\imc}(s \vDash \F \bad) = V_{\min}(s)$ for all \(s \in S\). Safety-optimal policies always exist. The set of \emph{safe actions} $\SafeAct(s)$ for state $s$ are those actions that do not change \(V_{\min}\):
\(
    \SafeAct(s) \coloneq \{\alpha \in \Act(s) \mid \sum_{s' \in S} \mathcal{P}(s, \alpha, s') \cdot V_{\min}(s') = V_{\min}(s) \}
\).
For \(\opt \in \{\min, \max\}\), we define
\(
Q_{\opt}(s, d) \coloneq \sum_{\substack{\alpha \in \Act}} d(\alpha) \cdot \sum_{s' \in S} \mathcal{P}(s, \alpha, s') \cdot V_{\opt}(s').
\)

\section{Shields for Probabilistic Safety}

We provide a general notion of shields. First, histories are annotated paths:

\begin{definition}[History]
    A \emph{history} is a sequence
    \(
        h \coloneq s_0 \, d_1 \, \alpha_1 \, s_1 \, d_2 \, \alpha_2 \, \cdots \, s_t
    \) such that \(s_0 \alpha_1 s_1 \alpha_2 \cdots s_t\) is a path, and for all \(0 \leq k < t\): \(d_{k+1} \in \Distr(\Act(s_k))\). 
\end{definition}%
We write \(\Hist(\M)\) as the set of histories.
Histories extend paths with distributions over actions \(d_i \in \Distr(\Act)\), which we call \emph{choices}. Intuitively, these represent a choice made by the policy of some agent at each step. Including these distributions becomes relevant when considering choices made by stochastic policies. On the other hand, for deterministic policies, the additional information in a history can be derived directly from the path. 
We define some operations on histories:
\begin{definition}[History Operations]
Given a history \(h= s_0 d_1 \alpha_1 s_1 \cdots s_t\),
\begin{compactitem}
    \item \(\last(h) \coloneq s_t\)  is the last state of the history,
    \item \(\hpath(h) \coloneq s_0 \alpha_1 s_1 \cdots s_t\) is the path of the history,
    \item \(\choice(h, k) \coloneq d_{k}\) is the \(k\)-th history choice for \(1 \leq k \leq t\),
    \item \(h|_k \coloneq s_0 \, d_1 \, \alpha_1 \, \cdots s_k\) is the history prefix up to step \(k\) for \(0 \leq k \leq t\).
\end{compactitem}
\end{definition}
History \(h\) is \emph{consistent} with policy \(\pi\) if for all \(k\): \(\choice(h,k) = \pi(\hpath(h|_{k-1}))\).

\subsection{Shields}
An agent interacts with the environment and selects actions according to a policy. \emph{Shields} can intervene in this loop, mapping the agent's choice to a potentially different choice that will be executed in the environment. Formally, they map a history and a choice to another choice. 
\begin{definition}[Shield]
\label{def:shield}
    A shield is a function \(\shield: \Hist(\M) \times \Distr(\Act) \rightarrow \Distr(\Act)\).
\end{definition}
Crucially, shields are independent of the policy played. The shields in \cite{DBLP:conf/aaai/AlshiekhBEKNT18,DBLP:journals/tmlr/KrasowskiTM0WA23,DBLP:conf/concur/0001KJSB20,DBLP:conf/isola/KonighoferL0B20,DBLP:conf/ijcai/YangMRR23} satisfy \cref{def:shield}. Our definition is a conservative stochastic extension of both shielding in deterministic systems and shielding deterministic policies.
We give examples of shields that, intuitively, aim to minimize the probability of eventually reaching \(\bad\) at each state \(s\).

\begin{example}
\label{ex:nonprobabilistic}
The shield \shieldSafe (a conservative stochastic extension of \cite{DBLP:conf/aaai/AlshiekhBEKNT18}) is given by
\(
    \shieldSafe(h, d) \coloneq d|_{\SafeAct(\last(h))}.
\)
This shield maps to choices over safe~actions.

\end{example}
\begin{example}
\label{ex:delta}
Given some \(\delta \in [0,1]\), a (multiplicative) $\delta$-shield $\shield_\delta$ \cite{DBLP:conf/concur/0001KJSB20} is given by
\begin{align*}
    \shieldDelta(h, d) &=
    \begin{cases}
        d & \text{if } Q_{\min}(\last(h), d) \cdot \delta \leq V_{\min}(s), \\
        d|_{\SafeAct(\last(h))} &  \text{otherwise.}
    \end{cases}
\end{align*}
    In the experiments, we also use the following additive \(\delta\)-shield:
\begin{align*}
    \shieldDeltaPlus(h, d) &=
    \begin{cases}
        d & \text{if } Q_{\min}(\last(h), d) - \delta \leq V_{\min}(s), \\
    d|_{\SafeAct(\last(h))} & \text{otherwise.}
    \end{cases}
\end{align*}
\end{example}

\begin{example}
    \label{ex:firstshield}
    The following shield for the MDP in \cref{fig:mdp2} is history-dependent:
    \[
        \shield(h, d) \coloneq
            \Dirac_\delta \text{ if } \last(h) = s_1 \text{ and } \choice(h, 1)(\alpha) > 0, \; \; \shield(h, d) \coloneq
            d \text{ otherwise. } 
    \]%
\end{example}

\paragraph{Pre-Shields.}
Our shields are \emph{post-shields}, where the policy's choice is transformed. 
In \emph{pre-shielding}~\cite{DBLP:journals/fmsd/KonighoferABHKT17}, the shield instead presents a set of possible choices to the policy as an input. Formally, pre-shields are functions \(\shield_{\textit{pre}}: \Hist(\M) \rightarrow 2^{\Distr(\Act)}\). Given a post-shield \(\shield_{\textit{post}}\), one can define the pre-shield:
\[
    \shield_{\textit{pre}}(h) \coloneq \{d \in \Distr(\Act) \mid \shield_{\textit{post}}(h,d) = d \}.
\]
To create a post-shield from a pre-shield, one needs to replace the disallowed choices with a safe choice.
We pick the projection to the safe actions~\cite{DBLP:journals/tmlr/KrasowskiTM0WA23}.
Thus:%
\[
    \shield_{\textit{post}}(h, d) \coloneq
    d \text{ if } d \in \shield_{\textit{pre}}(h), 
    \text{ and } d|_{\SafeAct(\last(h))} \text{ otherwise.}
\]

\paragraph{Permissive Controllers.}
Pre-shields with additional constraints can be seen as permissive controllers as in~\cite{DBLP:journals/corr/DragerFK0U15,DBLP:conf/tacas/DavidJLMT15,DBLP:conf/tacas/Junges0DTK16}. We discuss a connection to such (memoryless) permissive controllers in \cref{sec:nomem}.

\subsection{Shields as Policy Transformers}
Shields modify policy choices, thus they naturally induce \emph{policy transformers}:%
\begin{definition}[Policy Transformer of a Shield]
    \label{def:policytransformer}
    Given a shield \(\shield\), its policy transformer is the function \(\T_{\shield} : \Pi_{\M} \rightarrow \Pi_{\M}\) defined recursively by
    \begin{align*}
        \T_{\shield}(\pi)(s_0 \alpha_1 s_1 \cdots \alpha_t s_t) \coloneq\;& \shield(h_t, \pi(s_0 \alpha_1 s_1 \cdots \alpha_t s_t)),\\
        \text{where } h_t \coloneq\;& s_0 \; \T_{\shield}(\pi)(s_0) \; \alpha_1 \; s_1 \; \T_{\shield}(\pi)(s_0\alpha_1s_1) \; \alpha_2 \; \cdots \; \alpha_t \; s_t.
    \end{align*}
\end{definition}

\begin{example}
    Reconsider the shield \(\shield\) from \cref{ex:firstshield}. Given a policy \(\pi \in \Pi_\M\), if \(\pi(s_0)(\alpha)=0\), we have \(\T_{\shield}(\pi) = \pi\). Otherwise, we have \(\mathcal{T}_{\shield}(\pi) = \pi'\), where \(\pi'(s_0 \alpha s_1) = \pi'(s_0 \beta s_1) = \Dirac_\delta\) and \(\pi'(\tau) = \pi(\tau)\) for all other finite paths \(\tau\).
\end{example}
At each step, the policy transformer uses the previous shielding decisions to construct a history to give to the shield.
\begin{definition}[Allow]
    A shield \(\shield\) \emph{allows} a policy \(\pi\) if \(\T_{\shield}(\pi) = \pi\). We write \(\Allow(\shield) = \{\pi \in \Pi_\M \mid \T_{\shield}(\pi) = \pi\}\) as the set of allowed policies of \(\shield\).
\end{definition}
Note that \(\Allow(\shield) \subseteq \T_{\shield}(\Pi_\M) \coloneq \{\T_{\shield}(\pi) \mid \pi \in \Pi_\M\}\), i.e., the set of allowed policies is a subset of the policy transformer's image.
Allowed sets exhibit relevant properties:
Intuitively, if a shield allows two policies, it must act like either policy on all histories. We call this \emph{history mixing}.
\begin{definition}[History Mixing]
    Given a set of policies \(P\subseteq\Pi_\M\), the \emph{history mixing} is the set \(\mix(P)\subseteq\Pi_\M\) such that \(\pi \in \mix(P)\) if:
    \[
    \forall h \in \Hist(\M). \; h \text{ is consistent with } \pi \implies (\exists \pi' \in P. \; h \text{ is consistent with } \pi').
    \]
\end{definition}

\begin{example}
    Consider the MDP in \cref{fig:mdp} and suppose that \(P\) contains policy \(\pi_1\) with \(\pi_1(s_0 \varepsilon s_1) = \Dirac_\beta\) and another policy \(\pi_2\) with \(\pi_2(s_0 \varepsilon s_2) = \Dirac_\delta\). Then \(\mix(P)\) contains a policy \(\pi_3\) with \(\pi_3(s_0 \varepsilon s_1) = \Dirac_\beta\) and \(\pi_3(s_0 \varepsilon s_2) = \Dirac_\delta\).
\end{example}%
\begin{restatable}{lemma}{propmixclosed}%
\label{prop:mixclosed}%
   Given shield~\(\shield\) and policies \(P \subseteq \Allow(\shield)\). Then \(\mix(P) \subseteq \Allow(\shield)\).
\end{restatable}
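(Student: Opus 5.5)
We fix \(\pi \in \mix(P)\) and show \(\T_{\shield}(\pi) = \pi\), i.e., \(\T_{\shield}(\pi)(\tau) = \pi(\tau)\) for every finite path \(\tau\). The argument is an induction on the length \(t\) of \(\tau = s_0 \alpha_1 s_1 \cdots \alpha_t s_t\). The inductive invariant is that \(\T_{\shield}(\pi)\) and \(\pi\) agree on all proper prefixes \(\tau|_k\), \(k<t\). Under this invariant, the history \(h_t\) built in \cref{def:policytransformer} for \(\T_{\shield}(\pi)\) at \(\tau\) is
\[
h_t = s_0\, \pi(s_0)\, \alpha_1\, s_1\, \pi(s_0\alpha_1 s_1)\, \alpha_2 \cdots \alpha_t\, s_t .
\]
This history is consistent with \(\pi\) by definition of consistency (its \(k\)-th choice is \(\pi(\hpath(h_t|_{k-1}))\)).

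Since \(\pi \in \mix(P)\) and \(h_t\) is consistent with \(\pi\), there is some \(\pi' \in P\) with \(h_t\) consistent with \(\pi'\). Hence \(\pi'\) and \(\pi\) agree on every proper prefix \(\tau|_k\) with \(k<t\).

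We now compare with the transformer of \(\pi'\). Because \(\pi' \in \Allow(\shield)\), we have \(\T_{\shield}(\pi') = \pi'\). Therefore the history that \(\T_{\shield}(\pi')\) constructs at \(\tau\) consists of the choices \(\T_{\shield}(\pi')(\tau|_k) = \pi'(\tau|_k) = \pi(\tau|_k)\), so it is exactly \(h_t\). Allowedness then gives
\[
\shield(h_t, \pi'(\tau)) = \T_{\shield}(\pi')(\tau) = \pi'(\tau).
\]

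This is the main obstacle: we need \(\shield(h_t, \pi(\tau)) = \pi(\tau)\), but consistency of \(h_t\) with \(\pi'\) only constrains \(\pi'\) on proper prefixes, not at \(\tau\) itself. So \(\pi'(\tau)\) may differ from \(\pi(\tau)\).

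The fix is to apply the mixing hypothesis to one-step extended histories. For each action \(\alpha \in \Act(s_t)\) and successor \(s'\) with \(\mathcal{P}(s_t,\alpha,s')>0\), the history \(h_t\,\pi(\tau)\,\alpha\,s'\) is consistent with \(\pi\). Mixing then yields \(\pi'' \in P\) consistent with it, so in particular \(\pi''(\tau) = \pi(\tau)\) and \(\pi''\) agrees with \(\pi\) on all earlier prefixes. We run the argument above with \(\pi''\) in place of \(\pi'\). The history \(\T_{\shield}(\pi'')\) builds at \(\tau\) is again \(h_t\), so
\[
\shield(h_t, \pi(\tau)) = \shield(h_t, \pi''(\tau)) = \pi''(\tau) = \pi(\tau).
\]
This gives \(\T_{\shield}(\pi)(\tau) = \pi(\tau)\), which closes the induction step.

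The base case \(t=0\) is the same argument with \(h_0 = s_0\), using an extension \(s_0\,\pi(s_0)\,\alpha\,s'\). The edge case where \(s_t\) has no available actions is vacuous, since then no policy makes a choice there. Finally, if paths with zero-probability transitions are allowed in \(\FinPaths\), the same extension works with any state \(s'\), because consistency only concerns choices, not transition probabilities.
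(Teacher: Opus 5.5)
Your proof is correct and uses the same key idea as the paper. You extend the \(\pi\)-consistent history by one step, so that \(\mix\) yields some \(\pi'' \in P\) that agrees with \(\pi\) at \(\tau\) itself, and then \(\T_{\shield}(\pi'') = \pi''\) forces the shield to leave \(\pi(\tau)\) unchanged. The paper phrases this as a contradiction at a history where the shield interferes, while your induction on path length makes explicit something the paper leaves implicit: that the history built by \(\T_{\shield}(\pi)\) coincides with the \(\pi\)-consistent one.
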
%
\cref{prop:mixclosed} implies that the allow sets of shields are \(\mix\)-closed.
Shields are \emph{not} uniquely characterised by their allow sets, i.e., two different shields may have the same allow set. In what follows, it is helpful to ensure that an allow set uniquely identifies a shield. To this end, we restrict ourselves to \emph{canonical} shields.
\begin{definition}[Canonical Shield]
\label{def:canonical}
A shield \(\shield\) is \emph{canonical} if (1) for all \(h \in \Hist(\M), d \in \Distr(\Act)\): \(\shield(h, d) \in \{d, d|_{\SafeAct(\last(h))}\}\), and (2) for all \(h\) not consistent with a policy in \(\Allow(\shield)\), \(\shield(h,d)=d|_{\SafeAct(\last(h))}\).
\end{definition}
Thus, (1) if a canonical shield transforms a choice, it uses action projection to the safe actions. Moreover, (2) only reachable histories under the allow set are relevant. The canonical shields have a one-to-one correspondence with sets of policies that are closed under allowing the safe actions. 

\begin{restatable}{theorem}{lemmaexistsshield}%
    \label{thm:existsshield}%
    Let \(P\) be a mix-closed set of policies 
    such that for any history \(h\) that is consistent with some policy in \(P\) and any \(d \in \Delta(\SafeAct(\last(h)))\), there exists a policy \(\pi \in P\) such that \(h\) is consistent with \(\pi\) and \(\pi(\hpath(h)) = d\). There is a unique canonical shield \(\shield\) with \(\Allow(\shield) = P\).
\end{restatable}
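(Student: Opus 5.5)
We prove existence by an explicit construction and uniqueness by showing that a canonical shield is determined by its allow set. Call a history \(h\) \emph{\(P\)-reachable} if it is consistent with some policy in \(P\). We define the candidate shield by
\[
\shield_P(h,d) \coloneq \begin{cases} d & \text{if } h \text{ is } P\text{-reachable and } \exists \pi\in P \text{ consistent with } h \text{ with } \pi(\hpath(h)) = d,\\ d|_{\SafeAct(\last(h))} & \text{otherwise.}\end{cases}
\]
Condition (1) of \cref{def:canonical} holds by construction. Condition (2) will follow once we establish \(\Allow(\shield_P)=P\): if \(h\) is not consistent with any policy of \(P = \Allow(\shield_P)\), the first case cannot fire.

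\textbf{Step 1: \(P\subseteq\Allow(\shield_P)\).} Fix \(\pi\in P\). We show by induction on path length that \(\T_{\shield_P}(\pi)\) agrees with \(\pi\) on all paths \(s_0\alpha_1\cdots s_t\). Assuming agreement on all shorter prefixes, the history \(h_t\) built by \(\T_{\shield_P}(\pi)\) in \cref{def:policytransformer} has choices \(\pi(\text{prefix})\), so \(h_t\) is consistent with \(\pi\). Hence \(h_t\) is \(P\)-reachable, \(\pi\) itself witnesses the first case, and the shield returns \(\pi(\hpath(h_t))\).

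\textbf{Step 2: \(\Allow(\shield_P)\subseteq P\).} This is the main step and is where both hypotheses are used. Let \(\T_{\shield_P}(\pi)=\pi\). Since \(P\) is mix-closed, it suffices to show \(\pi\in\mix(P)\), i.e., every history consistent with \(\pi\) is consistent with some policy in \(P\). We argue by induction on the length of \(h\). Suppose \(h = h'\,d\,\alpha\,s\) is consistent with \(\pi\). Then \(h'\) is consistent with \(\pi\), so by induction it is \(P\)-reachable. Since \(\pi\) is a fixpoint, \(h'\) is exactly the history the transformer feeds the shield, so \(\shield_P(h',d)=d\) with \(d=\pi(\hpath(h'))\). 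There are two ways this can happen.
\begin{itemize}
\item The first case fired. Then some \(\pi'\in P\) is consistent with \(h'\) and chooses \(d\) there, so \(h\) is consistent with \(\pi'\).
\item The second case fired and happens to return \(d\), i.e., \(d = d|_{\SafeAct}\). One must check that this forces \(d\in\Delta(\SafeAct(\last(h')))\). If \(\mathrm{Supp}(d)\subseteq\SafeAct\), this is immediate. If the support of \(d\) misses \(\SafeAct\) entirely, then \(d|_{\SafeAct}\) is uniform over \(\SafeAct\) and differs from \(d\), a contradiction. In the mixed case, normalization changes \(d\), again a contradiction. So \(d\in\Delta(\SafeAct(\last(h')))\), and the richness hypothesis on \(P\) gives some \(\pi'\in P\) consistent with \(h'\) with \(\pi'(\hpath(h'))=d\). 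Hence \(h\) is consistent with \(\pi'\).
\end{itemize}
The base case is the trivial history \(s_0\), which is consistent with every policy, so it is \(P\)-reachable provided \(P\neq\emptyset\). This nonemptiness needs a brief remark; if \(P=\emptyset\), the empty set's mix is empty and the construction is degenerate, so we assume or derive \(P \neq \emptyset\).

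\textbf{Uniqueness.} Let \(\shield\) be any canonical shield with \(\Allow(\shield)=P\), and consider an arbitrary pair \((h,d)\).
\begin{itemize}
\item If \(h\) is not \(P\)-reachable, condition (2) forces \(\shield(h,d)=d|_{\SafeAct(\last(h))}=\shield_P(h,d)\).
\item If \(h\) is consistent with some \(\pi\in P\) and \(\pi(\hpath(h)) = d\), then since \(\pi\) is allowed, \(h\) is the transformer's history for \(\pi\) (by the induction from Step 1), so \(\shield(h,d)=d\).
\item If \(h\) is \(P\)-reachable but no such \(\pi\) exists and \(d\neq d|_{\SafeAct}\), suppose \(\shield(h,d)=d\). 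Take \(\pi\in P\) consistent with \(h\), and modify it to choose \(d\) at \(\hpath(h)\) and to agree with \(\pi\) elsewhere, with safe choices after this point. This modified policy is a fixpoint of \(\T_{\shield}\) on all relevant histories, using canonicity (1) and the richness property to handle continuations, so it lies in \(\Allow(\shield)=P\), contradicting the case assumption. Hence \(\shield(h,d)=d|_{\SafeAct}\), which by condition (1) is the only other option.
\end{itemize}
The continuation argument in the last bullet is delicate: to make it work, choose continuation choices that are themselves allowed by \(\shield\) on the extended histories, which is possible because \(\shield\) always yields a fixpoint-compatible choice via the projection.
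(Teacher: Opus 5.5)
Your proposal is correct and follows the paper's route: the same explicit shield $\shield_P$, the inclusion $P\subseteq\Allow(\shield_P)$ by unfolding the transformer, the converse via mix-closure plus the richness hypothesis, and uniqueness via canonicity. It is in fact more careful than the paper: the paper applies the richness hypothesis to $h$ itself, which presupposes that $h$ is already $P$-reachable, so your explicit induction in Step~2 is genuinely needed; your modified-policy argument fills in the paper's one-line uniqueness claim (there canonicity~(1) alone handles the safe continuations, so richness is not needed); and you are right that $P\neq\emptyset$ must be assumed rather than derived, since $P=\emptyset$ satisfies both hypotheses vacuously while every canonical shield allows every policy whose choices are supported on safe actions.
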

\cref{thm:existsshield} says that a canonical shield can be thought of as a \(\mix\)-closed set of allowed policies. This lifts the local shielding mechanism to the exact set of allowed policies, which will be useful when investigating the properties of shields.

\section{Shields with Strong and Weak Guarantees}
\label{sec:strongweak}
We formalize safety and permissiveness guarantees, show that satisfying both is not possible in the probabilistic case, and introduce weaker guarantees.

\subsection{Strong Guarantees}
Intuitively, shields should ensure safety while allowing as many policies as possible. Thus, we consider safety and permissiveness guarantees. %

\begin{mdframed}[style=MyFrame,nobreak=true]
Let $\M$ be an MDP and $\shield$ a shield.\\
\textbf{(S+): Strong Safety.} \(\T_{\shield}(\Pi_\M) \subseteq \Safe(\Pi_\M)\). \\
\textbf{(P+): Strong Permissiveness.} \(\Safe(\Pi_\M) \subseteq \Allow(\shield)\).
\end{mdframed}
We call a shield \emph{safe} if it satisfies \SafetyStrong and \emph{permissive} if it satisfies \PermissivenessStrong. These guarantees are introduced in \cite{DBLP:conf/aaai/AlshiekhBEKNT18} as correctness and minimum interference.
\SafetyStrong{} ensures that the shield transforms every policy into a safe policy.
\PermissivenessStrong{} ensures that every safe policy is allowed by the shield.

\begin{theorem}[\!\!\cite{DBLP:conf/aaai/AlshiekhBEKNT18}]
    For safety threshold \(\nu = 0\) (i.e., for a qualitative specification), the shield \shieldSafe (Ex.~\ref{ex:nonprobabilistic}) satisfies \SafetyStrong~and \PermissivenessStrong. 
\end{theorem}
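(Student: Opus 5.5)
With $\nu = 0$, a policy $\pi$ is safe iff $\Pr_{\imc}(s_0 \vDash \F \bad) = 0$. Since we assume $V_{\min}(s_0) \le \nu = 0$, we have $V_{\min}(s_0)=0$. I would first characterise safe policies via the zero-value states $Z \coloneq \{s \mid V_{\min}(s)=0\}$. For $s \in Z$, an action $\alpha$ lies in $\SafeAct(s)$ iff $\sum_{s'}\mpm(s,\alpha,s')V_{\min}(s')=0$, i.e.\ iff $\mathrm{Supp}(\mpm(s,\alpha)) \subseteq Z$. The key lemma is then: $\pi$ is safe iff every path reachable with positive probability in $\imc$ stays in $Z$ and, at each such path, puts mass only on $\SafeAct$ of the last state.

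For the ``if'' direction, every reachable state lies in $Z$, and $\bad \cap Z = \emptyset$ because $V_{\min}=1$ on $\bad$. Hence $\bad$ is never reached.

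For the ``only if'' direction, suppose some reachable path $\tau$ puts positive mass on an action $\alpha$ that is not in $\SafeAct(\last(\tau))$. Either the last state is already outside $Z$, or, if it is in $Z$, playing $\alpha$ leads with positive probability to some $s' \notin Z$. So some state $s'$ with $V_{\min}(s')>0$ is reached with positive probability. From $s'$, every policy (in particular the continuation of $\pi$) reaches $\bad$ with probability at least $V_{\min}(s')>0$. Multiplying the positive probability of reaching $s'$ by this bound gives $\Pr>0$, a contradiction.

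For \SafetyStrong, take any $\pi$ and let $\pi'=\T_{\shieldSafe}(\pi)$. By construction, $\pi'(\tau) = \pi(\tau)|_{\SafeAct(\last(\tau))}$ is supported on $\SafeAct(\last(\tau))$ for every $\tau$; the uniform fallback also lies in $\SafeAct$. The set $\SafeAct(s)$ is nonempty because a safety-optimal action exists. Then, by induction on path length from $s_0 \in Z$, every state reachable under $\pi'$ is in $Z$. The ``if'' direction of the lemma now gives that $\pi'$ is safe.

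For \PermissivenessStrong, take a safe $\pi$ and show $\T_{\shieldSafe}(\pi)=\pi$ by induction on path length. Since $\T$ is defined recursively, we assume the transformer agrees with $\pi$ on shorter prefixes. On paths reachable under $\pi$, the ``only if'' direction gives $\mathrm{Supp}(\pi(\tau))\subseteq\SafeAct$, so the restriction $d|_{\SafeAct}$ equals $d$.

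The main obstacle is unreachable paths. There, $\pi$ may choose unsafe actions without harming safety, yet $\shieldSafe$ would alter them, so $\T_{\shieldSafe}(\pi)\neq\pi$ literally. I would try to resolve this by identifying policies that agree on all positive-probability paths, since they induce the same MC; alternatively one could restrict $\Safe(\Pi_\M)$ accordingly. Given that the theorem is cited from prior work, I expect the intended reading to be up to this identification. I would state that convention explicitly and then conclude.
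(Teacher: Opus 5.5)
Your argument for \SafetyStrong{} and your characterisation of safe policies through the zero set $Z$ are correct, and for \SafetyStrong{} no convention is needed. The obstacle you raise for \PermissivenessStrong{} is more than a technicality: under the paper's literal definitions, that half of the statement is false. Here is a counterexample. Let $s_0$ have actions $\alpha,\beta$ leading with probability $1$ to $s_1$ and $s_2$, respectively. Let $s_1$ be an absorbing non-bad state, and let $s_2$ have an action $\gamma$ to $s_1$ and an action $\delta$ to an absorbing bad state. Then $V_{\min}=0$ on $s_0,s_1,s_2$ and $\SafeAct(s_2)=\{\gamma\}$. Take $\pi(s_0)=\Dirac_\alpha$ and $\pi(s_0\beta s_2)=\Dirac_\delta$. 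This $\pi$ is safe for $\nu=0$. However, by the uniform fallback, $\T_{\shieldSafe}(\pi)(s_0\beta s_2)=\shieldSafe(s_0\,\Dirac_\alpha\,\beta\,s_2,\Dirac_\delta)=\Dirac_\delta|_{\{\gamma\}}=\Dirac_\gamma\neq\pi(s_0\beta s_2)$. Hence $\pi\in\Safe(\Pi_\M)\setminus\Allow(\shieldSafe)$.

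So the identification you suggest must be written into the statement of \PermissivenessStrong{}, not offered as something you would \emph{try}. That identification treats policies as equal when they agree on all paths of positive probability, or equivalently when they induce the same Markov chain. With it, your proof is complete. The induction on path length in that part is also unnecessary. Since $\shieldSafe$ reads only $\last(h)$, you directly have $\T_{\shieldSafe}(\pi)(\tau)=\pi(\tau)|_{\SafeAct(\last(\tau))}$ for every $\tau$.

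The paper gives no proof of this theorem; it imports it from \cite{DBLP:conf/aaai/AlshiekhBEKNT18}, where shields act on executed traces, so your reading is the faithful one. The only internal route in the paper combines \cref{lemma:shieldorder} (the claim $\shieldOpt=\shieldPess=\shieldSafe$ at $\nu=0$) with \cref{thm:optimistic}. That route has the same blind spot you found. With $b_{\min}=0$, the quantity $\risk(h,d)$ weights the current step by $\Pr(h)$, which is $0$ on the history $s_0\,\Dirac_\alpha\,\beta\,s_2$. So $\shieldOpt$ lets $\Dirac_\delta$ through there and differs from $\shieldSafe$. In fact, $\shieldOpt$ at $\nu=0$ satisfies \SafetyStrong{} and \PermissivenessStrong{} literally. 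It blocks exactly the unsafe choices on non-null histories. This gives an alternative repair if you would rather change the shield than the notion of allowing.
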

\begin{restatable}{theorem}{noshieldsat}
    \label{thm:noshieldsat}
    For a safety threshold \(0 < \nu < 1\), there is an (acyclic, 5-state) MDP such that no shield satisfies \SafetyStrong~and \PermissivenessStrong.
\end{restatable}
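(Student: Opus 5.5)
We prove the statement by exhibiting the MDP of \cref{fig:mdpnoshieldsat}, suitably parameterised by \(\nu\). Take states \(s_0, s_1, s_2\), a bad state \(b\) and a safe sink \(g\). In \(s_0\) there is a single action \(\varepsilon\) that moves to \(s_1\) with probability \(p\) and to \(s_2\) with probability \(1-p\). In \(s_1\), action \(\beta\) leads to \(b\) and action \(\gamma\) leads to \(g\). In \(s_2\), action \(\delta\) leads to \(b\) and a second action (call it \(\eta\)) leads to \(g\). Both \(b\) and \(g\) are absorbing. This gives an acyclic (up to self-loops at the sinks) five-state MDP. Choose \(p\) with \(\nu/2 \le p \le \nu\) and \(1-p \le \nu\)? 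This is not possible for all \(\nu\), so instead we require only that \(p \le \nu\), \(1-p \le \nu\) when \(\nu \ge 1/2\).

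The general requirement is to pick \(p\) such that each single risky action is individually affordable, but both together are not. Concretely, the conditions are \(p \le \nu\), \(1-p \le \nu\), and \(1 > \nu\); the last holds by assumption. If \(\nu < 1/2\), the first two conditions cannot both hold, so we rescale the branches instead. Let \(s_0\) go to \(s_1\) with probability \(\nu\), to \(s_2\) with probability \(\nu\), and to \(g\) with the remaining probability \(1-2\nu\). This requires \(\nu \le 1/2\); for \(\nu > 1/2\), use \(p=1/2\) and drop the \(s_0\to g\) branch. In both cases the state count stays at five. Then:
\begin{itemize}
\item The policy \(\pi_1\) that picks \(\beta\) at \(s_1\) and \(\eta\) at \(s_2\) is safe, since it reaches \(b\) with probability \(\min(\nu,1/2)\le\nu\).
\item Symmetrically, the policy \(\pi_2\) that picks \(\gamma\) at \(s_1\) and \(\delta\) at \(s_2\) is safe.
\item The policy \(\pi_3\) that picks \(\beta\) and \(\delta\) reaches \(b\) with probability \(2\nu\) in the first case, or probability \(1\) in the second; in both cases this exceeds \(\nu\), since \(\nu>0\) and \(\nu<1\).
\end{itemize}

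Now suppose, for contradiction, that \(\shield\) satisfies \PermissivenessStrong. Then \(\pi_1,\pi_2 \in \Allow(\shield)\). The policy \(\pi_3\) lies in \(\mix(\{\pi_1,\pi_2\})\). To see this, note that every history consistent with \(\pi_3\) either is a prefix up to \(s_0\) or \(s_1\) or \(s_2\), or passes through \(s_1\) (where it agrees with \(\pi_1\)) or through \(s_2\) (where it agrees with \(\pi_2\)). The choices after the sinks can be fixed identically for all three policies, and the choice at \(s_0\) is the Dirac distribution on \(\varepsilon\) for every policy. By \cref{prop:mixclosed}, \(\pi_3 \in \Allow(\shield)\), so \(\T_{\shield}(\pi_3)=\pi_3\), which is unsafe. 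This contradicts \SafetyStrong.

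The main point needing care is the verification that \(\pi_3\in\mix(\{\pi_1,\pi_2\})\). The consistency condition compares the choice distributions at every prefix, so the three policies must be defined to coincide on all paths other than those ending in \(s_1\) or \(s_2\). Since the paths to \(s_1\) and \(s_2\) are unique and disjoint, this consistency check is immediate. A further check is that \(\nu\) may be irrational or that the probabilities should be rational. The paper has \(\nu\in\mathbb{Q}\), so the transition probabilities \(\nu\), \(1-2\nu\), \(1/2\) are all rational and hence admissible.
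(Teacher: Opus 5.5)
Your proof is correct, but the route differs from the paper's. The paper works on the same MDP (\cref{fig:mdpnoshieldsat}) with $\nu = 1/2$ and does a direct case split on what the shield does to the unsafe policy that is risky at both $s_1$ and $s_2$. If the shield alters the choice at $s_0\varepsilon s_1$, it alters the identical choice for the safe policy that is risky only at $s_1$. Otherwise \SafetyStrong forces it to alter the choice at $s_0\varepsilon s_2$, which blocks the safe policy that is risky only at $s_2$. You instead show $\pi_3 \in \mix(\{\pi_1,\pi_2\})$ and invoke \cref{prop:mixclosed}, so \PermissivenessStrong puts $\pi_3$ into $\Allow(\shield) \subseteq \T_{\shield}(\Pi_\M)$. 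The underlying locality insight is the same, but your version is more structural. It yields a reusable criterion: whenever $\mix(\Safe(\Pi_\M)) \not\subseteq \Safe(\Pi_\M)$, no shield satisfies both \SafetyStrong and \PermissivenessStrong. The paper's case analysis, by contrast, is self-contained and in effect inlines the one instance of mix-closure it needs.

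Your treatment of general $\nu$ also differs, and here it is tighter. The paper reduces to $\nu = 1/2$ by prefixing a gadget: $s_0$ moves with probability $x = \min\{\nu/2,(1-\nu)/2\}$ to a branching state and otherwise directly to the sinks. That costs a sixth state. Your parametrisation sends $s_0$ to each of $s_1, s_2$ with probability $\min\{\nu,1/2\}$ and the remainder to $g$. It stays at five states for every $\nu \in (0,1)$, matching the statement literally.

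For the write-up, delete the abandoned attempts in your first paragraph and state this single parametrisation up front. In the mix check, it is cleanest to let the three policies differ only on the two paths $s_0\varepsilon s_1$ and $s_0\varepsilon s_2$, since no history has both as prefixes.
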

\begin{proofsketch}
    Consider the MDP \(\M\) in \cref{fig:mdp}. For conciseness, we assume \(\nu=0.5\). A full proof is given in \appref{app:strongweak}{A.2}.
    Suppose the shield \(\shield\) satisfies \SafetyStrong and \PermissivenessStrong. 
    Consider the unsafe policy \(\pi=\{s_0 \mapsto \Dirac_\varepsilon, s_0 \varepsilon s_1\mapsto \Dirac_\beta, s_0\varepsilon s_2 \mapsto \Dirac_\delta\}\).
    We have two cases:
    (1) Suppose that $\T_{\shield}(\pi)(s_0 \varepsilon s_1) \neq \pi(s_0\varepsilon s_1)$. Then, for safe $\pi' = \{s_0 \mapsto \Dirac_\varepsilon, s_0 \varepsilon s_1 \mapsto \Dirac_\beta, s_0 \varepsilon s_2 \mapsto \Dirac_\gamma\}$, we have $\T_{\shield}(\pi') \neq \pi'$, meaning the shield violates \PermissivenessStrong.
    (2) Suppose that $\T_{\shield}(\pi)(s_0 \varepsilon s_1) = \pi(s_0 \varepsilon s_1) = \Dirac_\beta$.
    Then it must hold that $\T_{\shield}(\pi)(s_0 \varepsilon s_2) \neq \Dirac_\delta$ since \(\shield\) satisfies \SafetyStrong.
    However, for the safe policy $\pi' = \{s_0 \mapsto \Dirac_\varepsilon, s_0 \varepsilon s_1 \mapsto \Dirac_\alpha, s_0 \varepsilon s_2 \mapsto \Dirac_\delta\}$, we have $\T_{\shield}(\pi') \neq \pi'$, i.e., the shield violates \PermissivenessStrong.
\end{proofsketch}
Intuitively, the behavior of a shield satisfying \SafetyStrong and \PermissivenessStrong depends on the policy's decision on paths that were never visited in the current execution, as these guarantees take the entire computation tree into account.

\subsection{Optimistic and Pessimistic Shields}

Given the impossibility of combining the strong guarantees in the probabilistic setting, we describe two natural shields, each combining one strong guarantee on safety or permissiveness with a weak guarantee on permissiveness or safety.

We construct an \emph{optimistic shield} and a \emph{pessimistic shield}.
Intuitively, both shields assume a bound on the safety of the alternative paths that are not part of the current execution.
The difference is that the optimistic shield acts as if it had taken the safest possible actions in those paths, whereas the pessimistic shield acts as if it had taken the most unsafe possible actions. 

To define these shields, we introduce the notions of \emph{incurred risk} and \emph{safety}. 
Incurred risk quantifies how much riskier the policy played compared to a safety-optimal policy.
Dually, incurred safety quantifies how much safer the policy played compared to a policy that is maximally unsafe.
For a history \(h\), let \(\Pr(h)\) denote the probability of the path \(\hpath(h)\) occurring in \(\M\) given the choices taken in \(h\). 

\begin{definition}[Incurred Risk and Safety]
\label{def:risk}
Given a history \(h = s_0 d_1 \alpha_1 s_1 \cdots s_t\) and a choice \(d_{t+1} \in \Distr(\Act)\),
the \emph{incurred risk} and \emph{incurred safety} are:
\begin{align*}
\risk(h, d_{t+1}) \coloneq& \sum_{0 \leq k \leq t} {\Pr}(h|_k)  \left( Q_{\min}(s_k, d_{k+1}) - V_{\min}(s_k) \right),\\
\safety(h, d_{t+1}) \coloneq& \sum_{0 \leq k \leq t} {\Pr}(h|_k) \left( V_{\max}(s_k) -  Q_{\max}(s_k, d_{k+1})\right),
\end{align*}
\end{definition}

\begin{definition}[Optimistic Shield]
    \label{def:optshield}
    Let safety threshold \(\nu \in [0, 1]\) and \(b_{\min} = \nu - V_{\min}(s_0)\).
    Note that \(0 \leq b_{\min} \leq 1\) because \(V_{\min}(s_0) \leq \nu\). The optimistic shield is given by
    \[
        \shieldOpt(h, d) = \begin{cases}
            d &\text{ if } \risk(h, d) \leq b_{\min}, \\
            d|_{\SafeAct(\last(h))} &\text{ otherwise.}
        \end{cases}
    \]
\end{definition}

\begin{definition}[Pessimistic Shield]
    \label{def:pessimisticshield}
    Let safety threshold \(\nu \in [0, 1]\) and \(b_{\max} = V_{\max}(s_0) - \nu\).
    Note that \(-1 \leq b_{\max} \leq 1\). The pessimistic shield is given by
    \[
        \shieldPess(h, d) = \begin{cases}
            d &\text{ if } \safety(h, d) \geq b_{\max}, \\
            d|_{\SafeAct(\last(h))} &\text{ otherwise.}
        \end{cases}
    \]
\end{definition}
Intuitively, the optimistic shield considers the policy \enquote{innocent} at first, mapping to the proposed choice until the policy is proven \enquote{proven guilty} as soon as the risk rises above the threshold. Conversely, the pessimistic shield starts out by considering the policy \enquote{guilty} until it is  \enquote{proven innocent} as soon as the safety rises above the threshold, after which it starts mapping to the proposed choice.
For conciseness, the presented definitions of these shields do not satisfy the second condition of canonicity (\cref{def:canonical}), we give the full definition in \appref{app:optpessproofs}{A.3}.

\begin{example}
    \label{ex:pessopt}
    Consider the MDP \(\M\) in \cref{fig:mdp2}.
    For states \((s_0, s_1, s_2)\),
    we have \(V_{\min}(\M) = (0.1, 0.1, 0.1)\)
    and \(V_{\max}(\M) = (0.271, 0.19, 0.1)\).
    Let \(\varphi = \Pr(s_0 \vDash \F \bad) \leq 0.2\).
    Suppose the policy plays \(\pi(s_0) = \{\alpha \mapsto 0.5, \beta \mapsto 0.5\}\).
    For the optimistic shield, we have \(b_{\min} = 0.2 - 0.1 = 0.1\). The incurred risk is
    \(
        \risk(s_0, \pi(s_0)) = (0.5 \cdot (0.1 \cdot 1 + 0.9 \cdot 0.1) + 0.5 \cdot 1 \cdot 0.1) - 0.1 = 0.045.
    \)
    As \(0.045 \leq b_{\min}\), the policy is allowed to play this choice. For the pessimistic shield, we have \(b_{\max} = 0.271 - 0.2 = 0.071\). The incurred safety is:
    \(
        \safety(s_0, \pi(s_0)) = 0.271 - (0.5 \cdot (0.1 \cdot 1 + 0.9 \cdot 0.19) + 0.5 \cdot 1 \cdot 0.19) = 0.0405.
    \)
    As \(0.0405 < b_{\max}\), the policy is not allowed to play this choice.
\end{example}

The pessimistic and optimistic shields are a conservative extension of the shield from \cref{ex:nonprobabilistic}:
\begin{restatable}{lemma}{shieldorder}
\label{lemma:shieldorder}
    \(\Allow(\shieldSafe) \subseteq \Allow(\shieldPess) \subseteq \Allow(\shieldOpt)\).
    If \(\nu = 0\), then \(\shield_{\mathit{opt}} = \shield_{\mathit{pess}} = \shieldSafe\).
\end{restatable}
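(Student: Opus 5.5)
The plan has three parts: the first inclusion follows from a property of the projection, the second from a telescoping comparison between incurred risk and incurred safety, and the case \(\nu = 0\) from a sign argument.

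\textbf{First inclusion.} The key observation is that \(d|_X = d\) whenever \(\mathrm{Supp}(d) \subseteq X\). Hence every shield whose only intervention is projection onto \(\SafeAct(\last(h))\) leaves choices over safe actions unchanged, whatever its triggering condition; this covers \(\shieldPess\) and \(\shieldOpt\). A policy \(\pi \in \Allow(\shieldSafe)\) picks, on every history reached under \(\T_{\shieldSafe}(\pi) = \pi\), a distribution supported on safe actions. I will show by induction on the length of paths, following \cref{def:policytransformer}, that then \(\T_{\shieldPess}(\pi) = \pi\). The inductive step is: if the transformer has not changed \(\pi\) on shorter paths, then the history \(h_t\) it builds is exactly the history consistent with \(\pi\), so \(\shieldPess(h_t, \pi(\tau)) = \pi(\tau)\). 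This gives \(\Allow(\shieldSafe) \subseteq \Allow(\shieldPess)\).

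\textbf{Second inclusion.} Here I introduce, for a history \(h\) and choice \(d_{t+1}\), the two ``bound'' quantities
\[
L(h,d_{t+1}) \coloneq V_{\min}(s_0) + \risk(h,d_{t+1}), \qquad U(h,d_{t+1}) \coloneq V_{\max}(s_0) - \safety(h,d_{t+1}).
\]
By definition, the pessimistic condition \(\safety(h,d) \geq b_{\max}\) is equivalent to \(U(h,d) \leq \nu\), and the optimistic condition \(\risk(h,d) \leq b_{\min}\) is equivalent to \(L(h,d) \leq \nu\). So it suffices to prove \(L \leq U\) for all histories.

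\begin{itemize}
\item Write \(W \coloneq V_{\max} - V_{\min} \geq 0\) and \(R(s,d) \coloneq Q_{\max}(s,d) - Q_{\min}(s,d)\). Then
\[
U - L = W(s_0) + \sum_{k=0}^{t} {\Pr}(h|_k)\bigl(R(s_k,d_{k+1}) - W(s_k)\bigr).
\]
\item Using \({\Pr}(h|_{k+1}) = {\Pr}(h|_k)\, d_{k+1}(\alpha_{k+1})\, \mpm(s_k,\alpha_{k+1},s_{k+1})\) and \(W \geq 0\), we get \({\Pr}(h|_k)\, R(s_k,d_{k+1}) \geq {\Pr}(h|_{k+1})\, W(s_{k+1})\), since \(R(s_k,d_{k+1})\) is an average of \(W\) over the successors.
\item The sum therefore telescopes, with \({\Pr}(h|_0) = 1\) cancelling \(W(s_0)\), leaving \(U - L \geq {\Pr}(h|_t)\, R(s_t,d_{t+1}) \geq 0\).
\end{itemize}

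With \(L \leq U\) in hand, take \(\pi \in \Allow(\shieldPess)\). By the same induction as above, the histories built by both policy transformers coincide with the histories consistent with \(\pi\). On each such history, either \(\pi\)'s choice is supported on safe actions, in which case the projection is harmless, or \(U \leq \nu\), hence \(L \leq \nu\), and \(\shieldOpt\) returns \(d\).

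\textbf{The case \(\nu = 0\).} Since \(V_{\min}(s_0) \leq \nu = 0\), we have \(V_{\min}(s_0) = 0\), so \(b_{\min} = 0\) and \(b_{\max} = V_{\max}(s_0)\). All summands of \(\risk\) are nonnegative. I compare the three shields pointwise.
\begin{itemize}
\item If \(d\) is supported on \(\SafeAct(\last(h))\), all three shields return \(d\), by the projection remark.
\item Otherwise \(Q_{\min}(s_t,d) > V_{\min}(s_t)\), so \(\risk(h,d) > 0\) provided \({\Pr}(h) > 0\), and \(\shieldOpt\) projects.
\item For \(\shieldPess\), \(U \geq L = \risk(h,d) > 0 = \nu\), so it projects as well.
\end{itemize}
Thus all three coincide with \(\shieldSafe\).

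\textbf{Main obstacle.} I expect the main difficulty to be the histories with \({\Pr}(h) = 0\) and the non-canonical presentation of the shields. On such histories the incurred quantities carry no information, and the shields may differ. I would resolve this by working with the full canonical definitions from \appref{app:optpessproofs}{A.3}. There, unreachable histories, which are not consistent with any allowed policy, are sent to \(d|_{\SafeAct(\last(h))}\) by condition (2) of \cref{def:canonical}. The equality claims for \(\nu = 0\) then only have to be checked on histories reachable with positive probability, where the argument above applies.
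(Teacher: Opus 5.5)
Your first inclusion is the paper's argument. For the second inclusion you take a different and valid route. The paper never compares \(\risk\) with \(\safety\). It jumps ahead to \cref{thm:optimistic} and chains \(\Allow(\shieldPess) \subseteq \T_{\shieldPess}(\Pi_\M) \subseteq \Safe(\Pi_\M) \subseteq \Allow(\shieldOpt)\), using \SafetyStrong{} of the pessimistic shield and \PermissivenessStrong{} of the optimistic one. Your telescoping bound \(V_{\min}(s_0) + \risk(h,d) \leq V_{\max}(s_0) - \safety(h,d)\) is correct. It is a policy-free version of adding \cref{lemma:minrisk} and \cref{lemma:maxrisk} for any policy consistent with \(h\) that plays \(d\) at \(\hpath(h)\). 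It also gives more than the paper's route: the pointwise implication \(\shieldPess(h,d) = d \Rightarrow \shieldOpt(h,d) = d\) for every pair, including the prefix conditions of the canonical versions, with no forward reference. It further supplies the justification missing from the paper's terse claim that, for \(\nu = 0\), \(\safety(h,d) \geq V_{\max}(s_0)\) forces \(d\) onto safe actions. The paper's route, in turn, shows that \(\Safe(\Pi_\M)\) sits between the two allow sets.

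The step that fails is your treatment of null histories for \(\nu = 0\). A history with \(\Pr(h) = 0\) can be consistent with allowed policies, so condition (2) of \cref{def:canonical} says nothing about it. Take \(s_0\) with actions \(\alpha\) (to a safe sink) and \(\beta\) (to \(s_1\)), and \(s_1\) with \(\gamma\) (to a safe sink) and \(\delta\) (to a bad sink). Let \(h = s_0\,\Dirac_\alpha\,\beta\,s_1\), so \(\Pr(h) = 0\). For \(\nu = 0\) you get \(b_{\min} = 0\) and \(b_{\max} = V_{\max}(s_0) = 1\), with \(\risk(h,\Dirac_\delta) = 0\) and \(\safety(h,\Dirac_\delta) = 1\). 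The prefix conditions of \cref{def:optshield2} and \cref{def:pessimisticshield2} also hold. Hence \(\shieldOpt\) and \(\shieldPess\) both return \(\Dirac_\delta\) on \((h,\Dirac_\delta)\), whereas \(\shieldSafe\) returns \(\Dirac_\gamma\). Consider the policy playing \(\Dirac_\alpha\) at \(s_0\), \(\Dirac_\delta\) at \(s_0\beta s_1\), and safe actions elsewhere. It lies in \(\Allow(\shieldOpt) \cap \Allow(\shieldPess)\) but not in \(\Allow(\shieldSafe)\), so even the allow sets differ. The equality for \(\nu = 0\) therefore holds only on histories of positive probability, i.e.\ up to behaviour on null paths. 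That is exactly what your sign argument proves, and the paper's one-line argument tacitly assumes \(\Pr(h) > 0\) too. You should state this restriction explicitly instead of claiming that canonicity removes it.
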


\begin{restatable}{lemma}{shieldtime}
\label{lemma:shieldtime}
    Given \(V_{\min}\) (resp.\ \(V_{\max}\)), the computation time for \(\shield_{\mathit{opt}}(h,d)\) (resp.\ \(\shield_{\mathit{pess}}(h,d)\)) is in \(\mathcal{O}(|h| \cdot |\Act| \cdot |S|)\).
\end{restatable}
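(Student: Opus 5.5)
The plan is to read off the cost directly from \cref{def:risk} together with \cref{def:optshield} and \cref{def:pessimisticshield}; we treat only the optimistic shield, since the pessimistic case is identical with \(V_{\max}\), \(Q_{\max}\) and \(\safety\) replacing \(V_{\min}\), \(Q_{\min}\) and \(\risk\). Computing \(\shieldOpt(h,d)\) has three parts. First, evaluate \(\risk(h,d)\). Second, compare it with \(b_{\min} = \nu - V_{\min}(s_0)\); this takes constant time once \(V_{\min}\) is given. Third, if the comparison fails, output \(d|_{\SafeAct(\last(h))}\).

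\textbf{Computing the risk.} Write \(t = |h|\). The sum in \(\risk(h,d)\) has \(t+1\) terms, one per \(k\). We process the history from left to right and maintain \(\Pr(h|_k)\) incrementally:
\[
\Pr(h|_{k+1}) = \Pr(h|_k)\cdot d_{k+1}(\alpha_{k+1})\cdot \mpm(s_k,\alpha_{k+1},s_{k+1}).
\]
Each update costs \(\mathcal{O}(1)\), assuming constant-time arithmetic and table lookups. The dominant cost at each step \(k\) is \(Q_{\min}(s_k,d_{k+1})\). This is a double sum over \(\alpha \in \Act\) and \(s' \in S\), using the precomputed \(V_{\min}(s')\), so it costs \(\mathcal{O}(|\Act|\cdot|S|)\). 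Subtracting \(V_{\min}(s_k)\) and multiplying by \(\Pr(h|_k)\) adds only constant time. Summing over the \(t+1\) steps gives \(\mathcal{O}(|h|\cdot|\Act|\cdot|S|)\); the final term uses the supplied choice \(d\) as \(d_{t+1}\).

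\textbf{Projecting onto safe actions.} When the comparison fails, we need \(\SafeAct(\last(h))\). Checking one action \(\alpha\) means testing \(\sum_{s'} \mpm(\last(h),\alpha,s')\,V_{\min}(s') = V_{\min}(\last(h))\), which costs \(\mathcal{O}(|S|)\). Doing this for all actions costs \(\mathcal{O}(|\Act|\cdot|S|)\). Normalizing \(d\) on that set, or taking the uniform distribution when the support is disjoint from it, costs \(\mathcal{O}(|\Act|)\). Both are absorbed in the bound.

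The only delicate point is to avoid recomputing \(\Pr(h|_k)\) from scratch for each \(k\). Doing so would add a factor \(|h|\), and the incremental product above avoids it. For the canonical variant in \appref{app:optpessproofs}{A.3}, one would also have to check that the consistency test with the allow set is local, namely that the risk condition is not exceeded at every prefix. That check can be done during the same left-to-right pass at no extra asymptotic cost.
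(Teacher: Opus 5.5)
Your proposal is correct and follows the paper's proof essentially verbatim. The paper also computes \(\risk(h,d)\) in one left-to-right pass (Algorithm~\ref{alg:risk}), keeping \(\Pr(h|_k)\) as a running product and evaluating each \(Q_{\min}(s_k,d_{k+1})\) by a double loop over \(\Act\) and \(S\), after which the comparison with \(b_{\min}\) and the projection cost only \(\mathcal{O}(|\Act|\cdot|S|)\). Your extra remark on the canonical variant is also sound, since the running tally after processing index \(k<t\) is exactly \(\risk(h|_k,d_{k+1})\).
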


\subsection{Weak Guarantees}
The optimistic and pessimistic shields satisfy a weaker set of guarantees:

\begin{mdframed}[style=MyFrame,nobreak=true]
Let $\M$ be an MDP and $\shield$ a shield. Property X is satisfied, if for all $\pi \in \Pi_{\M}$ and for all $\tau \in \FinPaths(\M)$:\\
\textbf{(S\(-\)): Weak Safety.} There is a safe policy $\pi_{\mathit{safe}}$ with $\pi_{\mathit{safe}}(\tau') = \T_{\shield}(\pi)(\tau')$ for all prefixes \(\tau'\) of \(\tau\). \\
\textbf{(P\(-\)): Weak Permissiveness.} If \(\T_{\shield}(\pi)(\tau) \neq \pi(\tau)\), then there is an unsafe policy \(\pi_{\mathit{unsafe}}\) with \(\pi_{\mathit{unsafe}}(\tau') = \pi(\tau')\) for all prefixes \(\tau'\) of \(\tau\).
\end{mdframed}
\begin{restatable}{lemma}{lemmastrongimpliesweak}
    \label{lemma:strongimpliesweak}
    \SafetyStrong implies \SafetyWeak{} and \PermissivenessStrong implies \PermissivenessWeak.
\end{restatable}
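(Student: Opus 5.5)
Both implications follow directly by exhibiting the required witness policy; neither needs any of the shield-specific machinery.

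\emph{(S+) implies (S$-$).} Fix $\pi \in \Pi_\M$ and $\tau \in \FinPaths(\M)$. By \SafetyStrong, $\T_{\shield}(\pi) \in \Safe(\Pi_\M)$. Take $\pi_{\mathit{safe}} \coloneq \T_{\shield}(\pi)$. This policy is safe, and it trivially agrees with $\T_{\shield}(\pi)$ on every prefix $\tau'$ of $\tau$. So \SafetyWeak holds. This direction is immediate.

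\emph{(P+) implies (P$-$).} Fix $\pi$ and $\tau$ with $\T_{\shield}(\pi)(\tau) \neq \pi(\tau)$. Then $\T_{\shield}(\pi) \neq \pi$, so $\pi \notin \Allow(\shield)$. By the contrapositive of \PermissivenessStrong, which says $\Safe(\Pi_\M) \subseteq \Allow(\shield)$, we get $\pi \notin \Safe(\Pi_\M)$. Take $\pi_{\mathit{unsafe}} \coloneq \pi$. It is unsafe and trivially agrees with $\pi$ on all prefixes of $\tau$. So \PermissivenessWeak holds.

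The only step needing any care is unfolding the definitions. Allowance is defined globally ($\T_{\shield}(\pi)=\pi$ as functions), so a disagreement at the single path $\tau$ already witnesses non-allowance. No construction of a new policy that agrees on prefixes and deviates elsewhere is needed, because $\pi$ itself serves as the witness. I therefore expect no real obstacle. The main thing to verify is that the weak properties quantify existentially over the witness policy, so the trivial choices above are admissible.
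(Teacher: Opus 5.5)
Your proposal is correct and matches the paper's proof: it takes $\pi_{\mathit{safe}} \coloneq \T_{\shield}(\pi)$ for \SafetyWeak{}, and for \PermissivenessWeak{} it notes that $\T_{\shield}(\pi)(\tau) \neq \pi(\tau)$ gives $\pi \notin \Allow(\shield)$, hence $\pi$ is unsafe by \PermissivenessStrong{}, so $\pi_{\mathit{unsafe}} \coloneq \pi$ serves as the witness.
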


\begin{restatable}{theorem}{thmoptimistic}
    \label{thm:optimistic}
    The optimistic shield satisfies \SafetyWeak{} and \PermissivenessStrong. 
    \label{thm:pessimistic}
    The pessimistic shield satisfies \SafetyStrong and \PermissivenessWeak.
\end{restatable}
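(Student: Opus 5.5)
The backbone of all four claims is a telescoping identity for reachability probabilities. For any policy $\sigma$, unfolding the fixpoint equations of $V_{\min}$ along the computation tree of $\imc[]$ gives
\[
\Pr{}_{\M^{\sigma}}(s_0 \vDash \F \bad) \;=\; V_{\min}(s_0) + \sum_{\tau} \Pr(\tau)\bigl(Q_{\min}(\last(\tau),\sigma(\tau)) - V_{\min}(\last(\tau))\bigr),
\]
where the sum ranges over finite paths not yet in $\bad$. Dually,
\[
\Pr{}_{\M^{\sigma}}(s_0 \vDash \F \bad) \;=\; V_{\max}(s_0) - \sum_{\tau} \Pr(\tau)\bigl(V_{\max}(\last(\tau)) - Q_{\max}(\last(\tau),\sigma(\tau))\bigr).
\]
In the infinite-horizon case these are inequalities ($\ge$ for the first, $\le$ for the second). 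They follow by truncating at depth $n$, where the remainder is the mass at depth $n$ weighted by $V_{\min}$ (resp.\ $V_{\max}$), and letting $n\to\infty$. Every summand is nonnegative, so the one-path quantities $\risk$ and $\safety$ are exactly the contributions of a single branch to these sums. The stated direction of each inequality is precisely what the safety and permissiveness arguments below need.

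\emph{Optimistic shield, (P+).} Take a safe $\pi$. For any path $\tau$, the terms of $\risk$ along $\tau$ form a subsum of the first identity. Hence
\[
\risk \le \Pr(s_0\vDash\F\bad) - V_{\min}(s_0) \le \nu - V_{\min}(s_0) = b_{\min},
\]
so the shield never modifies $\pi$. This is shown by induction on path length, since the history fed to the shield agrees with $\pi$ as long as nothing has been modified so far. Therefore $\T_{\shieldOpt}(\pi)=\pi$.

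\emph{Optimistic shield, (S$-$).} Given $\pi$ and $\tau$, set $\pi_{\mathit{safe}}$ equal to $\T_{\shieldOpt}(\pi)$ on the prefixes of $\tau$ and to a safety-optimal policy everywhere else. Off-path terms then vanish, and on-path terms are exactly the $\risk$ of the last choice along $\tau$, which the shield kept at most $b_{\min}$. The one subtlety is a step where the shield projected to safe actions; such a step contributes $0$, so the accumulated risk is unchanged. Consequently $\Pr \le V_{\min}(s_0)+b_{\min}=\nu$.

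\emph{Pessimistic shield, (S+).} Let $\sigma=\T_{\shieldPess}(\pi)$ and use the second identity. Consider any path $\tau$ on which $\sigma$ plays an action that is not safe-projected. At that point the accumulated safety along $\tau$ is at least $b_{\max}$. The main obstacle is aggregating over the tree: the safety requirement is a statement about a sum over all branches, but the shield's guarantee holds branchwise. My plan is to show that the total incurred safety in the tree is at least $b_{\max}$ by distinguishing two kinds of branches. A branch that ever played an unprojected choice already accumulated $\ge b_{\max}$ (or fell under the standing threshold). A branch that played only safe-projected choices stays on $\SafeAct$, so its contribution is bounded via $V_{\min}$ instead.

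Concretely, I would split $\sigma$ at the first time each branch is allowed freely. Before that time, $\sigma$ plays safe actions, and I would compare with $V_{\min}$, using $V_{\min}(s_0)\le\nu$. After that time, I would use the fact that $\safety\ge b_{\max}$ on that branch. A convex-combination argument over the stopping frontier then yields $\Pr\le\nu$. If this frontier argument fails to close, my fallback is to redefine the bound via the canonical full definition in App.~A.3, which plausibly handles exactly this case.

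\emph{Pessimistic shield, (P$-$).} If the shield modifies $\pi$ at $\tau$, then $\safety<b_{\max}$ there. Let $\pi_{\mathit{unsafe}}$ follow $\pi$ on the prefixes of $\tau$ and play a $V_{\max}$-optimal policy elsewhere. In the second identity only the on-path terms are nonzero, and these are bounded by the safety of the offending choice. Hence $\Pr \ge V_{\max}(s_0)-\safety > V_{\max}(s_0)-b_{\max} = \nu$, which requires the exact equality of the second identity for a max-optimal completion; that equality holds because $V_{\max}$ is attained.
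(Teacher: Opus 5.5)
Your treatment of the optimistic shield and of \PermissivenessWeak{} follows the paper's route: single-branch partial sums of a nonnegative telescoping sum, plus a min- or max-optimal completion off the path. The \SafetyStrong{} argument for the pessimistic shield, however, has a genuine gap. The frontier/convex-combination plan is never carried out, and the per-branch split it relies on is the wrong decomposition. $\safety$ is weighted by \emph{unconditional} path probabilities, so ``$\safety\ge b_{\max}$ at a frontier node'' says nothing about the conditional value of the subtree below that node. Your sketch also gives no way to glue such statements to $V_{\min}$-bounds on the remaining branches. The fallback to the canonical variant does not help, since it only blocks more.

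The missing idea is one you already wrote down. Every summand of your second identity is nonnegative, and the history fed to the shield records the choices of $\sigma=\T_{\shield_{\mathit{pess}}}(\pi)$ itself. So for every history $h$ consistent with $\sigma$ you get $\Pr_{\M^{\sigma}}(s_0\vDash\F\bad)\le V_{\max}(s_0)-\safety(h,\sigma(\hpath(h)))$. The case split is therefore global, not per branch:
\begin{itemize}
\item If $\safety(h,\sigma(\hpath(h)))\ge b_{\max}$ for even one history $h$ consistent with $\sigma$, that single branch gives $\Pr_{\M^{\sigma}}(s_0\vDash\F\bad)\le V_{\max}(s_0)-b_{\max}=\nu$, whatever $\sigma$ does elsewhere.
\item Otherwise the shield took its projection case on every history consistent with $\sigma$, so $\sigma$ uses only actions in $\SafeAct$. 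Making $\bad$ absorbing, the expected $V_{\min}$-value of the state at depth $n$ stays equal to $V_{\min}(s_0)$ and dominates the probability of having reached $\bad$ by depth $n$. Hence $\Pr_{\M^{\sigma}}(s_0\vDash\F\bad)\le V_{\min}(s_0)\le\nu$.
\end{itemize}
This last step is the $\le$ direction of your first identity, which you did not claim; it needs the justification just given. The two-case argument is the paper's proof via Lemma~\ref{lemma:maxrisk}, which should be applied to $\sigma$ rather than to $\pi$.

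Two smaller points. First, in \SafetyWeak{} you also use the $\le$ direction of your first identity. It does hold for your $\pi_{\mathit{safe}}$: every node deeper than $\tau$ is off the path, the safety-optimal continuation attains $V_{\min}$ there exactly, and so the truncation remainder vanishes. This is the same reasoning you give for \PermissivenessWeak{}.

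Second, in \PermissivenessWeak{} the step ``these are bounded by the safety of the offending choice'' identifies $\pi$'s choices on the prefixes of $\tau$ with the choices recorded in the shield's history. Those are the choices of $\T_{\shield_{\mathit{pess}}}(\pi)$, and the two coincide only up to the first modification. The paper's proof makes the same identification, and for stochastic $\pi$ it can fail. Consider the following example:
\begin{itemize}
\item At $s_0$, action $a$ reaches $\bad$ with probability $0.3$ and a good sink otherwise; action $b$ leads to $s_1$.
\item At $s_1$, one action leads to $\bad$ and another to the good sink.
\item Take $\nu=0.5$ (so $b_{\max}=0.5$), $\pi(s_0)$ uniform, and $\pi(s_0 b s_1)$ reaching $\bad$ with probability $0.6$.
\end{itemize}
The shield blocks at $s_0$ (safety $0.35$) and projects to $\Dirac_b$. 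It then blocks again at $s_0 b s_1$, where the recorded safety is $0.4$. Yet every policy agreeing with $\pi$ on both prefixes reaches $\bad$ with probability $0.45\le\nu$, so no unsafe $\pi_{\mathit{unsafe}}$ exists. Your argument becomes sound if $\pi_{\mathit{unsafe}}$ is only required to follow the recorded choices on the proper prefixes of $\tau$, and $\pi$ at $\tau$ itself.
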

If the observed choices are not consistent with any safe policy, \SafetyWeak{} prevents the shield from allowing the proposed choice. This is treating the policy \enquote{innocent until proven guilty}: As long as the policy \emph{might} be safe, the shield can allow it.
Conversely, \PermissivenessWeak{} only permits \emph{blocking} the policy if the current choices are consistent with some unsafe policy. This is treating the policy \enquote{guilty until proven innocent}: As long as the policy \emph{might} be unsafe, the shield can transform it.

\begin{figure}[t]
    \begin{minipage}[b]{0.5\textwidth}
    \centering
    \begin{adjustbox}{max height=3cm}
    \includegraphics{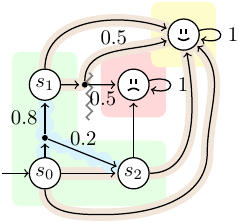}
    \end{adjustbox}
    \vspace{-1em}
    \caption{A navigation environment MDP}
    \label{fig:mdpenv}
    \end{minipage}
    \begin{minipage}[b]{0.5\textwidth}
    \centering
\renewcommand{\arraystretch}{0.9}%
\begin{tabular}{lcccccccc}
\toprule
& \multicolumn{3}{c}{Actions} &  & \multicolumn{4}{c}{Allowed if shield has} \\
\cmidrule(lr){2-4} \cmidrule(lr){6-9}
& $s_0$ & $s_1$ & $s_2$ & Safe? & \SafetyStrong & \SafetyWeak & \PermissivenessStrong & \PermissivenessWeak \\
\midrule
$\pi_1$ & \arrowkeyup    & \arrowkeyup    & \arrowkeyup & \yes & \dontmatter          &         \dontmatter     & \yes &    \dontmatter          \\
$\pi_2$ & \arrowkeydown & \arrowkeyright & \arrowkeyright & \yes &   \dontmatter          &           \dontmatter   & \yes & \yes \\
$\pi_3$ & \arrowkeyup    & \arrowkeyright & \arrowkeyup  & \no  & \no &  \dontmatter &             \dontmatter &           \dontmatter   \\
$\pi_4$ & \arrowkeyright & \arrowkeyright & \arrowkeyup    & \no & \no & \no & \dontmatter             &            \dontmatter  \\
\bottomrule
\end{tabular}
    \caption{Guarantees on environment policies}
    \label{tab:policies}
    \end{minipage}
\end{figure}

\begin{example}
    Consider the MDP in \cref{fig:mdpenv} with actions \(\{\arrowkeyup, \arrowkeyright, \arrowkeydown\}\) and \(\nu=0.5\). \cref{tab:policies} compares the guarantees of four memoryless policies: \yes \ denotes that the policy is always allowed by a shield satisfying that guarantee, \no \ denotes that it is never allowed, and \dontmatter \ denotes that some shields with that guarantee allow that policy and some do not. We exemplify a shield satisfying \PermissivenessWeak{}: If the agent plays as \(\pi_1\) at \(s_0\), the unsafe policy \(\pi_3\) is consistent with \(\pi_1\) on the observed histories, so the shield may transform the choice. If the agent plays as \(\pi_2\), no observed history is consistent with any unsafe policy, so the shield may not transform it.
\end{example}

\begin{restatable}{lemma}{deltashield}
\label{lemma:deltashield}
    For \(0 < \delta,\!\nu < 1\), there is an MDP such that the \(\delta\)-shield does not satisfy \SafetyWeak{} (resp.\ \PermissivenessWeak{}).
\end{restatable}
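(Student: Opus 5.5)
The plan is to give two small explicit MDPs, one per claim. Each is built so that the local test of the \(\delta\)-shield, \(Q_{\min}(s,d)\cdot\delta \leq V_{\min}(s)\), misjudges what can still happen along the rest of the path. For fixed \(0<\delta,\nu<1\), the transition probabilities are chosen as functions of \(\delta\) and \(\nu\).

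\emph{Failure of \SafetyWeak{}.} Build a chain of states \(s_0,s_1,\dots,s_n\). Each \(s_i\) has a safe action leading to \(s_{i+1}\). It also has a risky action that reaches \(\bad\) with some small probability \(p\) and otherwise continues to \(s_{i+1}\). The final state has only a safe action with \(V_{\min}=c>0\), so \(V_{\min}(s_i)=c\) for all \(i\). The risky action has \(Q_{\min}=p+(1-p)c\). Choose \(c\) and \(p\) so that \(\delta(p+(1-p)c)\leq c\); this is possible since \(\delta<1\), for instance with \(p\le c(1-\delta)/\delta\). Then the \(\delta\)-shield allows the risky action at every \(s_i\). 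Also ensure \(c\le\nu\), so the MDP has a safe policy.

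Let \(\pi\) play the risky action everywhere. Along the path \(\tau=s_0\alpha s_1\cdots s_n\), the shielded policy agrees with \(\pi\) on all prefixes. Any policy agreeing with it on these prefixes reaches \(\bad\) with probability at least \(1-(1-p)^n\) before ever leaving the path. This follows because the only deviation from the path is entering \(\bad\). Pick \(n\) with \(1-(1-p)^n>\nu\). Then no safe policy agrees with \(\T_{\shieldDelta}(\pi)\) on all prefixes of \(\tau\), which violates \SafetyWeak{}.

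The main technical step is this bound, namely that every extension of the observed prefix inherits the accumulated risk. It follows directly from acyclicity and from the fact that all risk is taken on the observed path.

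\emph{Failure of \PermissivenessWeak{}.} Let \(s_0\) have two actions. The safe action reaches \(\bad\) directly with probability \(q\) and otherwise goes to a sink, so \(V_{\min}(s_0)=q\), which is positive. The other action reaches \(\bad\) with probability \(q'\) slightly larger than \(q/\delta\) and otherwise goes to a sink. Choose \(q<q'\le\nu\) with \(\delta q'>q\); this is possible by taking \(q\) small. The \(\delta\)-shield then blocks the second action at \(s_0\).

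Every policy in this one-step MDP has violation probability at most \(q'\le\nu\), so every policy is safe. Hence there is no unsafe policy consistent with the prefix \(s_0\), yet the shield changed \(\pi(s_0)\). This violates \PermissivenessWeak{}.

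The only care needed here is verifying the arithmetic constraints for all \(\delta,\nu\in(0,1)\), for example \(q=\delta\nu/2\) and \(q'=\nu\).
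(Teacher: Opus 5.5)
Your proposal is correct. The \PermissivenessWeak{} part is essentially the paper's argument: a one-step MDP in which every policy is safe, yet the riskier action fails the local test. The paper takes \(V_{\min}(s_0)=0\), so any \(\delta>0\) blocks an action of value \(\nu\); you take \(V_{\min}(s_0)=q=\delta\nu/2>0\), which is a cosmetic difference.

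For \SafetyWeak{} you take a genuinely different route. The paper also uses a single decision there. It sets \(V_{\min}(s_0)=\nu\) exactly and gives the second action value \(x=\min\{\nu/\sqrt{\delta},\sqrt{\nu}\}\), chosen so that \(\nu<x<1\) and \(\delta x\le\nu\). So the multiplicative slack \(1/\delta\) alone lets one allowed action overshoot the threshold, and no completion can repair it.

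Your chain shows instead that risk compounds. Each step passes the local test with \(V_{\min}\equiv c\) for an arbitrary \(c\in(0,\nu]\), possibly far below \(\nu\). Yet after \(n\) steps along the observed path, the risk already committed, \(1-(1-p)^n\), exceeds \(\nu\).

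The paper's construction is smaller and needs only one decision. However, it relies on the MDP sitting near the threshold: it needs \(\delta\nu< V_{\min}(s_0)\le\nu\). Yours costs a longer MDP whose length depends on \(\delta\) and \(\nu\). In exchange, it isolates the weakness of local shields stressed in the introduction: local checks cannot bound globally accumulated risk.

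One small detail: also require \(p<1\), which is implicit in your ``small \(p\)'' but not guaranteed by the bound \(p\le c(1-\delta)/\delta\) when \(\delta\) is small. This ensures \(\tau\) is actually a path of the MDP.
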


\section{Saturated Safe Shields}
\label{sec:saturated}
We introduce safe shields that are more permissive than the pessimistic shield.

\begin{example}
Consider the MDP from \cref{fig:mdp}. 
The pessimistic shield does not allow anything other than safe actions, as it assumes the worst-case behavior towards \(\bad\) at the other state; allowing anything else would lead to unsafe behavior. However, it is safe to allow a policy to take an unsafe action at \emph{either} \(s_1\) \emph{or} \(s_2\). Such a shield allows additional policies and is safe.
\end{example}
We introduce \emph{saturated permissiveness}, which states that a given shield cannot allow more policies without becoming unsafe:

\begin{mdframed}[style=MyFrame,nobreak=true]
    \textbf{Guarantee (P*): Saturated Permissiveness.} For all \(\pi \in \Pi_\M \setminus \Allow(\shield)\), there is no safe shield \(\shield'\) such that \(\Allow(\shield) \cup \{\pi\} \subseteq \Allow(\shield')\).
\end{mdframed}
We call a shield \emph{saturated} if it satisfies \SafetyStrong and \SaturatedPermisiveness.

\begin{remark}
    \SaturatedPermisiveness{} implies \PermissivenessWeak{}.
\end{remark}

\begin{example}
    \label{ex:maximalshields}
    Consider the MDP \(\M\) in \cref{fig:mdp} and \(\nu = 0.5\).
    For every \(p \in [0, 1]\), the following shield is saturated:
    \begin{align*}
        \shield(h, d) &\coloneq \begin{cases}
            d \text{ if } (\last(h)=s_1 \text{ and } d(\beta) \leq p) \text{ or } (\last(h)=s_2 \text{ and } d(\gamma) \leq (1-p)),\\
            d|_{\SafeAct(\last(h))}  \text{ otherwise.}
        \end{cases}
    \end{align*}
\end{example}
Allow sets of distinct canonical saturated shields are incomparable. Saturated shields allow a convex set of choices at each state.
\begin{restatable}{lemma}{theoremconvex}
\label{thm:convex}
    For all saturated shields \(\shield\) and histories \(h \in \Hist(\M)\), the set \(\{d \in \Distr(\Act) \mid \shield(h,d)=d\}\) is convex.
\end{restatable}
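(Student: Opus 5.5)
The plan is to argue by contradiction and use \SaturatedPermisiveness{} directly. Fix a saturated shield \(\shield\), which we may take canonical, and a history \(h\) with \(\last(h)=s\). Suppose \(\shield(h,d_1)=d_1\) and \(\shield(h,d_2)=d_2\), but \(d=\lambda d_1+(1-\lambda)d_2\) satisfies \(\shield(h,d)\neq d\) for some \(\lambda\in(0,1)\). We will build a safe shield \(\shield'\) with \(\Allow(\shield)\cup\{\pi\}\subseteq\Allow(\shield')\), where \(\pi\) is a policy consistent with \(h\) that plays \(d\) at \(\hpath(h)\). Such a \(\pi\) is not in \(\Allow(\shield)\), so this contradicts \SaturatedPermisiveness{}.

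\emph{Step 1 (reduce safety to the allow set).} For a canonical shield, every transformed policy \(\T_{\shield}(\pi)\) is again allowed. Indeed, on histories it generates, the shield either keeps the choice or projects to \(\SafeAct\), and the closure property of \cref{thm:existsshield} keeps the result inside the allow set. Hence \SafetyStrong{} is equivalent to \(\sup_{\pi\in\Allow(\shield)}\Pr_{\M^\pi}(s_0\vDash\F\bad)\le\nu\). Because \(\Allow(\shield)\) is mix-closed (\cref{prop:mixclosed}), this supremum decomposes over the computation tree: each history \(g\) gets a worst-case value \(W(g)\), defined as the sup over allowed continuations from \(g\). The subtrees below distinct histories can be maximised independently.

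\emph{Step 2 (construct the new allowed continuations after \(d\)).} For each action \(a\) in the support of \(d\) and \(i\in\{1,2\}\), let \(W_i(a)=\sum_{s'}\mpm(s,a,s')\,W(h\,d_i\,a\,s')\). This is the worst-case value of continuing after choice \(d_i\) and action \(a\). For each \(a\), pick \(i(a)\in\{1,2\}\) minimising \(W_{i(a)}(a)\). Define \(P'\) as the mix-closure of \(\Allow(\shield)\) together with all policies that
\begin{itemize}
\item agree with some allowed policy up to \(h\),
\item play \(d\) at \(\hpath(h)\),
\item after \(a\,s'\) behave like an allowed continuation of \(h\,d_{i(a)}\,a\,s'\) (copying that continuation, with \(d_{i(a)}\) replaced by \(d\) in the recorded history).
\end{itemize}
This set still satisfies the safe-action closure hypothesis of \cref{thm:existsshield}, since it is inherited from \(\Allow(\shield)\). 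So there is a canonical \(\shield'\) with \(\Allow(\shield')=P'\), and it contains \(\Allow(\shield)\cup\{\pi\}\).

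\emph{Step 3 (safety of \(\shield'\)).} The only new subtree in \(P'\) is the one below \((h,d)\). Its worst-case value satisfies
\[
W'(h,d)=\sum_a d(a)\min_i W_i(a)\le \lambda\sum_a d_1(a)W_1(a)+(1-\lambda)\sum_a d_2(a)W_2(a)\le \max_{i}\,W(h,d_i).
\]
The global worst-case probability is affine in the value at the node \(h\), with nonnegative coefficient \(\Pr(h)\). Moreover, the other subtrees can be chosen independently thanks to mix-closure. Hence the global supremum for \(P'\) is at most the global supremum for \(\Allow(\shield)\) with \(d_1\) or \(d_2\) chosen at \(h\), which is \(\le\nu\). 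Thus \(\shield'\) is safe, giving the contradiction.

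The main obstacle is making Step 1 and the decomposition in Step 3 rigorous. Concretely, one must show that the supremum over a mix-closed allow set equals the value of a "worst-case tree" in which each history is maximised separately. One must also check that the history-mixing in the definition of \(P'\) introduces no new behaviour elsewhere in the tree. My first attempt would be to prove this decomposition as a small auxiliary lemma, by induction on finite horizons followed by a limit argument for indefinite-horizon reachability.
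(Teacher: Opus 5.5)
Your argument is correct; the auxiliary step you flag does hold. It is, however, a heavier route than the paper's.

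\textbf{The paper's route.} The paper uses the \emph{minimal} extension: $\shield'$ agrees with $\shield$ except that $\shield'(h,d)=d$. Because $\shield$ is canonical, and no allowed policy consistent with $h$ plays $d$ at $\hpath(h)$, every history below $h\,d$ is inconsistent with $\Allow(\shield)$. So $\shield'$ still projects to $\SafeAct$ there, and any policy allowed by $\shield'$ that plays $d$ at $\hpath(h)$ continues with safe actions only. Replacing $d$ by $d_i$ at that single node gives policies $\pi_i$ already in $\Allow(\shield)$; this is what tacitly makes the paper's $\pi_i$ safe. With the continuation fixed, the value at $\hpath(h)$ is linear in the choice made there, so $V_\pi(s_0)\le\max_i V_{\pi_i}(s_0)\le\nu$. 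No worst-case values, suprema, or per-action selection $i(a)$ are needed.

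\textbf{Your route.} You instead graft, below $d$, the less dangerous allowed subtrees of $d_1$ or $d_2$. This gives a more permissive witness and shows that $d$ can be added with nontrivial continuations without raising the worst case at $h$. Refuting \SaturatedPermisiveness{} only needs \emph{some} strictly larger safe shield, though. So this extra strength is unnecessary, and it is exactly what creates your ``main obstacle''.

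\textbf{The obstacle is milder than you expect.} Take $\pi\in P'$ that plays $d$ at $h$, and write $V_\pi(s_0)=C+p\,V_\pi(\hpath(h))$, where $C$ and $p$ are fixed by $\pi$ outside that subtree. By the subtree isomorphism, $V_\pi(\hpath(h))\le\sum_a d(a)\min_i W_i(a)$. You then only need that $\sum_a d_i(a)W_i(a)$ is approached by allowed policies with the same $C,p$ that play $d_i$ at $h$. To get these, choose $\varepsilon$-optimal allowed continuations at the finitely many successors $h\,d_i\,a\,s'$ and splice them with $\pi$'s behaviour outside the subtree using mix-closure. No global tree decomposition, finite-horizon induction, or limit argument is required.

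\textbf{Two minor points.}
\begin{itemize}
\item Clause (i) of $P'$ must mean agreeing with an allowed policy everywhere outside the subtree at $\hpath(h)$, not merely on the prefixes of $\hpath(h)$. Otherwise $P'$ admits arbitrary behaviour elsewhere and is unsafe.
\item ``We may take $\shield$ canonical'' is an assumption, not a reduction: a non-canonical saturated shield may act arbitrarily on histories unreachable under its allow set. The paper's proof uses canonicity in the same tacit way.
\end{itemize}
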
%
\noindent In contrast to shields satisfying \SafetyStrong{} and \PermissivenessStrong{}, saturated shields always exist:
\begin{restatable}{theorem}{saturatedexists}
\label{thm:saturatedexists}
The axiom of choice (AoC) implies that for any MDP and any \(\nu \in [0,1]\), a saturated shield exists.
\end{restatable}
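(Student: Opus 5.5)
We will apply Zorn's lemma, which is equivalent to the axiom of choice, to the family of allow sets of safe canonical shields, ordered by inclusion. By \cref{thm:existsshield}, canonical shields correspond one-to-one to mix-closed sets of policies that are closed under safe-action choices at consistent histories. So let \(\mathcal{F}\) be the family of all \(P \subseteq \Pi_\M\) such that \(P\) is mix-closed, satisfies the closure condition of \cref{thm:existsshield}, and induces a canonical shield \(\shield_P\) satisfying \SafetyStrong. The family is nonempty, because \(\Allow(\shieldSafe)\) belongs to it: \(\shieldSafe\) is canonical up to the modification in the appendix, and by \cref{lemma:shieldorder} and \cref{thm:pessimistic} the pessimistic shield is a safe canonical shield whose allow set lies in \(\mathcal{F}\).

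Next we verify the chain condition. Given a chain \((P_i)_{i\in I}\) in \(\mathcal{F}\), we take \(P = \bigcup_i P_i\) and check three properties. (a) The closure condition holds because it is checked at a single history, which already lies in some \(P_i\). (b) For mix-closedness, let \(\pi \in \mix(P)\); we need \(\pi \in P\). (c) For \SafetyStrong, we need \(\T_{\shield_P}(\pi)\) to be safe for every \(\pi\).

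The key observation for (c) is that canonicity makes \(\T_{\shield_P}(\pi)\) agree with some allowed or safety-optimal-projected behaviour. The reachability probability \(\Pr(s_0 \vDash \F\bad)\) is the supremum over \(n\) of the bounded-horizon probabilities \(\Pr(s_0\vDash\F^{\leq n}\bad)\). Up to horizon \(n\) only finitely many histories matter if we fix a policy. Since the chain is totally ordered, the finitely many witnesses needed at depth \(\le n\) all lie in a single \(P_{i_n}\). Hence \(\T_{\shield_P}(\pi)\) and \(\T_{\shield_{P_{i_n}}}(\pi)\) agree up to depth \(n\), so the \(n\)-step bad probability is at most \(\nu\). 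Taking the supremum over \(n\) gives safety.

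Point (b) is the main obstacle. The issue is that \(\mix\) quantifies over infinitely many histories, whose witnesses may lie in unboundedly large \(P_i\), and stochastic policies yield uncountably many consistent histories. If the union is not mix-closed, we will instead take \(\mix(\bigcup_i P_i)\) as the upper bound. Mix-closedness of this set is idempotence of \(\mix\), which follows directly from the definition. For its safety we reuse the finite-horizon argument. Any \(\pi\) in this set agrees up to depth \(n\), on the finitely many relevant path prefixes, with policies from a single \(P_{i_n}\). Those policies are mixed within \(P_{i_n}\), which is mix-closed, so the depth-\(n\) behaviour of \(\pi\) is realised in \(P_{i_n}\). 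Thus the \(n\)-step probability is bounded by \(\nu\), with the same argument applying to transformed policies. The closure condition is preserved similarly.

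With the chain condition established, Zorn's lemma yields a maximal \(P^*\in\mathcal{F}\). Let \(\shield = \shield_{P^*}\). It satisfies \SafetyStrong. For \SaturatedPermisiveness, suppose some safe \(\shield'\) had \(P^* \cup\{\pi\}\subseteq \Allow(\shield')\) with \(\pi \notin P^*\). Canonicalising \(\shield'\), by taking the mix-and-safe-action closure of \(\Allow(\shield')\) via \(\mathcal{F}\)'s closure properties and \cref{prop:mixclosed}, would produce a member of \(\mathcal{F}\) strictly above \(P^*\), contradicting maximality.
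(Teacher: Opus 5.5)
Your proposal is correct and follows essentially the same route as the paper: Zorn's lemma on safe canonical shields (you phrase it via their allow sets), with the upper bound of a chain taken to be $\mix(\bigcup_i P_i)$, i.e.\ the lattice join $\bigsqcup X$. Safety of that bound comes from truncating at a finite horizon, where only finitely many witnesses are needed and these all lie in a single chain member. Your extra canonicalisation step for \SaturatedPermisiveness{} is sound and covers a point the paper skips, since the paper simply identifies saturated shields with maximal safe canonical shields; the safety of the canonicalised shield does need one line, namely that switching to safe-action-only behaviour at a history yields $V_{\min}$ from there and so cannot increase the probability of reaching a bad state.
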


\begin{restatable}{lemma}{noaoc}
\label{lemma:noaoc}
For acyclic MDPs, \cref{thm:saturatedexists} holds independent of the AoC.
\end{restatable}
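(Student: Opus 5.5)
The proof replaces the transfinite (Zorn-type) argument behind \cref{thm:saturatedexists} by an explicit countable construction. This is possible because an acyclic MDP has a finite horizon. Let $T \subseteq \FinPaths(\M)$ be the finite set of paths ending in a non-terminal state. A policy is then a point of the compact product of simplices $\prod_{\tau \in T} \Distr(\Act(\last(\tau)))$, a subset of $\mathbb{R}^N$. The map $\pi \mapsto \Pr_{\imc}(s_0 \vDash \F \bad)$ is a polynomial in these coordinates, hence continuous. Via \cref{thm:existsshield}, it suffices to build a set $P \subseteq \Safe(\Pi_\M)$ that is mix-closed, closed under replacing the current choice by any choice over $\SafeAct$, and maximal among sets with these properties. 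The corresponding canonical shield $\shield$ is then safe: every transformed policy either is allowed, or agrees with an allowed policy until a history outside $P$ is reached and from there plays safe actions; its safety follows from the safety of the policies in $P$ and the fact that safe actions preserve $V_{\min}$.

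\emph{Step 1 (closure operator without choice).} For a set $Q$ of policies, define $\mathrm{cl}(Q)$ as the least superset of $Q$ that is closed under $\mix$ and under safe-action completion. Because the horizon is finite, this closure is reached after finitely many alternations of the two operations, so it is explicitly definable. The key finiteness observation is that every policy in $\mix(\bigcup_i Q_i)$ is determined by finitely many histories, each consistent with some member. Therefore, for an increasing chain $Q_0 \subseteq Q_1 \subseteq \cdots$, we get $\mathrm{cl}(\bigcup_i Q_i) = \bigcup_i \mathrm{cl}(Q_i)$.

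\emph{Step 2 (countable greedy construction).} Fix a computable enumeration $\pi_1, \pi_2, \ldots$ of the policies with rational coordinates; it is dense in the policy space. Start with $P_0 = \Allow(\shieldPess)$, which is safe by \cref{thm:pessimistic}. Set $P_i = \mathrm{cl}(P_{i-1} \cup \{\pi_i\})$ if this set consists only of safe policies, and $P_i = P_{i-1}$ otherwise. This uses only dependent choices along a fixed enumeration and is fully determined, so no choice principle is needed. By Step 1, $P_\omega = \bigcup_i P_i$ is closed in the required sense and consists of safe policies. Finally, take the topological closure $\overline{P_\omega}$ in $\mathbb{R}^N$. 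Safety is preserved by continuity of the reachability probability. Mix-closure is preserved because consistency of a history with a policy is a closed condition on finitely many coordinates.

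\emph{Step 3 (saturation), the main obstacle.} Suppose $\pi \notin \overline{P_\omega}$, yet $\mathrm{cl}(\overline{P_\omega} \cup \{\pi\})$ is safe. We must derive a contradiction. The natural attempt is to approximate $\pi$ by rational $\pi_i$ that were rejected. This fails directly, because safety of $\mathrm{cl}(P_{i-1} \cup \{\pi_i\})$ is not open: the threshold $\nu$ may be attained with equality. I would first try to repair this by exploiting the structure of the finite computation tree. Adding $\pi$ only adds new choices at finitely many path positions. By \cref{thm:convex}-style convexity at each history, together with continuity, one can move $\pi$ towards the already-allowed choices along a segment. Points on this segment that are arbitrarily close to $\pi$ keep the enlarged set safe, and these include rational policies. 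Such a rational policy has the form $\pi_i$ for some index $i$, and $\mathrm{cl}(P_{i-1} \cup \{\pi_i\}) \subseteq \mathrm{cl}(\overline{P_\omega} \cup \{\pi\})$ would then be safe, so $\pi_i$ was accepted. Passing to the limit yields $\pi \in \overline{P_\omega}$, the desired contradiction.

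If this boundary issue cannot be handled uniformly, the fallback is to enumerate rational pairs of a path and a choice, rather than whole policies, and to process them in order of path length. The finite depth then allows a backward-induction argument showing that the maximal allowed choice set at each history is a closed convex set determined by rational data.
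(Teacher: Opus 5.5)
Your reduction to a maximal safe set $P$ that is mix-closed and closed under safe-action completion is sound. So is the recursion producing $P_\omega$. The proof breaks at the topological closure.

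Mix-closure is not preserved under limits. Mixing needs policies that agree \emph{exactly} on earlier choices, because those choices are recorded in the histories, whereas approximating sequences agree only approximately. Concretely, let $s_0$ have actions $a\to s_1$ and $b\to s_2$, and let each $s_i$ have one action to a good sink and one to a bad sink. Take $\nu=\tfrac12$ and write a policy as $(p,x,y)$, with $p$ the probability of $a$ and $x,y$ the risky probabilities at $s_1,s_2$. Consider the set given by
\(x\le\frac{1}{2p},\ y=0\) for \(p>\tfrac12\),
\(x=0,\ y\le\frac{1}{2(1-p)}\) for \(p<\tfrac12\), and
\(x=y=0\) for \(p=\tfrac12\).
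It is mix-closed, closed under safe-action completion, and safe. Yet its closure contains $(\tfrac12,1,0)$ and $(\tfrac12,0,1)$, whose history mix $(\tfrac12,1,1)$ has value $1$.

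Your greedy step does not exclude this shape on rational $p$. What gets accepted after root choice $p$ depends only on the order in which the enumeration visits policies with that first coordinate. Hence $\overline{P_\omega}$ need not be the allow set of any shield (allow sets are mix-closed by \cref{prop:mixclosed}), and $\mix(\overline{P_\omega})$ may be unsafe.

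Step 3 carries the real content of the lemma, and it is not established. The proposed repair fails for the same reason:
\begin{itemize}
\item A rational $\pi_i$ near an irrational $\pi$ differs from $\pi$ at some path $\tau$. Below $\tau$ the shield therefore sees different histories for $\pi_i$ and $\pi$. So $\pi_i\in\mathrm{cl}(\overline{P_\omega}\cup\{\pi\})$ does not follow, and safety of adding $\pi$ says nothing about adding $\pi_i$.
\item \cref{thm:convex} interpolates at one fixed history, which a segment in policy space does not respect. Such a segment may also contain no rational point near $\pi$.
\end{itemize}

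The missing idea is a \emph{uniform} rule for splitting the remaining risk among sibling branches. The enumeration never reaches histories with irrational earlier choices. Moreover, maximal allowed sets are not unique: at $p=\tfrac12$ above, every split $x\le c$, $y\le 1-c$ is maximal. So greedy steps plus limits cannot select a coherent split, and your fallback does not provide one.

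The paper avoids limits altogether. It unfolds the acyclic MDP into a finite tree (one state per path) and constructs a saturated shield directly, LP-style, as for optimally permissive controllers.

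A choice-free fix within your own framework is an explicit budget shield:
\begin{itemize}
\item Put $\beta(s_0)=\min(\nu,V_{\max}(s_0))$.
\item Allow $d$ at $h$ iff $Q_{\min}(\last(h),d)\le\beta(h)$.
\item Give each positive-probability successor $h\,d\,\alpha\,s'$ the budget $V_{\min}(s')+\lambda\,(V_{\max}(s')-V_{\min}(s'))$. Here $\lambda\in[0,1]$ is the common factor for which the probability-weighted sum equals $\min(\beta(h),Q_{\max}(\last(h),d))$.
\item Give zero-probability successors the budget $V_{\max}(s')$.
\end{itemize}
Safety and saturation then follow by finite backward induction.
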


\noindent
The challenge in showing \cref{thm:saturatedexists} is that the set of finite paths and the set of choices are infinite, yielding an infinite number of policies.

\section{Constructed Shields}
\label{sec:selfconstructing}
Shielding safely without being overly conservative requires deciding which choices to allow.
Suppose that we have a sequence of history-choice pairs \(J\), 
e.g., from execution logs. The goal is to obtain a shield that (1) is safe, (2) does not transform the choice for as many history-choice pairs in \(J\) as possible, and (3) is efficiently constructed.
To do this, we \emph{construct} a shield by iteratively adding history-choice pairs while ensuring that that the resulting shield is still safe.

\begin{example}
    \label{ex:constructed}
    Consider the MDP in \cref{fig:mdp} and the safety threshold \(\nu = 0.5\). We start from a conservative shield \(\shield_0\) that only allows optimally safe policies.
    Let \(J = ((s_0 \Dirac_\varepsilon \varepsilon s_1, \Dirac_\beta), (s_0 \Dirac_\varepsilon \varepsilon s_2, \Dirac_\delta), \ldots)\).
    We add \((s_0 \Dirac_\varepsilon \varepsilon s_1, \Dirac_\beta)\) to \(\shield_0\). We obtain a safe shield \(\shield_1\) with \(\shield_1(s_0 \Dirac_\varepsilon \varepsilon s_1, \Dirac_\beta) = \Dirac_\beta\).
    Now, we add the second history-choice pair to \(\shield_1\). The resulting shield is unsafe, as it allows going to \(\bad\) with probability one. Thus, we would continue with \(\shield_1\).
\end{example}

\begin{figure}[t]
\centering
\includegraphics{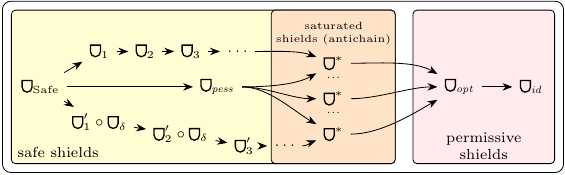}
\caption{Illustration of the lattice of shields. Here, \(X \rightarrow Y\) means \(\Allow(X) \subseteq \Allow(Y)\).}
\label{fig:shields}
\end{figure}

\subsection{A Lattice Of Shields}
We order (canonical) shields by permissiveness. Canonical shields form a lattice by the sets of their allowed policies.
Let \(\mathrm{CShields}\) be the set of canonical shields.

\begin{definition}[Lattice of Canonical Shields]
    \label{def:lattice}
    We define the \emph{lattice of canonical shields} \(L_{\shield} \coloneq (\mathrm{CShields}, \sqsubseteq)\) using \(\shield \sqsubseteq \shield'\) iff \( \Allow(\shield) \subseteq \Allow(\shield')\). 
\end{definition}
We characterize the lattice's join and meet operations:
\begin{restatable}{theorem}{thmlattice}
    \label{thm:lattice}
    \(L_{\shield} = (\mathrm{CShields}, \sqsubseteq)\) is a complete lattice, where for \(\shield, \shield' \in L\):
    \begin{align*}
        (\shield \sqcup \shield')(h,d) &= d  \; \text{ iff } \; \shield(h,d)=d \text{ or } \shield'(h,d)=d, \\
        (\shield \sqcap \shield')(h,d) &= d  \; \text{ iff } \; \shield(h,d)=d \text{ and } \shield'(h,d)=d.
    \end{align*}
\end{restatable}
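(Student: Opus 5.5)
The plan is to transport the problem to sets of policies via \cref{thm:existsshield}. Call a set \(P \subseteq \Pi_\M\) \emph{admissible} if it is \(\mix\)-closed and satisfies the saturation condition of \cref{thm:existsshield}: for every history \(h\) consistent with some policy in \(P\) and every \(d \in \Distr(\SafeAct(\last(h)))\), some \(\pi \in P\) is consistent with \(h\) and has \(\pi(\hpath(h)) = d\).

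First I verify that \(\shield \mapsto \Allow(\shield)\) is an order isomorphism from \(\mathrm{CShields}\) onto the admissible sets ordered by inclusion. Three facts are needed.
\begin{itemize}
\item The allow set of a canonical shield is admissible. It is \(\mix\)-closed by \cref{prop:mixclosed}. Saturation follows from canonicity (1): on a history \(h\) reachable under \(\Allow(\shield)\), any safe choice \(d\) satisfies \(d|_{\SafeAct(\last(h))} = d\), so it is returned unchanged. Modifying an allowed policy at \(\hpath(h)\) and taking safe choices afterwards then stays allowed; this needs a small check.
\item Every admissible set is the allow set of a unique canonical shield, which is exactly \cref{thm:existsshield}.
\item The order \(\sqsubseteq\) is, by \cref{def:lattice}, inclusion of allow sets.
\end{itemize}
So \(L_\shield\) is isomorphic to the poset of admissible sets. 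To get a complete lattice, it suffices to show that admissible sets are closed under arbitrary intersections. The top element is \(\Pi_\M\), which is trivially admissible. The empty intersection then gives the top element, and joins exist as intersections of all upper bounds.

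For an intersection \(\bigcap_i P_i\), \(\mix\)-closure follows because \(\mix\) is monotone: \(\mix(\bigcap_i P_i) \subseteq \bigcap_i \mix(P_i) = \bigcap_i P_i\). Saturation is the subtle step. If \(h\) is consistent with some \(\pi \in \bigcap_i P_i\), each \(P_i\) provides a witness \(\pi_i\), but these witnesses may differ. To get one common witness, I build a policy \(\hat\pi\) that follows \(\pi\) on histories not extending \(h\), plays \(d\) at \(\hpath(h)\), and plays safe-action projections of choices from \(\pi\) afterwards. Then I check that every history consistent with \(\hat\pi\) is consistent with some member of each \(P_i\), using that member's saturation repeatedly along the history. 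Hence \(\hat\pi \in \mix(P_i) = P_i\) for all \(i\). I expect this to be the main obstacle: one has to be careful about consistency of histories versus paths and about the policies' behaviour on unreachable histories.

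Finally I establish the explicit formulas, taking the meet first. The admissible set \(\Allow(\shield) \cap \Allow(\shield')\) corresponds to the shield that allows \(d\) at \(h\) iff both shields do. On a reachable history, the choice \(d\) is allowed by the intersection shield iff some policy consistent with \(h\) in the intersection plays \(d\) there. This happens iff both \(\shield(h,d) = d\) and \(\shield'(h,d) = d\), by gluing witnesses through \(\mix\)-closure as above.

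For the join, I show directly that the pointwise ``or'' shield is canonical. Its allow set is the \(\mix\)-closure of \(\Allow(\shield) \cup \Allow(\shield')\) intersected with the saturated policies, and it is the least admissible superset. Any upper bound must allow each choice that either shield allows on reachable histories, which gives minimality. On non-reachable histories, canonicity (2) forces the projection, and I need the convention that the ``iff'' is read for histories reachable under the respective allow set.
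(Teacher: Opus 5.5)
Your proposal is correct and follows essentially the paper's route: both move along the order isomorphism of \cref{thm:existsshield} to \(\mix\)-closed policy sets that satisfy the safe-choice saturation condition, and both get completeness from their closure-system structure; your common-witness construction actually proves that saturation is preserved under intersections, which the paper only asserts. Two small repairs are needed: restrict to nonempty sets, since \(\emptyset\) is vacuously admissible but is not the allow set of any canonical shield (every canonical shield allows all policies playing only safe choices, so intersections of allow sets are never empty); and drop the reachability convention, because on a history unreachable under \(\Allow(\shield)\) canonicity gives \(\shield(h,d)=d\) exactly when \(\mathrm{Supp}(d)\subseteq\SafeAct(\last(h))\), in which case every canonical shield returns \(d\), so both formulas hold at all histories.
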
%
\begin{restatable}{lemma}{wholeset}
    The supremum (resp.\ infimum) of \(L_{\shield} = (\mathrm{CShields}, \sqsubseteq)\) is:
    \begin{align*}
        \bigsqcup L_{\shield} &= \shield_{\mathrm{id}}, \; & \;
        \bigsqcap L_{\shield} &= \shieldSafe.
    \end{align*}
\end{restatable}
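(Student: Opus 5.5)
The lemma has two halves: the top of $L_{\shield}$ is $\shield_{\mathrm{id}}$ and the bottom is $\shieldSafe$. Here $\shield_{\mathrm{id}}(h,d)=d$ is the identity shield. For each half I will (i) check that the claimed shield is canonical, so that it lies in $\mathrm{CShields}$, and (ii) compare its allow set with that of an arbitrary canonical shield. Since $\sqsubseteq$ is defined through allow sets, (ii) settles the extremal property. Canonical shields are uniquely determined by their allow sets (\cref{thm:existsshield}), so the extremal element is the shield itself and not merely one of several representatives.

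\textbf{Supremum.} Canonicity of $\shield_{\mathrm{id}}$ goes as follows. Condition (1) holds trivially, since $\shield_{\mathrm{id}}(h,d)=d\in\{d,d|_{\SafeAct(\last(h))}\}$. For condition (2), I first show $\T_{\shield_{\mathrm{id}}}(\pi)=\pi$ for every $\pi$, by induction on path length in \cref{def:policytransformer}. Hence $\Allow(\shield_{\mathrm{id}})=\Pi_\M$. Every history $h$ is consistent with some policy: pick any $\pi$ with $\pi(\hpath(h|_{k-1}))=\choice(h,k)$ on the prefixes of $h$, which is possible because these paths are distinct prefixes and $d_k\in\Distr(\Act(s_{k-1}))$. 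So condition (2) is vacuous. Finally, $\Allow(\shield)\subseteq\Pi_\M=\Allow(\shield_{\mathrm{id}})$ for every shield $\shield$, which makes $\shield_{\mathrm{id}}$ the top element.

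\textbf{Infimum.} Condition (1) for $\shieldSafe$ holds by definition. I then identify $\Allow(\shieldSafe)$ as the set $P_{\Safe}$ of policies $\pi$ such that $\mathrm{Supp}(\pi(\tau))\subseteq\SafeAct(\last(\tau))$ for every path $\tau$ reachable under $\pi$.

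To see why, induct along paths. If $\pi(\tau)$ is supported on safe actions, then $\pi(\tau)|_{\SafeAct}=\pi(\tau)$ because the normalisation constant is $1$. Conversely, if $\pi(\tau)$ puts mass outside $\SafeAct$, the restriction differs from $\pi(\tau)$.

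A delicate point is unreachable paths. A policy $\pi$ with an unsafe choice on an unreachable $\tau$ still has $\T_{\shieldSafe}(\pi)(\tau)\neq\pi(\tau)$, because the transformer is defined on all finite paths. So the correct description is $\Allow(\shieldSafe)=\{\pi\mid \forall\tau.\ \mathrm{Supp}(\pi(\tau))\subseteq\SafeAct(\last(\tau))\}$. For condition (2), let $h$ be inconsistent with every allowed policy. Then $\shieldSafe(h,d)=d|_{\SafeAct}$ holds anyway, so (2) is satisfied.

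\textbf{Main obstacle.} The remaining step is to show $\Allow(\shieldSafe)\subseteq\Allow(\shield)$ for every canonical $\shield$. Fix $\pi\in\Allow(\shieldSafe)$ and prove by induction on $|\tau|$ that $\T_{\shield}(\pi)(\tau)=\pi(\tau)$.

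Assume equality holds on all strict prefixes of $\tau$. Then the history $h_t$ built by the transformer is exactly the $\pi$-history of $\tau$. Canonicity gives $\shield(h_t,\pi(\tau))\in\{\pi(\tau),\pi(\tau)|_{\SafeAct}\}$. Because $\pi(\tau)$ is supported on $\SafeAct(\last(\tau))$, both options equal $\pi(\tau)$, which closes the induction.

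This is the crux, and I expect it to be short once the convention for $d|_X$ is checked: when $\mathrm{Supp}(d)\subseteq X$ we have $d|_X=d$. Here $\SafeAct(s)\neq\emptyset$ because a safety-optimal policy exists, and that policy uses only safe actions. So $\shieldSafe$ is the least element and $\bigsqcap L_{\shield}=\shieldSafe$. Completeness of the lattice, from \cref{thm:lattice}, guarantees that these bounds are the supremum and infimum of the whole of $L_{\shield}$.
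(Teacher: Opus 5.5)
Your proof is correct, and it takes a more direct route than the paper. The paper gives no separate proof of this lemma; it falls out of the proof of \cref{thm:lattice}. There, the $\mix$-closed policy sets that allow all safe actions form a sublattice whose top is $\Pi_\M$ and whose bottom is described as ``the set of all safety-optimal policies''. The order isomorphism of \cref{thm:existsshield} then carries these to $\shield_{\mathrm{id}}$ and $\shieldSafe$.

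You instead stay at the level of shields. You check that both candidates are canonical and compute their allow sets. You then use canonicity condition~(1) pointwise to get $\Allow(\shieldSafe)\subseteq\Allow(\shield)$ for every canonical $\shield$. You need \cref{thm:existsshield} only for antisymmetry of $\sqsubseteq$.

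The paper's route is shorter once the lattice machinery is in place, and it shows structurally why these are the extremes. Yours is self-contained and more precise: it pins down $\Allow(\shieldSafe)$ as the policies that use only safe actions on \emph{every} finite path. That set is in general strictly smaller than the set of safety-optimal policies. Safety-optimality does not constrain choices on probability-zero paths, yet $\shieldSafe$ still transforms unsafe choices there. Your remark about unreachable paths is exactly where the paper's description is loose.

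Two simplifications are possible. The induction in your main step is unnecessary: condition~(1) holds for every history, and only $\last(h_t)=\last(\tau)$ matters. The final appeal to completeness is also superfluous, since a greatest (least) element of a poset is automatically its supremum (infimum).
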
%
We illustrate this lattice in \cref{fig:shields}. In this lattice, the set of saturated canonical shields is an antichain. Let \(\mathrm{SatShields}\) be the set of saturated shields.
\begin{restatable}{theorem}{pessimisticsat}
    \label{thm:pessimisticsat}
    We have \(\shield_{\mathit{pess}} \sqsubseteq \bigsqcap \mathrm{SatShields}\) and  \(\shield_{\mathit{opt}} = \bigsqcup \mathrm{SatShields}\).
\end{restatable}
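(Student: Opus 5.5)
The theorem has two parts, and I will treat them separately, working throughout with allow sets of canonical shields. I use \cref{thm:existsshield} to identify a canonical shield with its allow set. I use \cref{thm:lattice} to read meets and joins as intersections and unions of the allowed (history, choice) pairs.

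\emph{First claim, $\shield_{\mathit{pess}} \sqsubseteq \bigsqcap \mathrm{SatShields}$.} It suffices to show $\Allow(\shieldPess) \subseteq \Allow(\shield)$ for every saturated $\shield$. Fix a saturated $\shield$ and consider the candidate $\shield' \coloneq \shield \sqcup \shieldPess$. I will show that $\shield'$ is still safe. Then \SaturatedPermisiveness{} applied to any $\pi \in \Allow(\shieldPess)\setminus\Allow(\shield)$ gives a contradiction, since $\Allow(\shield)\cup\{\pi\}\subseteq\Allow(\shield')$.

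The safety of the join is the core of this part. Take any policy $\pi$ and its transform $\pi'=\T_{\shield'}(\pi)$. Along each history, $\pi'$ plays choices allowed either by $\shield$ or by $\shieldPess$. The key invariant of the pessimistic shield, used in the proof of \cref{thm:pessimistic}, is this: a choice is allowed only if the incurred safety $\safety(h,d)$ stays at least $b_{\max}$. This means that even if every other branch of the tree behaves as badly as $V_{\max}$, the total probability of reaching $\bad$ stays at most $\nu$. I will argue that the union preserves this. On the subtree rooted at the first history where $\pi'$ deviates from $\T_{\shield}(\pi)$, that deviation was pessimistically allowed. So the global risk is bounded by $\nu - $ (incurred safety) $+ $ (worst-case contribution) $\le \nu$. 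Everywhere else, $\pi'$ coincides with a $\shield$-transformed policy, which is safe, and that policy's contribution on each branch is at most the $V_{\max}$ bound used there.

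I expect this step to be the main obstacle. Decisions along one path depend on which shield justified earlier steps, and the pessimistic accounting must be checked when the two shields interleave along a single path. My first approach is a sweep over the tree in order of depth. I replace, history by history, the choices of a safe $\shield$-shielded policy with pessimistic ones. Each replacement happens only when the incurred-safety budget certifies that the worst-case continuation keeps the total below $\nu$, so safety is preserved.

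\emph{Second claim, $\shieldOpt = \bigsqcup \mathrm{SatShields}$.} I will prove two inclusions.
\begin{compactitem}
\item $\sqsupseteq$. Every saturated shield satisfies \SafetyStrong. So every choice it allows on a history consistent with its allow set is consistent with some safe policy, namely the transformed policy itself. Such a safe policy has incurred risk at most $b_{\min}$ along every path, because the risk accumulated on one path is a lower bound on the total probability minus $V_{\min}(s_0)$. Hence $\shieldOpt$ allows that choice, and by \cref{thm:lattice} the join of all saturated shields is below $\shieldOpt$.
\item $\sqsubseteq$. Take any $(h,d)$ allowed by $\shieldOpt$, that is, with $\risk(h,d)\le b_{\min}$. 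Build the policy that follows the choices of $h$, plays $d$ at $h$, and is safety-optimal everywhere else. Its reachability probability equals $V_{\min}(s_0)+\risk(h,d)\le\nu$, so it is safe. Closing this policy under mixing with safe actions gives, via \cref{thm:existsshield}, a safe canonical shield allowing $(h,d)$. That shield extends to a saturated shield by the Zorn argument behind \cref{thm:saturatedexists}, since safe canonical shields form chains whose unions are safe. Therefore $(h,d)$ is allowed by some saturated shield, and hence by the join.
\end{compactitem}
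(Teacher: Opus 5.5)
The gap is in the first part, at the step you flag yourself: you never establish that $\shield\sqcup\shieldPess$ is safe. Several things go wrong in the sketch:
\begin{itemize}
\item The bound $\nu-\text{(incurred safety)}+\text{(worst-case contribution)}$ is not the one the lemma gives.
\item A depth-ordered sweep over infinitely many histories would need a limit argument that you do not supply.
\item The safety of $\T_{\shield}(\pi)$ on the remaining branches plays no role.
\end{itemize}

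The missing observation is that \cref{lemma:maxrisk} is already a global statement about the whole computation tree. For every policy $\pi'$ and every history $h$ consistent with $\pi'$, $V_{\pi'}(s_0)\le V_{\max}(s_0)-\safety(h,\pi'(\hpath(h)))$, whatever $\pi'$ does off $\hpath(h)$. Moreover, $\safety$ depends only on the choices recorded in $h$, not on which shield justified them, so there is no interleaving to control.

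Concretely, let $\pi'=\T_{\shield\sqcup\shield_{\mathit{pess}}}(\sigma)$ and consider two cases.
\begin{itemize}
\item Suppose at some history $h$ consistent with $\pi'$ the played choice $d=\pi'(\hpath(h))$ satisfies $\shield(h,d)\neq d$. Then $d$ is not supported on $\SafeAct(\last(h))$, because canonical shields return such choices unchanged. So the join passed $d$ through unchanged, and $d$ is allowed by $\shieldPess$. Hence $\safety(h,d)\ge b_{\max}$, and $V_{\pi'}(s_0)\le V_{\max}(s_0)-b_{\max}=\nu$.
\item Otherwise $\shield(h,\pi'(\hpath(h)))=\pi'(\hpath(h))$ for every consistent $h$. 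Then $\pi'\in\Allow(\shield)\subseteq\T_{\shield}(\Pi_\M)$, which is safe.
\end{itemize}

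With this filled in, your route is correct and slightly more general than the paper's. It shows that $\shield\sqcup\shieldPess$ is safe for every safe canonical $\shield$, so every maximal such shield lies above $\shieldPess$. The paper avoids the join altogether. For $\pi\in\Allow(\shieldPess)\setminus\Allow(\shield^*)$, it takes a history $h$ consistent with $\pi$ at which $\shield^*$ blocks $d=\pi(\hpath(h))$, and notes that $d$ must be unsafe by canonicity. The same lemma then shows every policy through $(h,d)$ is safe. So additionally allowing this single pair yields a safe shield strictly above $\shield^*$, contradicting saturation.

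Your second part matches the paper in substance: \cref{lemma:minrisk} gives one inclusion, and safety-optimal completion plus the Zorn extension gives the other. You phrase it pointwise on history--choice pairs rather than path by path. The one claim you leave unjustified is that closing the completed policy under safe deviations stays safe. This follows because every summand of $\risk$ is nonnegative, since $Q_{\min}(s,d)\ge V_{\min}(s)$. Hence each truncation of $hd$ followed by safe actions has value $V_{\min}(s_0)+\risk(h|_j,d_{j+1})\le V_{\min}(s_0)+\risk(h,d)\le\nu$.
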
%
\begin{conjecture}
    We have \(\shield_{\mathit{pess}} = \bigsqcap \mathrm{SatShields}\).
\end{conjecture}

\subsection{Offline Shielding}
As motivated at the start of the section, our goal is to build a safe shield that allows as many of the given history-choice pairs as possible. The join operation from \cref{thm:lattice} gives us a principled way to combine shields. A natural strategy is thus to define minimal shields allowing only one history-choice pair and compose them with the lattice joins. Our minimal building blocks are the \emph{point shields}. They allow only one history-choice pair, together with the choices along the history, to ensure that this pair is reached.
\begin{definition}[Point Shield]
\label{def:pointshield}
    Given \((h,d) \in \Hist(\M) \times \Distr(\Act)\), the \emph{point shield on \((h,d)\)} is the following canonical shield:
    \begin{align*}
        \shield_{(h, d)}(h', d') \coloneq \begin{cases}
            d' & \text{ if } h'd' \text{ is a prefix of } hd \text{ or } h'd' = hd,\\
            d'|_{\SafeAct(\last(h'))} &  \text{ otherwise.}
        \end{cases}
    \end{align*}
\end{definition}
\begin{example}
    Consider the point shield \(\shield_1 = \shield_{(s_0 \Dirac_\varepsilon \varepsilon s_1, \Dirac_\beta)}\) (\cref{ex:constructed}).
    We indeed have \(\shield_1(s_0 \Dirac_\varepsilon \varepsilon s_1, \Dirac_\beta) = \Dirac_\beta\).
\end{example}
Because safety is not preserved under joins, we check \SafetyStrong{} after each join operation. To verify safety of the resulting shield, we compute the \emph{(safety) value} of the resulting shield.
\begin{definition}[Value of Shield]
    \label{def:valueofshield}
    Given a finite set of history-choice pairs \(J \subseteq \Hist(\M) \times \Distr(Act)\), let \(\shield = \bigsqcup_{(h,d) \in J} \shield_{(h,d)}\).
    We define \emph{the value \(V_{\shield}(\hat{h})\) of \(\shield\) for history \(\hat{h} \in \Hist(\M)\)} as:
    \begin{align*}
        V_{\shield}(\hat{h}) &\coloneq \max \; \{Q_{\shield}(\hat{h}, d) \mid (\hat{h},d) \in J'\} \cup \{ V_{\min}(\last(\hat{h})) \}, \text{where }\\
         J' &\coloneq J \cup
        \{
        (h', d') \mid (h,d) \in J, \; h'd' \text{ is a prefix of } hd
        \}\;
         \text{and} \\
       Q_{\shield}(h, d) &\coloneq \sum_{\alpha \in \mathrm{Supp}(d)} d(\alpha) \sum_{s' \in S} \mathcal{P}(\last(h), \alpha, s') \cdot V_{\shield}(h \cdot d \cdot \alpha \cdot s').
    \end{align*}
\end{definition}
\cref{def:valueofshield} is essentially computing the value of the worst-case allowed policy using a Bellman operator. The set \(J'\) extends \(J\) with the prefixes. Then, \(V_{\shield}\) maximizes over the available choices, while \(Q_{\shield}\) defines the probability after taking a choice. 
Note that the definition of \(V_{\shield}(\hat{h})\) is recursive, but well-defined: As \(J'\) is finite and the definition recursively invokes \(V_{\shield}\) only on strictly larger histories, the set that is being maximized over will eventually only contain \(V_{\min}(\last(\hat{h}))\).

\begin{restatable}{lemma}{thmshieldsafe}
    \label{thm:shieldsafe}
    \(\shield = \bigsqcup_{(h,d) \in J} \shield_{(h,d)}\) is safe if and only if \(V_{\shield}(s_0) \leq \nu\).
\end{restatable}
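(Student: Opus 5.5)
The plan is to show that \(V_{\shield}(s_0)\) equals \(\sup_{\pi \in \Pi_\M} \Pr_{\M^{\T_{\shield}(\pi)}}(s_0 \vDash \F \bad)\), the worst-case reachability probability over all transformed policies. Since \SafetyStrong{} says exactly that this supremum is at most \(\nu\), the equivalence follows.

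First I describe the shield locally. Because \(\shield\) is a join of canonical point shields, \cref{thm:lattice} gives \(\shield(h',d') = d'\) iff \((h',d') \in J'\), or \(d'\) is already supported on safe actions, or \(h'\) is not a prefix-reachable history of \(J'\). In all other cases it projects to \(\SafeAct(\last(h'))\). Call a history \emph{tracked} if it appears as the history component of some pair in \(J'\). Untracked histories reached under \(\T_{\shield}(\pi)\) are ones the shield has left. From such a history onward, every executed choice is supported on \(\SafeAct\), because either the choice is projected or it was already safe. I will show that in this regime the reach probability from any state \(s\) is exactly \(V_{\min}(s)\), and that the adversary can realise this. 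A policy supported on \(\SafeAct\) keeps \(V_{\min}\) invariant along paths, so it could in principle "postpone" reaching \(\bad\). For an upper bound this does no harm, since the probability is at least \(V_{\min}\) and at most the value of playing safe actions. To pin it down I will use a safety-optimal policy, whose existence is stated in the preliminaries: a policy that always plays safe actions attains no more than the maximum over safe-action policies, and I expect this maximum to equal \(V_{\min}\). This point needs care, because safe actions in end components can yield probability strictly below \(V_{\min}\) but never above it. That is fine for the "\(\le\)" direction. For the "\(\ge\)" direction, the adversary simply follows a safety-optimal policy and attains exactly \(V_{\min}\).

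Next, I do induction on tracked histories, ordered by decreasing length; this is well-founded because \(J'\) is finite. At a tracked history \(\hat h\), the adversary (a policy \(\pi\) chosen for the supremum) can propose any \(d\) with \((\hat h,d) \in J'\), which is executed unchanged. Alternatively, it can propose something else, which results in a safe-action choice and leads to the untracked regime with value \(V_{\min}(\last(\hat h))\) by the previous step. Mixing a \(J'\) choice with other actions does not help the adversary, since a mixed choice is not in \(J'\) and gets projected. Hence the worst-case value at \(\hat h\) is the maximum in \cref{def:valueofshield}, and \(Q_{\shield}\) is the one-step expectation over successors \(\hat h\, d\, \alpha\, s'\), which are either tracked (covered by induction) or untracked (value \(V_{\min}(s')\), matching the definition, where the set in the max reduces to \(V_{\min}(s')\)). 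For the "\(\le\)" direction I will show by the same induction that for every \(\pi\), the conditional reach probability of \(\T_{\shield}(\pi)\) after \(\hat h\) is at most \(V_{\shield}(\hat h)\). I will write the executed choice as a convex combination and use linearity of the one-step expectation. For the "\(\ge\)" direction I will build \(\pi\) greedily, choosing the argmax at each tracked history and following a safety-optimal policy elsewhere; this attains \(V_{\shield}(s_0)\), so the supremum is a maximum.

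The main obstacle is the untracked/safe-action regime. I must argue carefully that once the history leaves \(J'\), every transformed policy reaches \(\bad\) with probability at most \(V_{\min}\). The trick here is that a choice supported on \(\SafeAct\) satisfies \(Q_{\min} = V_{\min}\), so \(V_{\min}\) is a supermartingale-like bound along the run. Combined with the fact that \(V_{\min}(\bad)=1\) is absorbing in the reachability sense, an optional-stopping or finite-horizon truncation argument shows that the probability of reaching \(\bad\) within \(n\) steps is bounded by \(V_{\min}(s)\) for every \(n\). I then take the limit \(n \to \infty\) to get the bound.
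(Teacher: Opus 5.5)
This is correct and is essentially the paper's proof. Your \(\leq\) direction is its backward induction \(V_{\T_{\shield}(\pi)}(\tau_t) \leq V_{\shield}(h_t)\), your \(\geq\) direction is its greedy argmax policy (allowed by \(\shield\) and attaining \(V_{\shield}(s_0)\)), and your finite-horizon bound in the untracked regime justifies the base case that the paper only asserts.

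There are two slips to fix. First, at an untracked history the join returns \(d'\) only when \(d'\) is already supported on \(\SafeAct(\last(h'))\), so your third disjunct is wrong, as your own next sentence shows. Second, since no policy can go below \(V_{\min}\), safe-action continuations attain exactly \(V_{\min}\); your end-component caveat describes a \(V_{\max}\) phenomenon and does not apply here.
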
%
\begin{example}
    Consider the shield \(\shield' = \shield_{(s_0 \Dirac_\varepsilon \varepsilon s_1, \Dirac_\beta)} \sqcup \shield_{(s_0 \Dirac_\varepsilon \varepsilon s_2, \Dirac_\delta)}\) (\cref{ex:constructed}).
    This shield allows policies taking \(\Dirac_\beta\) at \(s_1\) and \(\Dirac_\delta\) at \(s_2\), so the worst-case probability to reach \(\bad\) is one. Indeed, we have \(V_{\shield'}(s_0)=1\).
\end{example}
We allow an additional history-distribution pair \emph{only} if the resulting shield is strongly safe, which we formalize as the \emph{safe extension} operator.
\begin{definition}[Safe Extension]
    \label{def:safeextension}
    Let \(\shield, \shield' \in L_{\shield}\). Then the \emph{safe extension} of \(\shield\) by \(\shield'\) is the following (non-commutative and non-associative!) operation:
    \[
        \shield \oplus_{\mathit{safe}} \shield' \coloneq \begin{cases}
            \shield \sqcup \shield' & \text{if } \shield \sqcup \shield' \text{ is safe,}\\
            \shield & \text{otherwise.}
        \end{cases}
    \]
\end{definition}%
We construct a shield by iteratively applying the safe extension.
\begin{definition}[Constructed Shields]
    \label{def:sequence}
    Given a sequence of history-choice pairs \(J = ((h_0, d_0), \ldots,  (h_t, d_t))\), and the shield \shieldSafe from \cref{ex:nonprobabilistic}, we set:
    \begin{align*}
        \shield_0 \coloneq \shieldSafe, \; \; \shield_{i+1} \coloneq \shield_{i} \oplus_{\mathit{safe}} \shield_{(h_i, d_i)}.
    \end{align*}
\end{definition}
All shields in this sequence are safe. Reconsider \cref{fig:shields}. We illustrate the constructed shields in the upper part of the set of safe shields. We converge towards a saturated shield \(\shield^*\) in the limit, but possibly not in finite time. Formally, for all history-choice pairs involved in the construction of \(\shield_{n+1}\), the shield \(\shield_{n+1}\) makes the same allow \emph{and} block decisions as \(\shield^*\). 
\begin{restatable}{theorem}{thmagreement}
\label{thm:agreement}
    Given a sequence of history-choice pairs \(J = ((h_0, d_0), \ldots,  (h_t, d_t))\), let \(\shield_{t+1}\) be the shield from \cref{def:sequence}. There exists a saturated shield \(\shield^*\) such that for all \(0 \leq i \leq t\):
    \(
        \shield_{t+1}(h_i, d_i) = \shield^*(h_i, d_i).
    \)
\end{restatable}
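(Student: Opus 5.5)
The plan is to extend \(\shield_{t+1}\) to a saturated shield and to show that this extension cannot reverse any decision made on the pairs of \(J\). Since \(\shield_{t+1}\) is safe and canonical, the first step is to apply the existence argument behind \cref{thm:saturatedexists}, relativized to the sub-poset of safe canonical shields above \(\shield_{t+1}\). Concretely, we take Zorn's lemma on the set \(\{\shield \in \mathrm{CShields} \mid \shield_{t+1} \sqsubseteq \shield,\ \shield \text{ safe}\}\). For a chain, the candidate upper bound is its join from \cref{thm:lattice}, whose allow set is the mix-closure of the union of the allow sets. Safety of this join is exactly the technical core of \cref{thm:saturatedexists}, and we reuse it. The proof of that theorem starts from \shieldSafe, but nothing in it depends on the starting point, so the same argument gives a maximal safe shield \(\shield^* \sqsupseteq \shield_{t+1}\). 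Maximality among safe shields is precisely \SaturatedPermisiveness: if some safe \(\shield'\) allowed \(\Allow(\shield^*)\cup\{\pi\}\), then \(\shield^* \sqcap \shield'\)-style reasoning (more simply, \(\shield^* \sqcup \shield'\sqsubseteq\shield'\)) would give a strictly larger safe canonical shield. Hence \(\shield^*\) is saturated.

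Next we check agreement on each pair \((h_i,d_i)\), which falls into two cases according to \cref{def:sequence}.

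\textbf{Case 1:} the extension at step \(i\) succeeded, so \(\shield_{(h_i,d_i)} \sqsubseteq \shield_{i+1} \sqsubseteq \shield_{t+1}\). Then \(\shield_{t+1}\) allows some policy consistent with \(h_i\) that plays \(d_i\) at \(h_i\), so \(\shield_{t+1}(h_i,d_i)=d_i\). The same holds for \(\shield^*\), because allow sets only grow. Here we use the fact that for canonical shields, \(\shield(h,d)=d\) iff some allowed policy is consistent with \(h\) and plays \(d\) there. This follows from \cref{def:canonical} and \cref{def:policytransformer}.

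\textbf{Case 2:} the extension at step \(i\) was rejected, meaning \(\shield_i \sqcup \shield_{(h_i,d_i)}\) is unsafe. First, \(\shield_{t+1}(h_i,d_i) = d_i|_{\SafeAct(\last(h_i))}\). Otherwise \(\shield_{(h_i,d_i)} \sqsubseteq \shield_{t+1}\), so \(\shield_i\sqcup\shield_{(h_i,d_i)} \sqsubseteq \shield_{t+1}\), and this smaller shield would be safe. That in turn requires safety to be downward closed in the lattice, which holds since \(\Allow\) shrinking shrinks \(\T_\shield(\Pi_\M)\) appropriately: every transformed policy of a smaller canonical shield can be realized by a larger one on a policy that is modified only off its own allow set. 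This needs a short argument, but it is routine. Similarly, if \(\shield^*(h_i,d_i)=d_i\), then \(\shield_{(h_i,d_i)}\sqsubseteq\shield^*\) and \(\shield_i\sqsubseteq\shield^*\), so the unsafe join lies below the safe \(\shield^*\), a contradiction. (There is a corner case where \(d_i\) is already supported on safe actions; then both shields return \(d_i\) and they agree trivially.)

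I expect the main obstacle to be the Zorn step, namely safety of the join of a chain of safe shields. The mix-closure may allow policies that no single shield in the chain allows, since they combine decisions from arbitrarily deep histories. I would handle this by approximating reachability with finite horizons. For each horizon \(n\), the probability of reaching \(\bad\) within \(n\) steps under a mixed policy depends only on finitely many histories, so it is dominated by some shield in the chain and is therefore \(\le \nu\). Taking \(n\to\infty\) then yields safety, exactly as we would cite from the proof of \cref{thm:saturatedexists}. The downward-closure of safety used in Case 2 is the secondary point that needs care.
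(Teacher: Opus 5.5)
Your proposal is correct and follows the paper's proof. Both use the Zorn argument of \cref{thm:saturatedexists}, relativized to safe canonical shields above \(\shield_{t+1}\), to obtain a maximal \(\shield^* \sqsupseteq \shield_{t+1}\). Accepted pairs then stay allowed by monotonicity, and rejected pairs stay blocked because every shield above the unsafe \(\shield_i \sqcup \shield_{(h_i,d_i)}\) is unsafe. The differences are minor: the paper splits on whether \(\shield_{t+1}(h_i,d_i)=d_i\) rather than on whether step \(i\) was accepted, which spares your extra check in Case~2. The downward closure of safety that you flag, which the paper also uses without proof, follows at once from \(\T_{\shield}(\Pi_\M)=\Allow(\shield)\) for canonical shields: projected choices are safe-supported and so pass unchanged.
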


\begin{remark}
\label{remark:convex}
Additionally, closing the allowed distributions under convex combination in \cref{def:safeextension} will still yield safe shields that satisfy \cref{thm:agreement}. This is a consequence of \cref{thm:convex}.
\end{remark}
The safe extension operator from \cref{def:safeextension} is neither commutative nor associative. Therefore, constructed shields strongly depend on the order in which point shields are added. Adding less useful choices early may prevent us from adding useful ones later. Filtering which choices to add early by prepending another shield can, therefore, be beneficial. This is illustrated at the bottom of \cref{fig:shields}.

\begin{restatable}{lemma}{constructedtime}
    \label{thm:constructedtime}
    Given a constructed shield \(\shield = \bigsqcup_{(h,d) \in J} \shield_{(h,d)}\) for a finite \(J\), \(\shield(h,d)\) can be computed in time \(\mathcal{O}(|h| \cdot |J| \cdot |\Act|)\), while \(V_{\shield}(s_0)\) can be computed in time \(\mathcal{O}(|J| \cdot L \cdot |\Act| \cdot |S|)\), where \(L\) is the length of the longest history in \(J\).
\end{restatable}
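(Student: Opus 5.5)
We treat the two claims separately: first the evaluation of \(\shield(h,d)\), then the computation of \(V_{\shield}(s_0)\).

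\emph{Evaluating \(\shield(h,d)\).} By \cref{thm:lattice}, the join of canonical shields allows a choice iff some component allows it. Hence \(\shield(h,d)=d\) iff there is \((h_j,d_j)\in J\) such that \(hd\) is a prefix of \(h_jd_j\) or equals it; otherwise the result is \(d|_{\SafeAct(\last(h))}\). The algorithm is therefore:
\begin{compactitem}
\item iterate over the at most \(|J|\) pairs;
\item for each pair, compare \(hd\) with the prefix of \(h_jd_j\) of the same length, position by position.
\end{compactitem}
Each position is a state, an action, or a distribution over \(\Act\). Comparing two distributions costs \(\mathcal{O}(|\Act|)\), so one prefix test costs \(\mathcal{O}(|h|\cdot|\Act|)\). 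This gives \(\mathcal{O}(|h|\cdot|J|\cdot|\Act|)\) overall. The fallback projection to \(\SafeAct\) costs only an additive \(\mathcal{O}(|\Act|)\), assuming \(\SafeAct\) is precomputed alongside \(V_{\min}\).

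\emph{Computing \(V_{\shield}(s_0)\).} The plan is to make the recursion of \cref{def:valueofshield} finite and memoized.
\begin{compactitem}
\item Build a prefix tree (trie) over \(J'\). Its nodes are the histories \(\hat h\) that are prefixes of some \(h_j\), annotated with the choices \(d\) for which \((\hat h,d)\in J'\). Since every element of \(J'\) is a prefix of an element of \(J\), the trie has at most \(|J|\cdot L\) history nodes, and it can be built in \(\mathcal{O}(|J|\cdot L\cdot|\Act|)\).
\item For any history \(\hat h\) not in the trie, no pair of \(J'\) starts with \(\hat h\), so \(V_{\shield}(\hat h)=V_{\min}(\last(\hat h))\). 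This is a constant lookup.
\item Evaluate \(V_{\shield}\) bottom-up over the trie, from deepest histories toward \(s_0\). At a node \(\hat h\), for each annotated choice \(d\), compute \(Q_{\shield}(\hat h,d)\) by summing over \(\alpha\in\mathrm{Supp}(d)\) and \(s'\in S\), using either the child node's stored value or \(V_{\min}(s')\). This costs \(\mathcal{O}(|\Act|\cdot|S|)\) per pair.
\item Take the maximum with \(V_{\min}(\last(\hat h))\).
\end{compactitem}

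\emph{Counting the work.} Every pair in \(J'\) is a pair (prefix of \(h_j\), choice at that position of \(h_jd_j\)), so \(|J'|\le |J|\cdot L\). The total cost is therefore \(\mathcal{O}(|J|\cdot L\cdot|\Act|\cdot|S|)\).

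The main obstacle is this accounting. Naively, distinct pairs sharing a history could seem to multiply the work, and summing over all \(s'\) could seem to invoke the recursion on unboundedly many histories. Both are resolved by charging each \(Q_{\shield}\) evaluation to a single element of \(J'\), bounding \(|J'|\) by \(|J|\cdot L\), and noting that all off-trie successors collapse to a \(V_{\min}\) lookup. Memoization ensures each trie node is evaluated only once.
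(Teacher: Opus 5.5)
Your proposal is correct and follows the paper's proof. Evaluating \(\shield(h,d)\) is a linear scan over \(J\), and \(V_{\shield}(s_0)\) is computed by bottom-up dynamic programming over the histories occurring in \(J'\); your trie is exactly the paper's DAG on \(J'\), with \(|J'| = \mathcal{O}(|J|\cdot L)\) and \(\mathcal{O}(|\Act|\cdot|S|)\) work per evaluation of \(Q_{\shield}\). Your accounting is slightly more careful than the paper's, since you test for prefixes of elements of \(J\) rather than membership in \(J\), and you explicitly charge \(\mathcal{O}(|\Act|)\) per comparison of distributions.
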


\subsection{Online Shielding}
\label{sec:online}
Offline shielding requires a given data set of history-choice pairs. We now consider shielding based on a growing set of history-choice pairs that we observe \emph{while} shielding. In particular, we start without any data, allow only safe policies at any time, and converge to a saturated shield. 

We group the agent's interaction with the MDP into \emph{episodes} \(e_0, e_1, \ldots\), which are histories alongside all choices proposed by the agent. Formally, an \emph{episode} is a sequence of history-choice pairs \(e_i = (s_0, d_0) \cdots (h_t, d_t)\), where each \(h_j\) extends \(h_{j-1}\) by one step.
In episode zero, we use the shield \(\shield_0 \coloneq \shieldSafe\). To shield episode \(i+1\), online shielding takes the previous shield \(\shield_i\) and previous episode \(e_i = (h_0, d_0) \cdots (h_t, d_t)\) and constructs a new offline safe shield:
\[
    \shield_{i+1} \coloneq \shield_i \oplus_{\mathit{safe}} \shield_{(h_0, d_0)} \oplus_{\mathit{safe}} \cdots \oplus_{\mathit{safe}} \shield_{(h_t, d_t)}.
\]
We then use the shield \(\shield_{i+1}\) in episode \(i+1\). We call the entire procedure an \emph{online shield}. Using an online shield is indeed safe.

Note that the online shield blocks history-choice pairs the first time they are played, and only starts allowing them in the next episode. We turn to a potential alternative to the scheme above and ask: \emph{What if we updated the shield \emph{every step} instead of every episode}? This way, we could immediately allow safe history-choice pairs. However, \emph{even though all shields involved in this procedure are safe, updating every step may allow unsafe policies!}
Formally, a policy transformer that updates the shield along a path---in contrast to \cref{def:policytransformer}---does not guarantee a safe policy.
\begin{restatable}{lemma}{thmunsafetransformer}
\label{thm:unsafetransformer}
    Let \(\shield\) be a safe shield constructed using \cref{def:sequence}. The policy transformer \(\T'\) as defined below, allows unsafe policies:
    \begin{align*}
        \T'_{\shield}(\pi)(s_0 \alpha_1 s_1 \cdots \alpha_t s_t) \coloneq\;& \shield_{t+1}(h_t, d_t),\\
        \text{where } h_t \coloneq\;& s_0 \; \T'_{\shield_1}(\pi)(s_0) \; \alpha_1 \; s_1 \; \T'_{\shield_2}(\pi)(s_0\alpha_1 s_1) \; \alpha_2 \; \cdots \; \alpha_t \; s_t, \\
        d_t \coloneq\;& \pi(s_0 \alpha_1 s_1 \cdots \alpha_t s_t), \\
        \text{and } \shield_{0} \coloneq\;& \shield, \; \; \shield_{t+1} \coloneq\; \shield_{t} \oplus_{\mathit{safe}} \shield_{(h_t,d_t)}.
    \end{align*}
\end{restatable}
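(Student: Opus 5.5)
The statement is existential: it suffices to exhibit one MDP, one safe constructed shield \(\shield\), and one policy \(\pi\) such that \(\T'_{\shield}(\pi)\) is unsafe. I would reuse the MDP of \cref{fig:mdpnoshieldsat} with \(\nu = 0.5\). Its only unsafe deterministic policy is the one taking \(\varepsilon\) at \(s_0\), \(\beta\) at \(s_1\) and \(\delta\) at \(s_2\). The guiding idea is that \(\T'\) updates the shield along each path independently. On the branch through \(s_1\), the sequence \(\shield_1,\shield_2,\ldots\) sees only history-choice pairs from that branch, and likewise for \(s_2\). Each branch-local extension is therefore judged safe in isolation, while the two extensions together are unsafe.

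\textbf{Construction.} Take \(\shield \coloneq \shieldSafe\), which is a constructed shield by \cref{def:sequence} with empty \(J\), and which is safe. Let \(\pi\) be the unsafe policy \(\{s_0 \mapsto \Dirac_\varepsilon, s_0\varepsilon s_1 \mapsto \Dirac_\beta, s_0\varepsilon s_2\mapsto \Dirac_\delta\}\) from the proof sketch of \cref{thm:noshieldsat}.

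\textbf{Step 1: the root.} At \(s_0\), \(\varepsilon\) is the only action, so it is safe, and \(\T'_{\shield}(\pi)(s_0)=\Dirac_\varepsilon\). The update \(\shield_1 = \shieldSafe \oplus_{\mathit{safe}} \shield_{(s_0,\Dirac_\varepsilon)}\) only allows safe behaviour, so it is accepted and changes nothing relevant.

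\textbf{Step 2: the branch through \(s_1\).} Consider the path \(s_0\varepsilon s_1\), with \(d_1=\Dirac_\beta\) and \(h_1 = s_0\Dirac_\varepsilon\varepsilon s_1\). By the definition of \(\T'\), the output is \(\shield_2(h_1,d_1)\), where \(\shield_2 = \shield_1 \oplus_{\mathit{safe}} \shield_{(h_1,\Dirac_\beta)}\). The join allows policies that take \(\beta\) at \(s_1\) and safe actions elsewhere. By \cref{thm:shieldsafe}, I would compute \(V_{\shield_2}(s_0) = \tfrac12\cdot 1 + \tfrac12 \cdot V_{\min}(s_2) \le 0.5\), using \(V_{\min}(s_2)=0\) from the figure. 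So the extension is safe and \(\shield_2(h_1,\Dirac_\beta)=\Dirac_\beta\), because \(h_1\Dirac_\beta\) is exactly the point pair.

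\textbf{Step 3: the branch through \(s_2\).} Here the shields used are built from the prefix along \(s_0\varepsilon s_2\) only, so they do not contain the \(s_1\)-pair. The symmetric computation shows that \(\shield_{(s_0\Dirac_\varepsilon\varepsilon s_2,\Dirac_\delta)}\) is safely joinable, and hence \(\Dirac_\delta\) is passed through unchanged. Consequently \(\T'_{\shield}(\pi)=\pi\), which reaches \(\bad\) with probability \(1>\nu\).

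\textbf{Main obstacle.} The delicate point is making precise that \(\shield_t\) in \(\T'\) depends only on the current path. The recursion defines \(\shield_{t+1}\) from \(\shield_t\) and \((h_t,d_t)\) along the single path being evaluated, so it never sees the sibling branch. I would state this explicitly by unfolding the definition separately for the two paths of length one. I would also double-check the values read off \cref{fig:mdpnoshieldsat}: that \(\beta\) and \(\delta\) each lead to \(\bad\) with probability one, that \(\alpha\) and \(\gamma\) lead to \(V_{\min}=0\), and that \(\varepsilon\) branches with probability \(\tfrac12\) each. These are the values assumed in the proof sketch of \cref{thm:noshieldsat}.
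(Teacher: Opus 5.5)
Your proposal is correct and follows the paper's proof essentially verbatim. It uses the same counterexample: the MDP of \cref{fig:mdp} with \(\nu = 0.5\), the base shield \(\shieldSafe\), and the unsafe policy playing \(\beta\) at \(s_1\) and \(\delta\) at \(s_2\). The key observation is also the same: each branch's shield update sees only its own path, so each single extension passes the safety check, and your explicit computation \(V_{\shield_2}(s_0) = 0.5\) via \cref{thm:shieldsafe} just spells out checks the paper leaves implicit.
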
%

\section{Memoryless Shields}
\label{sec:nomem}
We explore constructing shields that do not depend on the (full) history. This has two advantages: The construction of offline/online shields generalizes faster as we decide to allow an action at a state independently of the history. Additionally, the shields can be stored more compactly. We define a \emph{memoryless shield:}
\begin{definition}[Memoryless Shield]
\label{def:memlessshield}
    A memoryless (ML) shield \(\shield\) is a function \(S \times \Distr(\Act) \rightarrow \Distr(\Act)\). 
\end{definition}
Memoryless shields are similar to permissive policies~\cite{DBLP:journals/corr/DragerFK0U15,DBLP:conf/tacas/DavidJLMT15,DBLP:conf/tacas/Junges0DTK16}. 
For deterministic policies, a memoryless pre-shield is essentially the same mathematical object as a permissive policy. For stochastic policies, notions diverge: A memoryless pre-shield is a function \(S \rightarrow 2^{\Distr(\Act)}\), as is the notion of permissive policy used by \cite{DBLP:conf/tacas/Junges0DTK16}.
In contrast, \cite{DBLP:journals/corr/DragerFK0U15} define a stochastic permissive policy as a function \(S \rightarrow {\Distr(2^\Act)}\).
A key difference is that the permissive policy synthesis literature considers an NP-hard problem to find such policies based on global scores, whereas we construct shields based on provided behavior.

Having defined memoryless shields, we define a memoryless variant of \SaturatedPermisiveness{}:

\begin{mdframed}[style=MyFrame,nobreak=true]
    \textbf{Guarantee (ML-P*): ML-Saturated Permissiveness.} For all \(\pi \in \Pi_\M \setminus \Allow(\shield)\), there is no safe memoryless shield \(\shield'\) s.t.\ \(\Allow(\shield) \cup \{\pi\} \subseteq \Allow(\shield')\).
\end{mdframed}

This variant of saturated permissiveness relates to the notion of \emph{optimally permissive controllers}, e.g., as in \cite{DBLP:journals/corr/DragerFK0U15}. Optimality of such controllers refers to a (totally ordered) permissiveness score, while in saturated shields, it refers to a (partially ordered)  set of allowed~policies.

In \appref{app:slidingwindow}{C}, we briefly discuss generalizing these shields towards sliding-window shields, together with an adaptation of saturated permissiveness. Specifically, memoryless (or sliding window) constructed shields are analogously defined using memoryless (or sliding window) \emph{point} shields, adapting \cref{def:pointshield}.

\section{Experiments}
We investigate the performance of the proposed shielding framework with a prototype implementation\footnote{\url{https://doi.org/10.5281/zenodo.19819788}}. We focus on the following questions:
\begin{compactenum}
    \item[\textbf{Q1:}] \emph{How do different shields compare in terms of permissiveness and safety?}

      \item[\textbf{Q2:}] \emph{Can we efficiently compute and query our shields? How much data do we need to construct shields that are more permissive than trivial baselines?}

\end{compactenum}

\paragraph{Implementation.}
 Our shielding framework is implemented in Python and uses Storm~\cite{DBLP:journals/sttt/HenselJKQV22} for model-checking queries.  The framework is integrated into an environment for the construction and simulation of RL-based agents~\cite{hudakfinite} that uses 
 TensorFlow library for efficient agent training. All the experiments were run on a machine equipped with AMD EPYC 9124 16-core CPU and 380GB RAM. Each individual run was restricted to a single CPU core and 16GB RAM.

\paragraph{Benchmarks.}
We consider three grid-based environments: \emph{corridor}, which is a 5x3 grid with slippery cells with a pit to avoid, and  
two variants of a \emph{drone} delivery task (13x8 and 21x15), featuring static obstacles, a zone where the drone can crash into other drones, and target cells. 
Additionally, we consider a variant of the dynamic power management problem (\emph{dpm})~\cite{NPK+02}, where the goal is to maximize the number of incoming requests served before the battery runs out. For each model, we consider three safety thresholds \(\nu\).

\paragraph{Agents.} For each model, we consider three different agents: i)~a greedy agent that optimizes the original model reward and disregards safety, ii)~a timid agent that was trained with modified rewards to prefer safe choices, and iii)~an agent that plays uniformly across the available actions in each state.

\paragraph{Our novel shields.} We consider the following shields: optimistic \shieldOpt (\cref{def:optshield}), pessimistic \shieldPess (\cref{def:pessimisticshield}), offline \shieldOff  (\cref{def:sequence}), online \shieldOnl (\cref{sec:online}), and offline memoryless \shieldMemoryless (\cref{def:memlessshield}).

\paragraph{Baselines.} We consider three baselines: i)~the classical shield \shieldSafe~\cite{DBLP:conf/aaai/AlshiekhBEKNT18} (see \cref{ex:nonprobabilistic}), ii)~the multiplicative \(\delta\)-shield \shieldDelta~\cite{DBLP:conf/concur/0001KJSB20}, and iii) its additive variant \shieldDeltaPlus (for both, see \cref{ex:delta}).

\subsection*{Q1: Comparing Safety and Permissiveness of the Considered Shields}

\begin{table}[ht!]
\renewcommand{\arraystretch}{0.82}%
\setlength{\tabcolsep}{1pt}

\caption{Comparison of different shields. The table reports combinations of the model, agent, threshold, and different shields. For each combination, we report two values: the left value is the safety value of the shielded agent, and the right value is the ratio of allowed actions (higher is better). For each row, the most permissive safe shields are in boldface. A red background means that the safety threshold has been violated. The values for $\shieldOnl$ are obtained using simulations and are thus subject to a statistical imprecision. The second column also reports the safety values of the unshielded agents. 
} %

\scalebox{0.88}{

\begin{tabular}{l@{\hskip 12pt} l@{\hskip 12pt} r@{\hskip 12pt} rr@{\hskip 12pt} rr@{\hskip 12pt} rr@{\hskip 12pt} rr@{\hskip 12pt} rr@{\hskip 12pt}}
\toprule
\multirow{2}{*}{Model} & Agent & \multirow{2}{*}{\(\nu\)} & \multicolumn{2}{c@{\hskip 12pt}}{\(\shieldSafe\)} & \multicolumn{2}{c@{\hskip 12pt}}{\(\shield_{\delta^+}\)} & 
\multicolumn{2}{c@{\hskip 12pt}}{\(\shieldOnl\)} & \multicolumn{2}{c@{\hskip 12pt}}{\(\shieldOff\)} & \multicolumn{2}{c@{\hskip 12pt}}{\shieldMemoryless} \\

& (value) & & \multicolumn{2}{c@{\hskip 12pt}}{\texttt{\textcolor{green!30!black}{SAFE}}} & \multicolumn{2}{c@{\hskip 12pt}}{\texttt{\textcolor{red!80}{UNSAFE}}} & 
\multicolumn{2}{c@{\hskip 12pt}}{\texttt{\textcolor{green!30!black}{SAFE}}} & \multicolumn{2}{c@{\hskip 12pt}}{\texttt{\textcolor{green!30!black}{SAFE}}} & \multicolumn{2}{c@{\hskip 12pt}}{\texttt{\textcolor{green!30!black}{SAFE}}} \\
\midrule

\multirow{9}{*}{corridor} & \multirow{3}{*}{greedy} & 0.05 & \textbf{.000} & \textbf{.020} & \textbf{.000} & \textbf{.020} & \textbf{.000} & \textbf{.020} & \textbf{.000} & \textbf{.020} & \textbf{.000} & \textbf{.020} \\
 &  & 0.1 & .000 & .020 & .000 & .020 & \textbf{.100} & \textbf{.204} & .100 & .202 & .000 & .020 \\
 & (.125) & 0.2 & .000 & .020 & .125 & .735 & .125 & .936 & \textbf{.125} & \textbf{.968} & .125 & .735 \\
 \cmidrule(lr){2-13}
 & \multirow{3}{*}{timid} & 0.05 & \textbf{.000} & \textbf{.020} & \textbf{.000} & \textbf{.020} & \textbf{.000} & \textbf{.020} & \textbf{.000} & \textbf{.020} & \textbf{.000} & \textbf{.020} \\
 &  & 0.1 & .000 & .020 & .000 & .020 & \textbf{.100} & \textbf{.406} & \textbf{.100} & \textbf{.406} & .000 & .020 \\
 & (.125) & 0.2 & .000 & .020 & \textbf{.125} & \textbf{~~{=}1} & .124 & .993 & .125 & .997 & .125 & .925 \\
 \cmidrule(lr){2-13}
 & \multirow{3}{*}{random} & 0.05 & .000 & .116 & .000 & .116 & .049 & .134 & \textbf{.050} & \textbf{.142} & .000 & .116 \\
 &  & 0.1 & .000 & .116 & .000 & .116 & \textbf{.098} & \textbf{.153} & .100 & .148 & .000 & .116 \\
 & (.733) & 0.2 & .000 & .116 & .000 & .116 & .199 & .190 & \textbf{.200} & \textbf{.215} & .000 & .116 \\
 \cmidrule(lr){1-13}
\multirow{9}{*}{dpm} & \multirow{3}{*}{greedy} & 0.01 & .000 & .771 & \cellcolor{red!20}.479 & \cellcolor{red!20}~~{=}1 & .010 & .790 & .010 & .792 & \textbf{.009} & \textbf{.794} \\
 &  & 0.05 & .000 & .771 & \cellcolor{red!20}.479 & \cellcolor{red!20}~~{=}1 & .047 & .839 & \textbf{.050} & \textbf{.848} & .044 & .844 \\
 & (.479) & 0.2 & .000 & .771 & \cellcolor{red!20}.479 & \cellcolor{red!20}~~{=}1 & .062 & .870 & .097 & .908 & \textbf{.155} & \textbf{.924} \\
 \cmidrule(lr){2-13}
 & \multirow{3}{*}{timid} & 0.01 & \textbf{.000} & \textbf{.981} & \cellcolor{red!20}.063 & \cellcolor{red!20}~~{=}1 & .000 & .979 & \textbf{.000} & \textbf{.981} & \textbf{.000} & \textbf{.981} \\
 &  & 0.05 & .000 & .981 & \cellcolor{red!20}.063 & \cellcolor{red!20}~~{=}1 & .000 & .979 & .000 & .981 & \textbf{.010} & \textbf{.984} \\
 & (.063) & 0.2 & .000 & .981 & \textbf{.063} & \textbf{~~{=}1} & .000 & .979 & .000 & .981 & \textbf{.063} & \textbf{~~{=}1} \\
 \cmidrule(lr){2-13}
 & \multirow{3}{*}{random} & 0.01 & .000 & .798 & \cellcolor{red!20}.131 & \cellcolor{red!20}~~{=}1 & .005 & .831 & \textbf{.006} & \textbf{.841} & .006 & .839 \\
 &  & 0.05 & .000 & .798 & \cellcolor{red!20}.131 & \cellcolor{red!20}~~{=}1 & .005 & .831 & .006 & .841 & \textbf{.037} & \textbf{.989} \\
 & (.131) & 0.2 & .000 & .798 & \textbf{.131} & \textbf{~~{=}1} & .005 & .831 & .006 & .841 & \textbf{.131} & \textbf{~~{=}1} \\
 \cmidrule(lr){1-13}
\multirow{9}{*}{drone} & \multirow{3}{*}{greedy} & 0.01 & .000 & .207 & .001 & .211 & .009 & .214 & \textbf{.010} & \textbf{.218} & .000 & .208 \\
 &  & 0.05 & .000 & .207 & .001 & .211 & .049 & .254 & \textbf{.050} & \textbf{.259} & .000 & .208 \\
 & (.240) & 0.2 & .000 & .207 & \cellcolor{red!20}.240 & \cellcolor{red!20}~~{=}1 & .160 & .538 & \textbf{.200} & \textbf{.656} & .054 & .239 \\
 \cmidrule(lr){2-13}
 & \multirow{3}{*}{timid} & 0.01 & .000 & .554 & .002 & .728 & .009 & .808 & \textbf{.010} & \textbf{.830} & .000 & .554 \\
 &  & 0.05 & .000 & .554 & .002 & .728 & .008 & .811 & \textbf{.011} & \textbf{.855} & .000 & .554 \\
 & (.014) & 0.2 & .000 & .554 & \textbf{.014} & \textbf{~~{=}1} & .009 & .810 & .011 & .855 & .001 & .640 \\
 \cmidrule(lr){2-13}
 & \multirow{3}{*}{random} & 0.01 & \textbf{.000} & \textbf{.722} & \cellcolor{red!20}.052 & \cellcolor{red!20}.839 & .000 & .721 & \textbf{.000} & \textbf{.722} & \textbf{.000} & \textbf{.722} \\
 &  & 0.05 & .000 & .722 & \cellcolor{red!20}.052 & \cellcolor{red!20}.839 & .000 & .723 & .000 & .722 & \textbf{.006} & \textbf{.749} \\
 & (.962) & 0.2 & .000 & .722 & \textbf{.052} & \textbf{.839} & .000 & .722 & .000 & .722 & .027 & .812 \\
 \cmidrule(lr){1-13}
\multirow{9}{*}{drone-b} & \multirow{3}{*}{greedy} & 0.01 & \textbf{.000} & \textbf{.160} & \textbf{.000} & \textbf{.160} & \textbf{.000} & \textbf{.160} & \textbf{.000} & \textbf{.160} & \textbf{.000} & \textbf{.160} \\
 &  & 0.05 & .000 & .160 & .000 & .160 & .050 & .247 & \textbf{.050} & \textbf{.256} & .000 & .160 \\
 & (.912) & 0.2 & .000 & .160 & \textbf{.139} & \textbf{.915} & .197 & .733 & .200 & .743 & .000 & .160 \\
 \cmidrule(lr){2-13}
 & \multirow{3}{*}{timid} & 0.01 & .000 & .987 & .000 & .987 & .002 & .989 & \textbf{.004} & \textbf{.990} & .000 & .987 \\
 &  & 0.05 & .000 & .987 & \textbf{.014} & \textbf{.999} & .002 & .989 & .004 & .990 & .000 & .987 \\
 & (.016) & 0.2 & .000 & .987 & \textbf{.014} & \textbf{.999} & .002 & .989 & .004 & .990 & .000 & .987 \\
 \cmidrule(lr){2-13}
 & \multirow{3}{*}{random} & 0.01 & .000 & .762 & \cellcolor{red!20}.016 & \cellcolor{red!20}.764 & .010 & .762 & \textbf{.010} & \textbf{.763} & .000 & .762 \\
 &  & 0.05 & .000 & .762 & \cellcolor{red!20}.101 & \cellcolor{red!20}.764 & .042 & .764 & \textbf{.050} & \textbf{.765} & .000 & .762 \\
 & (~~{=}1) & 0.2 & .000 & .762 & .101 & .764 & .044 & .763 & \textbf{.074} & \textbf{.766} & .000 & .762 \\

\bottomrule
\end{tabular}

}

\label{tab:main}
\end{table}

We report the results for a selection of shields in \cref{tab:main} and the full results in \appref{app:complete-results}{B.2}. Details of the evaluation procedure are described in \appref{app:eval-explanation}{B.1}.

\begin{mdframed}[style=MyFrame,nobreak=true]
\textbf{Main result:} \cref{tab:main} compares the shields in terms of safety and permissiveness and demonstrates that our shields are the most permissive among the safe shields. 
\end{mdframed}

\paragraph{Safety.}
Unsafe shields ({\texttt{\textcolor{red!80}{UNSAFE}}}) do not ensure safety bounds across different benchmarks, while the safe shields (\texttt{\textcolor{green!30!black}{SAFE}}) provably do. The classical shield \shieldSafe indeed keeps the probability of reaching \(\bad\) at zero, which is attainable by some policy in all of our benchmarks (i.e., \(V_{\min}(s_0)=0\)).
We observe that in many models, \shieldDeltaPlus (which can allow more policies compared to \shieldSafe if \(V_{\min}(s_0)=0\)) still behaves either as if no shield was applied or like~\shieldSafe.
The results in \appref{app:complete-results}{B.2} further show that i) since  \(V_{\min}(s_0)=0\),  \shieldDelta (for arbitrary $\delta$) does not allow any risky choice (i.e. behaves as \shieldSafe), ii) \shieldPess is also overly conservative, and iii) our unsafe shield \shieldOpt offers no advantage over \shieldDeltaPlus on these benchmarks.

\paragraph{Permissiveness.}
Our shields are, in most cases, the most permissive among the shields that achieved the required safety value; the only exceptions that have bigger gaps in permissiveness (compared to \shieldDeltaPlus) are \emph{drone}/timid, \emph{drone}/random, and \emph{drone-b}/greedy for $\nu$ = 0.2.
We further observe that the online construction of \shieldOnl is only slightly less permissive than that of \shieldOff.
This demonstrates that we can deploy safe, permissive shields without collecting the agent's trajectories in advance.

\paragraph{Shields benefit from memory.}
\cref{tab:main} further demonstrates that shields generally need memory to be safe and permissive: In many cases, memoryless shields \shieldMemoryless discussed in~\cref{sec:nomem} are considerably less permissive than \shieldOff. On the other hand, we observe that sometimes memoryless shields suffice, see e.g.~\emph{dpm}.

\subsection*{Q2: Computational and Data Demands}
\paragraph{Computational demands.}
\cref{tab:runtimes} reports, for the individual shields, the average number of shield queries that can be executed per second, including the simulation overhead and initialization\footnote{Pre-computation of $V_{\min}$ values, required for all the considered shields, is performed via standard model checking methods and, for our benchmarks, takes under a second.}.
We observe that \shieldSafe and \shieldDeltaPlus have similar performance, which is expected as both compute the next-state probability and then perform a comparison.
The shields \shieldDelta, \(\shield_{\mathit{opt}}\) and \(\shield_{\mathit{pess}}\) perform similar to \shieldSafe and \shieldDeltaPlus. Note that although \(\shield_{\mathit{opt}}\) and \(\shield_{\mathit{pess}}\) depend on the full history, we only need to keep a running tally of incurred safety or risk in practice.
The shields \shieldOnl and \shieldOff have a more involved shielding procedure: A linear program must be solved in order to check whether an input choice belongs to the convex set of allowed choices at the given history (see \cref{remark:convex}). As expected, \shieldOnl introduces the largest overhead compared to \shieldSafe due to the shield updates performed during execution. Even for the largest model and the largest shield (see the memory demands below), \shieldOnl still performs more than 900 and 700 queries per second, respectively. The query time for constructed \shieldMemoryless is similar to that of $\shieldOff$, but the construction steps for obtaining a memoryless shield can be more expensive than in \shieldOnl, since they require model checking of $\mdp$. 

\paragraph{Additional results.}
 As expected, the online shield construction becomes linearly slower with episode length (see Table 9 in \appref{app:complete-results-convergence}{B.3}). However, with a fixed episode length, the queries can become \emph{faster} with the number of construction steps, since the number of blocked choices (and thus shield updates) decreases over time (see Table 8 in \appref{app:complete-results-convergence}{B.3}). 

\paragraph{Memory demands.}
Table~\ref{tab:memory} reports peak memory overhead of using the shields on top of the simulator process. As expected, the constructed shields (\shieldOnl, \shieldOff, and \shieldMemoryless) have significantly larger overheads, but never exceed 500MB. In the worst case, the overhead grows linearly in the number of state-action pairs, which is bounded by the number of shield queries. We observe that the overhead is not directly linked to the model size (compare the \emph{drone} models). We also observe that the overhead reduction of the memoryless shields \shieldMemoryless mostly depends on the model structure, but can reach orders of magnitude (see the \emph{dpm} model).

\begin{table}[t]
\renewcommand{\arraystretch}{1}%
\setlength{\tabcolsep}{1pt}

\caption{Comparison of computational overhead across shields. We report the number of shield queries per second based on simulations (including the simulation overhead). For all experiments, we used the greedy agent and \(\nu=0.2\). The corresponding table in \appref{app:complete-results-convergence}{B.3} reports runtimes per million shield queries.}

\centering
\scalebox{0.8}{
\begin{tabular}{l@{\hskip 6pt} r@{\hskip 12pt} r@{\hskip 6pt} r@{\hskip 6pt} r@{\hskip 6pt}
r@{\hskip 6pt} r@{\hskip 6pt}
r@{\hskip 6pt} r@{\hskip 6pt} r}
\toprule

Model & \(|M|\) & \multicolumn{1}{c@{\hskip 6pt}}{\(\shieldSafe\)} & \multicolumn{1}{c@{\hskip 6pt}}{\(\shield_{\delta^{+}}\)} & 
\multicolumn{1}{c@{\hskip 6pt}}{\(\shield_{\delta}\)} & \multicolumn{1}{c@{\hskip 6pt}}{\(\shield_{opt}\)} & 
\multicolumn{1}{c@{\hskip 6pt}}{\(\shield_{\mathit{pess}}\)} & 
\multicolumn{1}{c@{\hskip 6pt}}{\(\shield_{\mathit{onl}}\)} & \multicolumn{1}{c@{\hskip 6pt}}{\(\shield_{\mathit{off}}\)} &\multicolumn{1}{c}{\(\shield_{\mathit{ML}}\)}
\\

\midrule

corridor & 15 & 2.7k & 2.2k & 2.2k & 1.9k & 
1.9k & 
1.8k & 2.1k & 1.8k \\
dpm & 797 & 2.9k & 3.0k & 3.0k & 2.4k & 
2.6k & 
1.6k & 2.3k & 2.5k \\
drone & 1859 & 2.7k & 2.6k & 2.6k & 1.8k &
2.3k &
0.7k & 1.6k & 1.2k \\
drone-b & 87k & 2.2k & 2.2k & 2.2k & 1.5k & 
1.7k & 
0.9k & 1.8k & 1.6k \\

\bottomrule
\end{tabular}

}

\label{tab:runtimes}
\end{table}
\begin{table}[t]
\renewcommand{\arraystretch}{1}%
\setlength{\tabcolsep}{1pt}

\caption{Comparison of memory usage of different shields. We report the difference in peak memory usage in MBs of the simulator process equipped with the given shield compared to the simulator with no shield. Used greedy agent and \(\nu=0.2\).}

\centering
\scalebox{0.8}{
\begin{tabular}{l@{\hskip 6pt} r@{\hskip 12pt} r@{\hskip 6pt} r@{\hskip 6pt} r@{\hskip 6pt}
r@{\hskip 6pt} r@{\hskip 6pt}
r@{\hskip 6pt} r@{\hskip 6pt} r}
\toprule

Model & \(|M|\) & \multicolumn{1}{c@{\hskip 6pt}}{\(\shieldSafe\)} & \multicolumn{1}{c@{\hskip 6pt}}{\(\shield_{\delta^{+}}\)} & 
\multicolumn{1}{c@{\hskip 6pt}}{\(\shield_{\delta}\)} & \multicolumn{1}{c@{\hskip 6pt}}{\(\shield_{opt}\)} & 
\multicolumn{1}{c@{\hskip 6pt}}{\(\shield_{\mathit{pess}}\)} & 
\multicolumn{1}{c@{\hskip 6pt}}{\(\shield_{\mathit{onl}}\)} & \multicolumn{1}{c@{\hskip 6pt}}{\(\shield_{\mathit{off}}\)} &\multicolumn{1}{c}{\(\shield_{\mathit{ML}}\)}
\\

\midrule

corridor & 15 & 4.5 & 2.3 & 2.3 & 2.1 & 
6.3 & 
72.3 & 40.1 & 42.3 \\
dpm & 797 & 1.2 & 1.3 & 1.3 & 3.8 & 
4.1 & 
362.1 & 391.1 & 26.5 \\
drone & 1859 & 1.7 & 1.9 & 1.9 & 4.1 &
1.6 &
431.6 & 228.6 & 193.9 \\
drone-b & 87k & 10.5 & 13.8 & 13.8 & 10.1 & 
9.5 & 
98.8 & 126.4 & 116.5 \\

\bottomrule
\end{tabular}

}

\label{tab:memory}
\end{table}

\begin{figure}[t]
    \centering
    \begin{minipage}[b]{0.58\textwidth}
        \begingroup
        \includegraphics{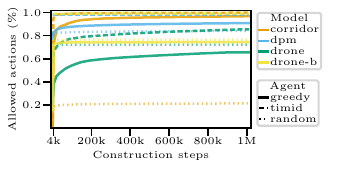}
        \endgroup
      \end{minipage}
      \caption{Convergence of the allowed choices for ${\shield}_{\mathit{off}}$ with $\nu = 0.2$. Results collected every 4K steps and interpolated. Step 0 corresponds to ${\shield}_{S}$.}
        \label{fig:conver}
\end{figure}

\paragraph{Convergence.} 
\cref{fig:conver}
shows how the permissiveness of \shieldOff (i.e. the ratio of allowed choices) improves with the number of construction steps. It shows the results for safety threshold $\nu = 0.2$ (the plots are similar for other values of $\nu$).
In most cases (except for \emph{drone}/greedy), we obtain the permissiveness close to the maximum achieved value after 40K steps. This clearly shows that our shield construction needs a feasible amount of data. Recall that in some cases, \shieldMemoryless converges faster and achieves better permissiveness than \shieldOff.

\section{Related Work}

The existing work on probabilistic shields that is closest to our approach has   been discussed in the introduction. We expand on further topics related to shielding.

\paragraph{Classical shielding and beyond.}
Classical shielding~\cite{DBLP:conf/aaai/AlshiekhBEKNT18} has recently been extended to a rich array of temporal safety specifications, including e.g. LTL modulo theories~\cite{rodriguez2025shieldsynthesisltlmodulo}.
Another active research direction focuses on shielding in more expressive models such as continuous~\cite{DBLP:conf/ijcai/YangMRR23}, partially observable~\cite{carr2023safe,sheng2024safe}, or multi-agent systems~\cite{DBLP:conf/ifaamas/BrorholtL025,multi-agent-shielding}.
Shielding can also be seen as a counterpart to safety filtering in continuous control~\cite{hsu2023safety,wabersich2023data} or to simplex architectures~\cite{phan2020neural,nesti2025use} that enhance the safety of learning-based autonomous systems via a fail-safe mechanism. In reactive systems, shielding can be seen as a type of runtime enforcement~\cite{anand2025prompt}. In online planning, on-the-fly construction of a shield (so-called symbolic advice) can be used to prune parts of the search tree violating the safety specifications~\cite{chakraborty2023formally} and thus to provide statistical safety guarantees. 
Our pessimistic shields are related to safe learning methods such as risk-aware learning~\cite{brazdil2020reinforcement,chapman2022optimizing}, online safety verification during training~\cite{10.1145/3770068}, or training policies that account for their own safety budget~\cite{DBLP:conf/aaai/CourtBG25}.

\paragraph{Online and adaptive shielding.}
To avoid computing the shield for all state-action combinations, an online shielding approach has been proposed in~\cite {konighofer2023online}, where actions are evaluated and blocked based on a finite lookahead horizon. Further research focuses on relaxing the assumption that the safety model is known a priori: Dynamic shielding techniques construct an approximate safety model using automata learning in parallel with policy training~\cite{tappler2022automata,takisaka-dynamic-shielding}. In cautious RL~\cite{hasanbeig2020cautious}, a safe padding is used to block actions that lead to states that are too close to the unsafe states according to the learned model.
Recently, several studies have started investigating the possibility of \emph{adapting} the safety model based on the data observed by the agent. In~\cite{adaptive-calinescu},
a contrastive autoencoder is used to learn latent representations that distinguish safe and unsafe state-action pairs. Alternatively, a programming-language framework~\cite{feng2025adaptive} or abstraction refinement~\cite{pranger2021adaptive} are used to adapt the model. If formal guarantees on model accuracy are available, these approaches can be integrated with our shielding to provide provable safety.

\section{Conclusions and Future Work}
In this paper, we propose various new shields that \emph{do guarantee} to be safe with respect to a probabilistic safety specification:  Optimistic and pessimistic shields, and
saturated safe shields with associated online and offline learning routines. We show that these shields are permissive and feasible to construct.

\paragraph{Future work.}
Our framework assumes full knowledge of the underlying MDP, but can be lifted to cases where only a \emph{safety-relevant quotient} of the MDP is known~\cite{DBLP:conf/concur/0001KJSB20}. This will require a careful treatment of the guarantees and relationship between the quotient and the model.
We will further investigate the sliding-window generalization of the memoryless shields (\cref{sec:nomem}), as it has the potential to improve the convergence to more permissive shields from limited~data.

\paragraph{Acknowledgements.}
\inlinegraphics{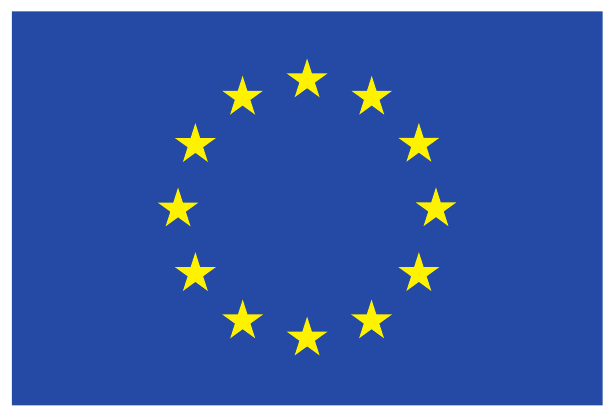} This work has been executed under the project VASSAL: ``Verification and Analysis for Safety and Security of Applications in Life'' funded by the European Union under Horizon Europe WIDERA Coordination and Support Action/Grant Agreement No. 10116002.
This work was partly supported by the NWO grant FuRoRe (OCENW.M.22.282).

\paragraph{Competing Interests.}
The authors have no competing interests to declare that are relevant to the content of this article.

\paragraph{Data-Availability Statement.}
Artifact with the source code and experimental data is available at \url{https://doi.org/10.5281/zenodo.19829091}~\cite{heck_2026_19829091}.

\newpage
\bibliographystyle{splncs04}
\bibliography{literature}

\newpage
\appendix
\begingroup
\section{Proofs}
\label{app:proofs}

This appendix contains the proofs for all theorems and lemmas stated without proof in the paper.

\subsection{Shields for Probabilistic Safety}
\label{app:proofs1}

\propmixclosed*

\begin{proof}[\cref{prop:mixclosed}]
    Given a policy \(\pi \in \mix(P)\).
    Suppose that \(\pi \notin \Allow(\shield)\). Then, there exists a history \(h\) consistent with \(\pi\) such that \(\shield(h,\pi(\hpath(h)) \neq \pi(\hpath(h)\)).
    Assume an extended history \(h' \coloneq h \cdot \pi(\hpath(h)) \cdot \alpha \cdot s\) consistent with \(\pi\).
    From \(\pi \in \mix(P)\) it follows that there exists a policy  \(\pi' \in P \subseteq \Allow(\shield)\) such that \(h'\) is consistent with \(\pi'\).
    Then, \(\pi(\hpath(h)) = \choice(h, \hpath(h)) = \choice(h', \hpath(h)) = \pi'(\hpath(h))\).
    Therefore, \(\T_{\shield}(\pi')(\hpath(h)) = \pi'(\hpath(h)) = \pi(\hpath(h))\).
    Thus, it must be the case that \(\shield(h,\pi(\hpath(h))) = \pi(\hpath(h))\) \(\implies\) Contradiction.
\end{proof}
\lemmaexistsshield*

\begin{proof}[\cref{thm:existsshield}]
    The shield is given by
    \begin{align*}
        \shield_P(h,d) = \begin{cases}
            d \text{ if } \exists \pi \in P \text{ s.t. } h \text{ is consistent with } \pi \text{ and } \pi(\hpath(h))=d,  \\
            d|_{\SafeAct(\last(h))} \text{otherwise.}
        \end{cases}
    \end{align*}
    Clearly, \(\shield_P\) is canonical. We show that \(\Allow(\shield_P) = P\).
    
    \enquote{\(\Rightarrow\)} Let \(\pi \in \Allow(\shield_P)\).
    Then for all \(h \in \Hist(\M)\), we have \(\shield_P(h,\pi(\hpath(h))) = \pi(\hpath(h)))\).
    As \(P\) is \(\mix\)-closed, it suffices to show:
    \[
    \forall h \in \Hist(\M). \; \exists \pi' \in P. \; (h \text{ consistent with } \pi \implies h \text{ consistent with } \pi').
    \]
    Let \(h \in \Hist(\M)\) and suppose that \(h\) is consistent with \(\pi\).
    Let \(d \coloneq \pi(\hpath(h))\).
    We have two cases:
    \begin{itemize}
        \item If \(\shield_P(h,d)=d\neq d|_{\SafeAct(\last(h))}\), then there exists a \(\pi' \in P\) s.t. \(h\) is consistent with \(\pi'\) by definition of \(\shield_P\).
        \item If \(\shield_P(h,d) = d = d|_{\SafeAct(\last(h))}\), then there exists a \(\pi' \in P\) s.t. \(h\) is consistent with \(\pi'\) because \(P\) satisfies the condition from the theorem.
    \end{itemize}

    \enquote{\(\Leftarrow\)} Let \(\pi \in P\).
    Then for all \(h \in \Hist(\M)\) that are consistent with \(\pi\), we have \(\shield_P(h,\pi(\hpath(h))) = \pi(\hpath(h)))\). Thus, \(\pi \in \Allow(\shield)\).
    
    Uniqueness of the shield follows from the definition of canonicity: Intuitively, a different shield must make a different decision on a history consistent with some allowed policy, so it will not have the same allow set.
\end{proof}

\subsection{Shields with Strong and Weak Guarantees}
\label{app:strongweak}

\begin{figure}[t]
    \begin{center}
            \begin{tikzpicture}[pMC]
                \node[state, initial left] (s0) {$s_0$};
                \node[circle, fill, inner sep=1pt] (eps) [right of=s0] {};
                \node[state, right of=eps] (s1) {$s_1$};
                \node[state] (s2) [right of=s1,yshift=0.7cm] {$s_2$};
                \node[state] (s3) [right of=s1,yshift=-0.7cm] {$s_3$};
                \node[state] (good) [right of=s2] {$\good$};
                \node[state] (bad) [right of=s3] {$\bad$};
                \path[->] (s0) edge node[above] {\(\varepsilon\)} (eps);
                \path[->] (eps) edge node[above] {\(x\)} (s1);
                \path[->] (s1) edge node[above] {$\nicefrac{1}{2}$} (s2);
                \path[->] (s1) edge node[below] {$\nicefrac{1}{2}$} (s3);
                \path[->] (s2) edge node[above] {$\alpha$} (good);
                \path[->] (s2) edge[bend left=20] node[above] {$\beta$} (bad);
                \path[->] (s3) edge[bend right=20] node[below] {$\gamma$} (good);
                \path[->] (s3) edge node[below] {$\delta$} (bad);
                \path[->] (eps) edge[bend left=40] node[above] {$(1-\nu) - \frac{x}{2}$} (good);
                \path[->] (eps) edge[bend right=40] node[below] {$\nu - \frac{x}{2}$} (bad);
                \path[->] (bad) edge[loop right] node[right] {$1$} (bad);
                \path[->] (good) edge[loop right] node[right] {$1$} (good);
            \end{tikzpicture}
        \caption{MDP}
        \label{fig:mdpfullproof}
    \end{center}
\end{figure}

\noshieldsat*

\begin{proof}[\cref{thm:noshieldsat}]
    We prove \cref{thm:noshieldsat} for arbitrary \(0 < \nu < 1\).
    Given such a~\(\nu\), pick
    \(x= \min \{ \frac{\nu}{2}, \frac{1-\nu}{2}\}\) and consider \cref{fig:mdpfullproof}. Note that \(\Pr(s_0 \vDash \F \bad) \leq \nu\) if and only if \(\Pr(s_1 \vDash \F \bad) \leq 0.5\). By invoking the special case for \(0.5\) in the main paper, we then get the general statement.
\end{proof}

\subsection{Optimistic and Pessimistic Shields}
\label{app:optpessproofs}

We first give the definitions of the optimistic and pessimistic shield (\cref{def:optshield}, \cref{def:optshield2}) in their canonical form. This adaptation ensures that the shield projects to the safe actions on histories that are not consistent with any allowed policy. On all other histories, the canonical and non-canonical versions coincide.

\begin{definition}[Optimistic Shield]
    \label{def:optshield2}
    Let \(\nu \in [0, 1]\) and \(b_{\min} = \nu - V_{\min}(s_0)\).
    Note that \(0 \leq b_{\min} \leq 1\) because \(V_{\min}(s_0) \leq \nu\). The optimistic shield is given by
    \[
        \shield_{opt}(h, d) = \begin{cases}
            d &\text{if } \risk(h, d) \leq b_{\min} \text{ and for all } 0 \leq k < |h|: \\ &\risk(\choice(h, k+1)) \leq b_{\min}, \\
            d|_{\SafeAct(\last(h))} &\text{otherwise.}
        \end{cases}
    \]
\end{definition}

\begin{definition}[Pessimistic Shield]
    \label{def:pessimisticshield2}
    Let \(\nu \in [0, 1]\) and \(b_{\max} = V_{\max}(s_0) - \nu\).
    Note that \(-1 \leq b_{\max} \leq 1\). The pessimistic shield is given by
    \[
        \shield_{pess}(h, d) = \begin{cases}
            d &\text{if } \safety(h, d) \geq b_{\max} \text{ and for all } 0 \leq k < |h|: \\ &\safety(\choice(h, k+1)) \geq b_{\max}, \\
            d|_{\SafeAct(\last(h))} &\text{otherwise.}
        \end{cases}
    \]
\end{definition}

\shieldorder*

\begin{proof}[\cref{lemma:shieldorder}]
    \begin{itemize}
        \item \(\Allow(\shieldSafe) \subseteq \Allow(\shield_\mathit{pess})\): If \(\shieldSafe(h,d) = d\), then \(d = d|_{\SafeAct(\last(h))}\), and then  \(\shield_{\mathit{pess}}(h,d) = d\). If a policy is allowed by \(\shieldSafe\), it only returns choices over safe actions, and is thus allowed by the pessimistic shield.
        \item \(\Allow(\shield_\mathit{pess}) \subseteq \Allow(\shield_\mathit{opt})\):
        We jump ahead and use \cref{thm:optimistic}. This statement is a corollary: \(\Allow(\shield_\mathit{pess})\) only allows safe policies and \(\Allow(\shield_\mathit{opt})\) allows all safe policies.
        \item If \(\nu=0\), then \(\shieldSafe = \shield_{\mathit{pess}}\): Suppose \(\shield_{\mathit{pess}}(h,d)=d\). Then either \(d = d|_{\SafeAct(\last(h))}\), or \(\safety(h,d) \geq b_{\max} = V_{\max}(s_0)\), which also implies \(d = d|_{\SafeAct(\last(h))}\). Thus \(\shieldSafe(h,d)=d\).
        \item If \(\nu=0\), then \(\shieldSafe = \shield_{\mathit{opt}}\): Suppose \(\shield_{\mathit{opt}}(h,d)=d\). Then either \(d = d|_{\SafeAct(\last(h))}\), or \(\risk(h,d) \leq b_{\min} = -V_{\min}(s_0)\), which also implies \(d = d|_{\SafeAct(\last(h))}\). Thus \(\shieldSafe(h,d)=d\).
    \end{itemize}
\end{proof}

\shieldtime*
\begin{proof}[\cref{lemma:shieldtime}]
    We discuss the optimistic shield, the pessimistic shield is analogous. To compute \(\shield_{\mathit{opt}}(h,d)\) given a history \(h \in \Hist(\M)\) and choice \(d \in \Distr(\Act)\), we need to compute \(\risk(h,d)\), then, computing the shield can clearly be done in \(\mathcal{O}(|Act| \cdot |S|)\). This algorithm is given in \cref{alg:risk}. This is clearly in \(\mathcal{O}(|h| \cdot |Act| \cdot |S|)\)

    \begin{algorithm}
        \caption{Computing incurred risk}\label{alg:risk}
    \begin{algorithmic}
        \Require $h = s_0 d_1 \alpha_1 \cdots s_t \in \Hist(\M)$, \(d_{t+1} \in \Distr(\Act)\)
        \Ensure \(\risk(h, d_{t+1})\)
        \State \(r \leftarrow 0\)
        \State \(p \leftarrow 1\)
        \For{\(k \in \{0, \ldots, t\}\)}
            \State \(q \leftarrow 0\)
            \For{\(\alpha \in \Act\)}
                \For{\(s' \in S\)}
                    \State \(q \leftarrow q + d_{k+1}(\alpha) \cdot \mathcal{P}(s_k, \alpha, s') \cdot V_{\min}(s')\)
                \EndFor
            \EndFor
            \State \(r \leftarrow r + p \cdot (q - V_{\min}(s_k))\)
            \If{\(k<t\)}
            \State \(p \leftarrow p \cdot d_{k+1}(\alpha_{k+1}) \cdot \mathcal{P}(s_k, \alpha_{k+1}, s_{k+1})\)
            \EndIf
        \EndFor
        \State \Return r
    \end{algorithmic}
    \end{algorithm}
\end{proof}

For any \(\pi \in \Pi(\M), \tau s \in \FinPaths(\M)\) and function \(f \colon \FinPaths(\M) \to \mathbb{R}\) we write:
\begin{align*}
    &\;\E_\pi[f \mid \tau s] \coloneq \sum_{\alpha \in Act} \pi(\tau s) (\alpha) \sum_{s' \in S} \mathcal{P}(s, \alpha, s') f(\tau s \alpha s').
\end{align*}

\begin{lemma}
    \label{lemma:exp}
    For any non-negative function \(f\), for all \(\tau s \hat\alpha \hat s \in \FinPaths(\M)\) it holds that:
    \begin{align*}
        {\Pr}_{\pi}(\tau s) \E_\pi [ f \mid \tau s] \geq {\Pr}_{\pi}(\tau s \hat\alpha \hat s) f(\tau s \hat \alpha \hat s).
    \end{align*}
\end{lemma}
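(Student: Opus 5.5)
The proof is a direct expansion of the definition of \(\E_\pi[f \mid \tau s]\) as a sum of non-negative terms, followed by dropping every term except the one for the extension \(\hat\alpha \hat s\). By definition,
\[
{\Pr}_{\pi}(\tau s)\,\E_\pi[f \mid \tau s] = \sum_{\alpha \in \Act} \sum_{s' \in S} {\Pr}_{\pi}(\tau s)\, \pi(\tau s)(\alpha)\, \mathcal{P}(s,\alpha,s')\, f(\tau s \alpha s').
\]
Each summand is a product of a probability, a policy weight, a transition probability and a value of \(f\). All four factors are non-negative, the last because \(f\) is assumed non-negative. So the whole sum is bounded below by any single summand.

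The next step is to identify the summand for \(\alpha = \hat\alpha\), \(s' = \hat s\) with \({\Pr}_{\pi}(\tau s \hat\alpha \hat s)\, f(\tau s \hat\alpha \hat s)\). This uses the standard product form of path probabilities in the induced Markov chain \(\imc\):
\[
{\Pr}_{\pi}(\tau s \hat\alpha \hat s) = {\Pr}_{\pi}(\tau s)\,\pi(\tau s)(\hat\alpha)\,\mathcal{P}(s,\hat\alpha,\hat s).
\]
This is exactly how cylinder-set probabilities are defined in \(\imc\) \cite[p.~843]{baier2008principles}. The hypothesis \(\tau s \hat\alpha \hat s \in \FinPaths(\M)\) guarantees \(\hat\alpha \in \Act(s)\), so \(\mathcal{P}(s,\hat\alpha,\cdot)\) is defined and the term really occurs in the sum.

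Discarding all other (non-negative) terms then yields the claimed inequality. The only point needing care is the second step, fixing the convention for \({\Pr}_\pi\) of a finite path so that the product identity holds literally. Once that is stated there is no real obstacle; everything else is elementary.
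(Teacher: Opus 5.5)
Your proposal is correct and is exactly the paper's proof: expand $\E_\pi[f \mid \tau s]$, drop every non-negative summand except the one for $\hat\alpha, \hat s$, and identify it via ${\Pr}_{\pi}(\tau s \hat\alpha \hat s) = {\Pr}_{\pi}(\tau s)\,\pi(\tau s)(\hat\alpha)\,\mathcal{P}(s,\hat\alpha,\hat s)$. Your observation that the path hypothesis guarantees $\hat\alpha \in \Act(s)$ is a small detail the paper leaves implicit.
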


\lemmastrongimpliesweak*

\begin{proof}[\cref{lemma:strongimpliesweak}]
    \SafetyStrong{} implies \SafetyWeak:
    Suppose \(\shield\) satisfies \(\SafetyStrong\), i.e. \(\T_{\shield}(\Pi_\M) \subseteq \Safe(\Pi_\M)\).
    Then, (for every finite path $\tau \in \FinPaths(\M)$ and) for every policy $\pi \in \Pi_\M$, \(\T_{\shield}(\pi)\) is safe, and thus \(\pi_{\mathit{safe}}=\T_{\shield}(\pi)\) satisfies the condition for \SafetyWeak.
    
    \PermissivenessStrong{} implies \PermissivenessWeak:
    Suppose \(\shield\) satisfies \(\PermissivenessStrong\), i.e. \(\Safe(\Pi_\M) \subseteq \Allow(\shield)\).
    Then for every finite path $\tau \in \FinPaths(\M)$ and policy \(\pi \in \Pi_\M\) with \(\T_{\shield}(\pi)(\tau) \neq \pi(\tau)\), we have that \(\pi\) is unsafe because \(\T_{\shield}(\pi) \neq \pi\), and thus \(\pi_{\mathit{unsafe}} = \pi\) satisfies the condition for \PermissivenessWeak.
\end{proof}

\begin{proof}[\cref{lemma:exp}]
\begin{align*}
& {\Pr}_{\pi}(\tau s) \E_\pi [ f \mid \tau s] \\
=\;&
{\Pr}_{\pi}(\tau s)  \sum_{\alpha' \in Act} \pi(\tau s) (\alpha') \sum_{s' \in S} \mathcal{P}(s, \alpha', s') f(\tau s \alpha' s') \\
\geq\;& {\Pr}_{\pi}(\tau s) \pi(\tau s)(\hat\alpha) \mathcal{P}(s, \hat\alpha, \hat s) f(\tau s  \hat\alpha \hat s)\\
=\;& {\Pr}_{\pi}(\tau s \hat \alpha \hat s) f(\tau s \hat \alpha \hat s).
\end{align*}
\end{proof}

For \(\pi \in \Pi(\M)\) and \(\tau \in \FinPaths(\Pi)\),
let \(V_\pi(\tau)\) be the reachability probability to \(\bad\) under \(\pi\) given \(\tau\). We prove the following two lemmas:

\begin{lemma}
    \label{lemma:minrisk}
    Given policy \(\pi \in \Pi_\M\) and a history \(h \in \Hist(\M)\) consistent with~\(\pi\).  Then \(\risk(h,\pi(\hpath(h))) \leq V_\pi(s_0) - V_{\min}(s_0)\).
\end{lemma}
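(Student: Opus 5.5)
We prove \cref{lemma:minrisk} by a telescoping decomposition of $V_\pi(s_0)$ along the path of $h$. Write $h = s_0 d_1 \alpha_1 s_1 \cdots s_t$ with $d_{k+1} = \pi(\tau_k)$, where $\tau_k \coloneq \hpath(h|_k)$ is the path prefix of length $k$, and set $d_{t+1} \coloneq \pi(\hpath(h))$. Consistency gives $\Pr(h|_k) = \Pr_\pi(\tau_k)$. The right-hand side is $V_\pi(\tau_0) - V_{\min}(s_0)$. The idea is to introduce, for each $k$, the nonnegative gap function $g(\tau) \coloneq V_\pi(\tau) - V_{\min}(\last(\tau))$. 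This is nonnegative because $V_{\min}$ is the infimum over all policies and the continuation of $\pi$ after $\tau$ is itself a policy.

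\textbf{One-step identity.} For any path $\tau$ ending in $s \notin \bad$, the Bellman equation for $\pi$ gives $V_\pi(\tau) = \E_\pi[V_\pi \mid \tau]$. Writing $V_\pi = g + V_{\min}\circ\last$ in the expectation, we get
\[
g(\tau) = \E_\pi[g \mid \tau] + \bigl(Q_{\min}(s,\pi(\tau)) - V_{\min}(s)\bigr).
\]
For $s \in \bad$ the risk increment is $0$ (values are $1$ under every choice, assuming bad states are absorbing or treated as such) and $g(\tau)\ge 0$, so the identity holds as an inequality $\ge$, which is all we need.

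\textbf{Iterating along $h$.} Multiply by $\Pr_\pi(\tau_k)$ and apply \cref{lemma:exp} with $f = g \geq 0$ and the successor $\tau_{k+1}$:
\[
\Pr_\pi(\tau_k)\, g(\tau_k) \ \geq\ \Pr_\pi(\tau_{k+1})\, g(\tau_{k+1}) + \Pr(h|_k)\bigl(Q_{\min}(s_k,d_{k+1}) - V_{\min}(s_k)\bigr).
\]
Sum this for $k = 0,\dots,t-1$; it telescopes. At the last index $k=t$, use the one-step identity together with $\E_\pi[g\mid\tau_t]\ge 0$ to get
\[
\Pr_\pi(\tau_t)\, g(\tau_t) \ \geq\ \Pr(h|_t)\bigl(Q_{\min}(s_t,d_{t+1}) - V_{\min}(s_t)\bigr).
\]
Altogether, $g(\tau_0) \geq \risk(h, d_{t+1})$, which is exactly the claim since $g(\tau_0) = V_\pi(s_0) - V_{\min}(s_0)$.

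\textbf{Main obstacle.} The delicate step is justifying the one-step identity for history-dependent stochastic $\pi$, i.e., that $V_\pi(\tau)$ satisfies the Bellman equation in the induced infinite Markov chain $\imc$, and handling $\bad$ states and paths of probability zero. The latter are harmless because both sides carry the factor $\Pr(h|_k)$.
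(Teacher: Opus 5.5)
Your proposal is correct and follows essentially the same route as the paper, which also uses the gap $\delta_\pi = V_\pi - V_{\min}$, rewrites each risk increment via the Bellman equation for $V_\pi$ as $\Pr_\pi(\tau')\bigl(\delta_\pi(\tau') - \E_\pi[\delta_\pi \mid \tau']\bigr)$, telescopes along the prefixes of the path with \cref{lemma:exp}, and drops the final nonpositive term. One thing you add is an explicit justification that the gap is nonnegative; the paper relies on this implicitly, both when applying \cref{lemma:exp} and when dropping the last term.
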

\begin{lemma}
    \label{lemma:maxrisk}
    Given policy \(\pi \in \Pi_\M\) and a history \(h \in \Hist(\M)\) consistent with~\(\pi\). Then \(\safety(h,\pi(\hpath(h))) \leq  V_{\max}(s_0) - V_\pi(s_0)\).
\end{lemma}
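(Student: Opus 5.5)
Both lemmas follow from the same telescoping ("performance difference") argument along the single path observed in \(h\). I describe it for \cref{lemma:maxrisk}; \cref{lemma:minrisk} is the mirror image, obtained by exchanging the roles of \(V_\pi\) and \(V_{\max}\) and replacing them by \(V_{\min}\).

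Let \(h = s_0 d_1 \alpha_1 s_1 \cdots s_t\) be consistent with \(\pi\), and write \(\tau_k \coloneq \hpath(h|_k)\). Consistency gives \(d_{k+1} = \pi(\tau_k)\) for all \(k < t\). We also set \(d_{t+1} \coloneq \pi(\tau_t)\), which is the choice appearing in \(\safety(h,\pi(\hpath(h)))\). Consistency further gives \(\Pr(h|_k) = {\Pr}_\pi(\tau_k)\), where the right-hand side is the probability of the cylinder \(\tau_k\) in \(\imc\).

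Define on finite paths the gap \(g(\tau s) \coloneq V_{\max}(s) - V_\pi(\tau s)\) and the local slack \(c(\tau s) \coloneq V_{\max}(s) - Q_{\max}(s, \pi(\tau s))\). Both quantities are non-negative:
\begin{itemize}
\item \(g \geq 0\) because the residual policy after \(\tau s\) is a policy from \(s\), so \(V_\pi(\tau s) \leq V_{\max}(s)\).
\item \(c \geq 0\) because \(V_{\max}\) satisfies the Bellman optimality equation, so \(Q_{\max}(s,d) \leq V_{\max}(s)\) for every \(d\).
\end{itemize}

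The key step is the one-step identity
\[
g(\tau s) = c(\tau s) + \E_\pi[g \mid \tau s].
\]
It follows from \(Q_{\max}(s,\pi(\tau s)) = \E_\pi[V_{\max}(\last(\cdot)) \mid \tau s]\), which is the definition of \(Q_{\max}\), together with \(V_\pi(\tau s) = \E_\pi[V_\pi \mid \tau s]\). The latter is the standard first-step decomposition of reachability probabilities in the induced chain \(\imc\).

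Multiply the identity by \({\Pr}_\pi(\tau_k)\) and apply \cref{lemma:exp} to the non-negative \(g\), using the step \(\alpha_{k+1} s_{k+1}\) actually taken in \(h\). This yields
\[
{\Pr}_\pi(\tau_k)\, g(\tau_k) \;\geq\; {\Pr}_\pi(\tau_k)\, c(\tau_k) + {\Pr}_\pi(\tau_{k+1})\, g(\tau_{k+1}).
\]
Chaining this inequality for \(k = 0,\ldots,t-1\), and at \(k=t\) using \(g(\tau_t) \geq c(\tau_t)\), which holds since \(\E_\pi[g\mid\tau_t] \geq 0\), gives
\[
V_{\max}(s_0) - V_\pi(s_0) = g(s_0) \;\geq\; \sum_{k=0}^{t} {\Pr}_\pi(\tau_k)\, c(\tau_k) = \safety(h, \pi(\hpath(h))),
\]
as required.

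For \cref{lemma:minrisk} the same scheme applies with \(g(\tau s) \coloneq V_\pi(\tau s) - V_{\min}(s) \geq 0\) and \(c(\tau s) \coloneq Q_{\min}(s,\pi(\tau s)) - V_{\min}(s) \geq 0\). The identity \(g = c + \E_\pi[g\mid\cdot]\) holds verbatim, and the telescoping sum equals \(\risk\).

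The main obstacle I expect is making the first-step decomposition \(V_\pi(\tau s) = \E_\pi[V_\pi \mid \tau s]\) rigorous for history-dependent stochastic policies in the infinite induced Markov chain. This requires fixing the convention for paths that have already visited \(\bad\). I would treat \(\bad\) states as absorbing, as the paper's examples do, or equivalently stop at \(\bad\). Then \(V_\pi = V_{\min} = V_{\max} = 1\) there, so \(g = c = 0\) and both the identity and the non-negativity claims hold trivially. Everywhere else the identity is the usual measure-theoretic decomposition of the reachability event over the first transition.
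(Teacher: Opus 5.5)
Your proposal is correct and follows essentially the same route as the paper. The paper also rewrites each summand of $\safety$ as $\Pr_\pi(\tau')\bigl(\delta(\tau') - \E_\pi[\delta \mid \tau']\bigr)$ with $\delta = V_{\max} - V_\pi$, and then telescopes along the observed path using \cref{lemma:exp}. Your version additionally justifies $g \geq 0$, which \cref{lemma:exp} requires, and fixes the absorbing-bad-states convention, which the identity $V_\pi(\tau s) = \E_\pi[V_\pi \mid \tau s]$ needs; the paper leaves both points implicit.
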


\begin{proof}[\cref{lemma:minrisk}]
 Define \(\delta_\pi(\tau) \coloneq V_\pi(\tau) - V_{\min}(\tau)\).
Given any \(\tau s \in \FinPaths(\M)\), we have \[
    Q_{\min}(s, \pi(\tau s)) = \E_\pi [ V_{\min} \mid \tau s].
\]
For a history \(h = s_0 d_1 \alpha_1 s_2 \cdots s_t\) consistent with \(\pi\), define \(\tau \coloneq \hpath(h)\) and \(d_{t+1} \coloneq \pi(\hpath(h))\). We obtain:
{\allowdisplaybreaks
\begin{align*}
&\;\risk(h,d_{t+1})\\
=& \sum_{0 \leq k \leq t} {\Pr}(h|_k)  \left( Q_{\min}(s_k, d_{k+1}) - V_{\min}(s_k) \right),\\
=& \sum_{\substack{\tau' s \in \Pre(\tau)}} {\Pr}_{\pi}(\tau' s)  \left( Q_{\min}(s, \pi(\tau' s)) - V_{\min}(s) \right) \\
=& \sum_{\tau' \in \Pre(\tau)} {\Pr}_{\pi}(\tau') \left(\E_\pi [ V_{\min} \mid \tau'] - V_{\min}(\tau')\right) \\
=& \sum_{\tau' \in \Pre(\tau)} {\Pr}_{\pi}(\tau') \left(\E_\pi [ V_{\min} \mid \tau'] - V_{\min}(\tau') + V_\pi(\tau') - \E_\pi [ V_\pi \mid \tau'] \right) \\
=& \sum_{\tau' \in \Pre(\tau)}{\Pr}_{\pi}(\tau') \left( \delta_\pi(\tau')
- \E_\pi [ V_{\min} - V_{\pi} \mid \tau'] \right)\\
=& \sum_{\tau' \in \Pre(\tau)}{\Pr}_{\pi}(\tau') \left( \delta_\pi(\tau')
- \E_\pi [ \delta_\pi \mid \tau'] \right)\\
=& \sum_{\tau' \in \Pre(\tau)}{\Pr}_{\pi}(\tau')  \delta(\tau')
- {\Pr}_{\pi}(\tau') \E_\pi [ \delta_\pi \mid \tau'] \\
\overset{(*)}{\leq}& \delta_\pi(s_0)  = V_\pi(s_0) - V_{\min}(s_0).
\end{align*}
}
(*) follows from \cref{lemma:exp}.
\end{proof}

\begin{proof}[\cref{lemma:maxrisk}]
    For \(\pi \in \Pi(\M)\) and \(\tau \in \FinPaths(\Pi)\),
let \(V_\pi(\tau)\) be the reachability probability under \(\pi\) given \(\tau\). Define \(\delta(\tau) \coloneq V_{\max}(\tau) - V_\pi(\tau)\).
Given any \(\tau s \in \FinPaths(\M)\), we have \[
    Q_{\max}(s, \pi(\tau s)) = \E_\pi [ V_{\max} \mid \tau s].
\]
For a history \(h = s_0 d_1 \alpha_1 s_2 \cdots s_t\) that is consistent with \(\pi\), define \(\tau \coloneq \hpath(h)\) and \(d_{t+1} \coloneq \pi(\hpath(h))\). We obtain:
{\allowdisplaybreaks
\begin{align*}
&\;\safety(h,d_{t+1})\\
=& \sum_{0 \leq k \leq t} {\Pr}(h|_k) \left( V_{\max}(s_k) -  Q_{\max}(s_k, d_{k+1})\right) \\
=& \sum_{\substack{\tau' s \in \Pre(\tau)}} {\Pr}_{\pi}(\tau' s)  \left( V_{\max}(s) - Q_{\max}(s, \pi(\tau' s)) \right) \\
=& \sum_{\tau' \in \Pre(\tau)} {\Pr}_{\pi}(\tau') \left(  V_{\max}(\tau') - \E_\pi [ V_{\max} \mid \tau']\right) \\
=& \sum_{\tau' \in \Pre(\tau)} {\Pr}_{\pi}(\tau') \left(  V_{\max}(\tau') - \E_\pi [ V_{\max} \mid \tau'] - V_\pi(\tau') + \E_\pi [ V_\pi \mid \tau'] \right) \\
=& \sum_{\tau' \in \Pre(\tau)}{\Pr}_{\pi}(\tau') \left( \delta(\tau')
- \E_\pi [ V_{\pi} - V_{\max} \mid \tau'] \right)\\
=& \sum_{\tau' \in \Pre(\tau)}{\Pr}_{\pi}(\tau') \left( \delta(\tau')
- \E_\pi [ \delta \mid \tau'] \right)\\
=& \sum_{\tau' \in \Pre(\tau)}{\Pr}_{\pi}(\tau')  \delta(\tau')
- {\Pr}_{\pi}(\tau') \E_\pi [ \delta \mid \tau'] \\
\overset{(*)}{\leq}& \delta(s_0)  =  V_{\max}(s_0) - V_\pi(s_0).
\end{align*}
}
(*) follows from \cref{lemma:exp}.
\end{proof}

\thmoptimistic*

\begin{proof}[\cref{thm:optimistic}]
\emph{Proof that the optimistic shield satisfies \SafetyWeak{} and \PermissivenessStrong:}

\noindent
\textbf{Recall \SafetyWeak}: For each $\tau \in \FinPaths(\M)$, there is a safe policy $\pi_{\mathit{safe}}$ with $\pi_{\mathit{safe}}(\tau') = \T_{\shield}(\pi)(\tau')$ for all prefixes \(\tau'\) of \(\tau\).

\noindent
\textbf{Proof of \SafetyWeak:}
Given policy \(\pi\) and any \(\tau  = s_0 \alpha_1 \cdots s_t \in \FinPaths(\M)\).
Let \(h = s_0 \pi(s_0) \alpha_1 \cdots s_t\).
We know that \(\risk(h,\T_{\shield}(\pi)(\tau)) \leq b_{\min}\) by \cref{def:optshield}.
Construct the policy \(\pi'\) that plays \(\T_{\shield}(\pi)\) on the prefixes of \(\tau\) and plays according to an optimally safe policy \(\pi_{\min}\) on every other path. Then we have:
\begin{align*}
V_{\pi'} (s_0) &\leq V_{\min}(s_0) +
\sum_{0 \leq k \leq t} {\Pr}(h|_k)  \left( Q_{\min}(s_k, \T_{\shield}(\pi)(\hpath(h|_k)) - V_{\min}(s_k) \right), \\
&= V_{\min}(s_0) + \risk(h,\T_{\shield}(\pi)(\tau)) \\
&\leq V_{\min}(s_0) + b_{\min} = \nu.
\end{align*}
Thus \(\pi_\mathit{safe} = \pi'\).

\noindent
Recall \PermissivenessStrong: \(\Safe(\Pi_\M) \subseteq \Allow(\shield)\).

\noindent
\textbf{Proof of \PermissivenessStrong:}
If \(\pi\) is safe, we have \(V_\pi(s_0) \leq \nu\) and thus by \cref{lemma:minrisk}, for all histories \(h\) consistent with \(\pi\), we have \(\risk(h,\pi(\hpath(h)) \leq V_\pi(s_0) - V_{\min}(s_0) \leq \nu  - V_{\min}(s_0) = b_{\min}\), so the shield will never interfere.

\noindent
\emph{Proof that the pessimistic shield satisfies \PermissivenessWeak{} and \SafetyStrong:}

\noindent
\textbf{Recall \PermissivenessWeak:} For each $\tau \in \FinPaths(\M)$ with \(\T_{\shield}(\pi)(\tau) \neq \pi(\tau)\), there is an unsafe policy \(\pi_{\mathit{unsafe}}\) with \(\pi_{\mathit{unsafe}}(\tau') = \pi(\tau')\) for all prefixes \(\tau'\) of \(\tau\).

\noindent
\textbf{Proof of \PermissivenessWeak:}
Given a finite path \(\tau = s_0 \alpha_1 \cdots s_t \in \FinPaths(\M)\) with \(\T_{\shield}(\pi)(\tau) \neq \pi(\tau)\).
Let \(h = s_0 \pi(s_0) \alpha_1 \cdots s_t\).
We know that \(\safety(h,\pi(\tau)) < b_{\max}\) by \cref{def:pessimisticshield}. Construct the policy \(\pi'\) that plays \(\pi\) on the prefixes of \(\tau\) and according to an optimally unsafe policy \(\pi_{\max}\) on every other path. Then we have:
\begin{align*}
V_{\pi'} (s_0) &\geq V_{\max}(s_0) -
 \sum_{0 \leq k \leq t} {\Pr}(h|_k) \left( V_{\max}(s_k) -  Q_{\max}(s_k, d_{k+1})\right) \\
&= V_{\max}(s_0) - \mathit{safety}(h,\pi(\tau)) \\
&> V_{\max}(s_0) - b_{\max} = \nu.
\end{align*}
\end{proof}
Thus, \(\pi_{\mathit{unsafe}} = \pi'\).

\noindent
Recall \SafetyStrong: \(\Safe(\Pi_\M) \subseteq \Allow(\shield)\).

\noindent
\textbf{Proof of \SafetyStrong:}
Let \(\pi\) be a policy. We have two cases:
(1) Suppose that \(\safety(h,\pi(\tau)) < b_{\max}\) for all histories \(h\) that are consistent with \(\T_{\shield}(\pi)\). Then the shield blocks every choice proposed by \(\pi\), so \(\T_{\shield}(\pi)\) only plays safe actions. Thus \(\T_{\shield}(\pi)\) is optimal w.r.t.\ safety and in particular safe.
(2) Suppose that there exists some \(h \in \Hist(\M)\) that is consistent with \(\T_{\shield}(\pi)\) s.t.\ \(\safety(h,\T_{\shield}(\pi)(\hpath(h))) \geq b_{\max}\).
Since the pessimistic shield returns \(d\) unchanged whenever \(\safety(h,d) \geq b_{\max}\), we have \(\T_{\shield}(\pi)(\hpath(h)) = \pi(\hpath(h))\), and thus \(\safety(h,\pi(\hpath(h))) \geq b_{\max}\).
Then by \cref{lemma:maxrisk}, we have \(b_{\max} = V_{\max}(s_0) - \nu\leq \safety(h,\pi(\hpath(h))) \leq V_{\max}(s_0) - V_{\pi}(s_0)\), and thus \(V_\pi(s_0) \leq \nu\), i.e., \(\pi\) is safe. As \(\T_{\shield}(\pi)\) plays either optimally w.r.t.\ safety or agrees with \(\pi\), we have \(V_{\T_{\shield}(\pi)}(s_0) \leq V_{\pi}(s_0) \leq \nu\), so \(\T_{\shield}(\pi)\) is safe.

\begin{figure}[t]
\centering
    \begin{minipage}{0.49\linewidth}
    \centering
            \begin{tikzpicture}[pMC]
                \node[state, initial left] (s0) {$s_0$};
                \node (s1) [below of=s0] {$\cdot$};
                \node[state] (bad) [right of=s1] {$\bad$};
                \node[state] (good) [right of=s0] {$\good$};
                
                \path[->] (s0) edge node[left] {$\alpha$} (s1);
                \path[->] (s0) edge node[above] {$\beta$} (good);
                \path[->] (s1) edge node[below] {$\nu$} (bad);
                \path[->] (s1) edge node[right] {$1-\nu$} (good);
                \path[->] (bad) edge[loop right] node[right] {$1$} (bad);
                \path[->] (good) edge[loop right] node[right] {$1$} (good);
            \end{tikzpicture}
        \caption{MDP}
        \label{fig:mdp4}
    \end{minipage}
    \begin{minipage}{0.49\linewidth}
    \centering
            \begin{tikzpicture}[pMC]
                \node[state, initial left] (s0) {$s_0$};
                \node[state] (s1) [right of=s0,yshift=0.7cm] {$s_1$};
                \node[state] (s2) [right of=s0,yshift=-0.7cm] {$s_2$};
                \node[state] (good) [right of=s1] {$\good$};
                \node[state] (bad) [right of=s2] {$\bad$};
                \path[->] (s0) edge node[below] {$\beta$} (s2);
                \path[->] (s0) edge node[above] {$\alpha$} (s1);
                \path[->] (s1) edge node[above] {$1-\nu$} (good);
                \path[->] (s1) edge[bend left=20] node[left] {$\nu$} (bad);
                \path[->] (s2) edge[bend right=20] node[left] {$1-x$} (good);
                \path[->] (s2) edge node[below] {$x$} (bad);
                \path[->] (bad) edge[loop right] node[right] {$1$} (bad);
                \path[->] (good) edge[loop right] node[right] {$1$} (good);
            \end{tikzpicture}        \caption{MDP}
        \label{fig:mdp3}
    \end{minipage}
\end{figure}

\deltashield*

\begin{proof}[\cref{lemma:deltashield}]
For \(\delta=0\), the shield is the same as \(\shield_{\mathit{id}}\) and trivially satisfies \PermissivenessStrong.
For \(\delta=1\), the shield is the same as \(\shieldSafe\) and thus satisfies \SafetyStrong.
Suppose \(0 < \delta < 1\) and \(0 < \nu < 1\).

The \(\delta\)-shield does not satisfy \PermissivenessWeak: consider the MDP in \cref{fig:mdp4}. There is no unsafe policy, so a shield satisfying \PermissivenessWeak{} must allow all policies. Consider the safe policy \(\pi = \{s_0 \mapsto \Dirac_\alpha\}\). For \(d = \Dirac_\alpha\), because \(V_{\min}(s_0)=0\) and \[\delta \cdot \sum_{\alpha \in \Act, s' \in S} \left( d(\alpha) \cdot \mathcal{P}(s_0, \alpha, s') \cdot  V_{\min}(s') \right) = \delta \cdot \nu > 0 = V_{\min}(s_0),\] we have \(\shield_\delta(s_0, d) \neq d\), and thus, \(\T_{\shield_\delta}(\pi) \neq \pi\).

The \(\delta\)-shield does not satisfy \SafetyWeak:
Consider the MDP in \cref{fig:mdp3} and set \[
    x \coloneq \min \left\{ \frac{\nu}{\sqrt{\delta}}, \sqrt{\nu} \right\}.
\]
We've chosen \(x\) specifically such that for all \(0 < \nu < 1\) and \(0 < \delta < 1\): \(0 < x < 1\), \(x > \nu\), and \(\delta \cdot x \leq \nu\). As \(x > \nu\), the policy \(\pi\) with \(\pi(s_0) = \Dirac_\beta\) is unsafe and cannot be completed to a safe policy. But for \(d = \Dirac_\beta\), we have:
\[\delta \cdot \sum_{\alpha \in \Act, s' \in S} \left( d(\alpha) \cdot \mathcal{P}(s_0, \alpha, s') \cdot  V_{\min}(s') \right) = \delta \cdot x \leq \nu = V_{\min}(s_0).\]
Thus, we have \(\shield_\delta(s_0, d) = d\), and thus, we have \(\T_{\shield_\delta}(\pi)(s_0)=\pi(s_0)\).
\end{proof}

\subsection{Saturated Safe Shields}

\theoremconvex*

\begin{proof}[\cref{thm:convex}]
    Suppose that \(\shield\) is saturated, and for some history \(h \in \Hist(\M)\) and choices \(d, d_1, \ldots, d_n \in \Distr(\Act)\), we have that
    \begin{itemize}
        \item \(\shield(h,d) \neq d\),
        \item \(\shield(h,d_i) = d_i\) for all \(1 \leq i \leq n\),
        \item \(d = a_1 d_1 + \cdots + a_n d_n\) for \(0 \leq a_i \leq 1\).
    \end{itemize}
    Note that \(\sum_{1 \leq i \leq n} a_i = 1\).
    Consider the shield \(\shield'\) that behaves like \(\shield\) except that \(\shield'(h,d) = d\). We show that \(\shield'\) is safe. Clearly, this shield allows strictly more policies than \(\shield\).

    Let \(\pi\) be any policy consistent with \(h\) such that \(\pi(\hpath(h))=d\). If \(\shield'\) was unsafe, it would be only because it included an unsafe policy that plays \(d\) at \(\hpath(h)\). Construct the policies \(\pi_i\) that behave the same as \(\pi\) everywhere, except at \(\hpath(h)\), where they play \(d_i\).
    Then, \(V_{\pi}(\hpath(h)) \leq \max \{V_{\pi_i}(\hpath(h)) \mid 1 \leq i \leq n\}\), and thus, \(V_{\pi}(s_0) \leq \max \{V_{\pi_i}(s_0)\mid 1 \leq i \leq n\}\). Thus, \(\pi\) is safe.
    As \(\shield'\) is safe and allows strictly more policies than \(\shield\), \(\shield\) is not saturated \(\implies\) Contradiction.
\end{proof}

\saturatedexists*

\begin{proof}[\cref{thm:saturatedexists}]
    We jump ahead and use the lattice \(L = (\mathrm{CShields}, \sqsubseteq)\) defined in \cref{def:lattice}.
    Let \(\mathrm{SShields} \coloneq \mathrm{CShields} \cap \{\shield \mid \shield \text{ safe}\}\) be the set of safe canonical shields. A canonical saturated shield is a maximal element within \(\mathrm{SShields}\) w.r.t.\ \(\sqsubseteq\).
    We will use Zorn's lemma to show its existence.
    We already know that \(\shieldSafe \in \mathrm{SShields}\), thus, \(\mathrm{SShields}\) is nonempty. We only need to show that each chain in \(\mathrm{SShields}\) has an upper bound in \(\mathrm{SShields}\).
    Let \(X\) be a chain in \(\mathrm{SShields}\). The shield \(\shield \coloneq \bigsqcup X\) is clearly an upper bound, but we need to prove that \(\shield \in \mathrm{SShields}\).
    Suppose that there is an unsafe policy \(\pi\) such that \(\pi \in \Allow(\shield)\).
    As an unsafe policy has a value \emph{strictly} larger than \(\nu\), there exists a step bound \(N \in \mathbb{N}\) such that \(\Pr_\pi(s_0 \vDash \F^{\leq N} \bad) > \nu\). Define
    \begin{align*}
        H \coloneq  \{ &(s_0 \pi(s_0) \alpha_1 s_1 \pi(s_0 \alpha_1 s_1) \cdots s_t, \pi(\tau)) \mid  \\ & \tau \coloneq s_0 \alpha_1 \cdots s_t \in \FinPaths(\M), t \leq N\}
    \end{align*}
    Note that \(H\) is finite, nonempty, and prefix-closed. Then, we have \(\shield(h,d)=d\) for all \((h,d) \in H\), and the shield \(\shield' := \bigsqcup_{(h,d) \in H} \shield_{(h,d)}\) is already unsafe, as  \[{\Pr}_{\T_{\shield'}(\pi)}(s_0 \vDash \F^{\leq N} \bad) = {\Pr}_{\pi}(s_0 \vDash \F^{\leq N} \bad) > \nu.\]
    Write \(H \coloneq \{(h_1,d_1), (h_2,d_2), \ldots, (h_n,d_n)\}\).
    Now pick a set of shields \(Y \coloneq \{\shield_1, \shield_2, \ldots, \shield_n\}\) such that for all \(1 \leq i \leq n\): \(\shield_i(h_i,d_i)=d_i\), and \(\shield_i \in X\). These shields exist in \(X\) because the join of \(X\) allows these history-distribution pairs, and this means that some shield in \(X\) must allow them from the join characterization in \cref{thm:lattice}.
    Because \(Y \subseteq X\), and \(X\) is totally ordered, there exists a maximum w.r.t.\ \(\sqsubseteq\) \(\shield^* \in Y \subseteq X\). Then, we have \(\shield^*(h,d)=d\) for all \((h,d) \in H\) (otherwise, some policy consistent with \(h\) and playing \(d\) at \(\hpath(h)\) would not be allowed by \(\shield^*\) but by a less permissive shield in \(Y\)).
    This means that \[{\Pr}_{\T_{\shield^*}(\pi)}(s_0 \vDash \F^{\leq N} \bad) = {\Pr}_{\pi}(s_0 \vDash \F^{\leq N} \bad) > \nu.\]
    Thus, \(\shield^*\) is unsafe \(\implies\) Contradiction.
    Thus, \(\shield \in \mathrm{SShields}\).
\end{proof}

\noaoc*

\begin{proof}[\cref{lemma:noaoc}]
    Suppose that \(\M\) is acyclic. Let \(N\) be the length of the longest path in \(\M\). In this case, policies can only use memory of size \(N\).
    There is a finite number of paths in \(\M\).
    We can unfold the MDP to an MDP \(\M'\) that has one state for each path. All policies in \(\M\) correspond to a positional policy in \(\M'\). One can thus compute a saturated shield using an LP solver akin to computing an optimally permissive policy as in \cite{DBLP:journals/corr/DragerFK0U15}.
\end{proof}

\subsection{Constructed Shields}

\begin{lemma}
\label{lemma:mixclosure}
The function \(\mix\)
is a closure operator on \((2^{\Pi(\M)}, \subseteq)\), i.e., it is monotone, extensive, and idempotent.
\end{lemma}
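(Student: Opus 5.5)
We prove the three closure-operator properties separately, working directly from the definition of $\mix$: $\pi \in \mix(P)$ iff every history consistent with $\pi$ is consistent with some policy in $P$. Throughout, we write $C(\pi)$ for the set of histories consistent with $\pi$. With this notation, $\pi \in \mix(P)$ iff $C(\pi) \subseteq \bigcup_{\pi' \in P} C(\pi')$.

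\emph{Extensive.} Let $\pi \in P$. Every history consistent with $\pi$ is consistent with $\pi$ itself, which lies in $P$. Hence $\pi \in \mix(P)$, so $P \subseteq \mix(P)$.

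\emph{Monotone.} Suppose $P \subseteq P'$ and $\pi \in \mix(P)$. Take any $h \in C(\pi)$. By assumption there is some $\pi'' \in P$ with $h \in C(\pi'')$. Since $\pi'' \in P'$ as well, the condition for $\pi \in \mix(P')$ holds.

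\emph{Idempotent.} Extensivity applied to $\mix(P)$ gives $\mix(P) \subseteq \mix(\mix(P))$. For the converse, let $\pi \in \mix(\mix(P))$ and $h \in C(\pi)$. First we obtain some $\pi_1 \in \mix(P)$ with $h \in C(\pi_1)$. Unfolding the definition of $\mix(P)$ for $\pi_1$, we then obtain some $\pi_2 \in P$ with $h \in C(\pi_2)$. This shows $\pi \in \mix(P)$.

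The key observation is that the membership condition is a pointwise covering condition on histories. Since we consider $h \in C(\pi)$ and $h \in C(\pi_1)$ for the very same history $h$, the two existential witnesses chain directly, and idempotence follows without any difficulty.

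None of these steps is a real obstacle. The one point to be careful about is that consistency is a property of the history $h$ alone, including its recorded choices, and that it is unaffected by the intermediate policy. Once this is checked, transitivity of the witness relation goes through and all three properties follow.
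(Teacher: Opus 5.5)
Your proof is correct and matches the paper's argument: both unfold the definition directly for each of the three properties, and both chain the two existential witnesses on the same history $h$ to get idempotence. You also note explicitly that $\mix(P) \subseteq \mix(\mix(P))$ follows from extensivity, a direction the paper's proof leaves implicit.
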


\begin{proof}
    \begin{description}
        \item[Monotone:]
        Let \(X \subseteq Y\). Suppose \(\pi \in \mix(X)\). Then for all \(h \in \Hist(\M)\), if \(h\) is consistent with \(\pi\), there exists a \(\pi' \in X\), and thus \(\pi' \in Y\) such that \(h\) is consistent with \(\pi'\). Thus, \(\pi \in \mix(Y)\).
        \item[Extensive:]
        Let \(\pi \in X\). Then, (\enquote{plugging in} the definition), for all histories \(h\), it is the case that \(\pi \in X\) such that if \(h\) consistent with \(\pi\), then \(h\) consistent with \(\pi\). Thus, \(\pi \in \mix(X)\).
        \item[Idempotent:]
        Let \(\pi \in \mix(\mix(X))\).
        For all \(h \in \Hist(\M)\), suppose \(h\) consistent with \(\pi\). Then there exists a \(\pi' \in \mix(X)\) such that \(h\) consistent with \(\pi'\). Then there exists a \(\pi'' \in X\) such that \(h\) consistent with \(\pi''\). Thus, \(\pi \in \mix(X)\).
    \end{description}
\end{proof}

\thmlattice*

\begin{proof}[\cref{thm:lattice}]
    By \cref{lemma:mixclosure}, \(\mix\) is a closure operator on \((2^{\Pi}, \subseteq)\). Thus, the family of closed sets forms a complete lattice under inclusion \cite{tarski1955lattice}.
    In this lattice, the meet is set intersection \(X \sqcap Y = X \cap Y\) (since the intersection of closed sets is closed) and the join is the closure of the union \(X \sqcup Y = \mix(X \cup Y)\).
    We additionally require that the sets of policies allow all safe actions. This is a sublattice: the operations \(\sqcap\) and \(\sqcup\) are closed under this property, the infimum is the set of all safety-optimal policies, and the supremum is the set of all policies. \cref{thm:existsshield} establishes an order-isomorphism between the \(\mix\)-closed sets that allow all safe actions and the canonical shields. Therefore, the canonical shields \(L = (\text{CShields}, \sqsubseteq)\) inherit the lattice structure with the following join and meet:
    For a \(\mix\)-closed set of policies \(P\), let \(\mathit{cshield}(P)\) be the canonical shield with allow set \(P\). Then the lattice is characterized by:
    \begin{align*}
        \shield \sqcup \shield' &= \mathit{cshield}(\mix(\Allow(\shield) \cup \Allow(\shield'))), \\
        \shield \sqcap \shield' &= \mathit{cshield}(\Allow(\shield) \cap \Allow(\shield')).
    \end{align*}
    The above characterization indeed follows from this, using the fact that the above shields allow the union (or cut) of policies and shields are additionally \(\mix\)-closed.
\end{proof}

\pessimisticsat*

\begin{proof}[\cref{thm:pessimisticsat}]
    \textbf{Proof of \(\bigsqcup \mathrm{SatShields} = \shield_{\mathit{opt}}\)}:
    We need to show that \[
        \bigcup_{\shield \in \mathrm{SatShields}} \Allow(\shield) = \Allow(\shield_{\textit{opt}}).
    \]
    \enquote{\(\Rightarrow\)}: 
    Suppose that \(\pi \in \Allow(\shield^*)\) for some \(\shield \in \mathrm{SatShields}\). Then, \(\pi\) is safe. As \(\shield_{\mathit{opt}}\) satisfies \PermissivenessStrong{}, it follows that \(\pi \in \Allow(\shield_{\mathit{opt}})\).
    
    \enquote{\(\Leftarrow\)}: 
    Suppose that \(\pi \in \Allow(\shield_{\mathit{opt}})\).
    Given any path \(\tau \in \FinPaths(\M)\),
    as \(\shield_{\mathit{opt}}\) satisfies \SafetyWeak, there is a safe policy \(\pi_{\mathit{safe}}\) s.t.\
    $\pi_{\mathit{safe}}(\tau') = \T_{\shield}(\pi)(\tau')$ for all prefixes \(\tau'\) of \(\tau\).
    As there is a saturated shield allowing \(\pi_{\mathit{safe}}\) (invoke proof of \cref{thm:existsshield} on the shields allowing \(\pi_{\mathit{safe}}\)), there exists some saturated shield \(\shield\) with $\pi_{\mathit{safe}}(\tau') = \T_{\shield}(\pi)(\tau')$ for all prefixes \(\tau'\) of \(\tau\).
    Thus, $\pi_{\mathit{safe}}(\tau') = \T_{\bigsqcup \mathrm{SatShields}}(\pi)(\tau')$ for all prefixes \(\tau'\) of \(\tau\).
    As this holds for all paths, by induction, we have \(\pi \in \Allow(\bigsqcup \mathrm{SatShields})\).

    \textbf{Proof of \(\shield_{\mathit{pess}} \sqsubseteq \bigsqcap \mathrm{SatShields}\)}:
    Suppose that \(\pi \in \Allow(\shield_{\textit{pess}})\).
    Suppose there is some saturated shield \(\shield^*\) such that \(\pi \notin \Allow(\shield^*)\). Then \(\shield^*(h,d) \neq d\) for some history \(h\) consistent with \(\pi\) and \(d = \pi(\hpath(h))\).
    We show that both cases lead to a contradiction:
    (1) \(d = d|_{\SafeAct(\last(\tau))}\). Then clearly, \(\shield^*(h,d) = d\) as \(\shield^*(h,d)\) is canonical.
    (2) \(d \neq d|_{\SafeAct(\last(\tau))}\). By the proof of \cref{thm:pessimistic}, all policies \(\pi'\) are safe if \(\pi'(\tau') = \pi(\tau')\) for all prefixes \(\tau'\) of \(\tau\). Stated differently, there is no unsafe policy \(\pi'\) consistent with \(h\) and with \(\pi'(\hpath(h)) = d\). Thus, \(\shield^*(h,d) = d\), otherwise, it is not saturated.
\end{proof}

\thmshieldsafe*

\begin{proof}[\cref{thm:shieldsafe}]
Before we get into the proof, note that we think about the value \(V_{\shield}(h)\) as the maximal probability to reach a bad state in an MDP where there is a state of each history present in \(J'\), and a transition for each \enquote{allowed} choice. In this context, it is easier to see how a policy that follows only allowed choices will be a policy of that MDP and thus be safe, and an unsafe MDP will allow unsafe policies. However, the MDP is annoying to define, so we will instead directly prove the statement for the value equation at hand.
    
First, note that  \(\shield = \bigsqcup_{(h,d) \in J} \shield_{(h,d)} = \bigsqcup_{(h,d) \in J'} \shield_{(h,d)}\), as all added history-distribution pairs are prefixes which are already allowed by the point shields.

\enquote{\(\Rightarrow\)}: Suppose that the shield \(\shield = \bigsqcup_{(h,d) \in J'} \shield_{(h,d)}\) is safe and suppose that \(V_{\shield}(s_0) > \nu\). Construct the policy \(\pi\) with
\begin{align*}
    \pi(s_0 \alpha_1 s_1 \cdots s_t) \coloneq \mathop{\mathrm{argmax}}_{d \in \Distr(\Act)} \; &\{Q_{\shield}(h, d) \mid (h, d) \in J', h = s_0 \pi(s_0) \alpha_1 s_1 \cdots s_t\} \\ \cup &\;\{V_{\min}(s_t)\}.
\end{align*}
For this policy, we have \(\T_{\shield}(\pi) = \pi\), so \(\pi \in \Allow(\shield)\). Moreover, we have \(V_\pi(s_0) = V_{\shield}(s_0) > \nu\). Thus, the shield is unsafe \(\implies\) Contradiction.

\enquote{\(\Leftarrow\)}: Suppose that \(V_{\shield}(s_0) \leq \nu\). Let \(\pi \in \Pi^\M\) be any policy. Then the transformed policy \(\pi' \coloneq \T_{\shield}(\pi)\) behaves as follows (by \cref{thm:lattice}):
\begin{align*}
    \pi'(s_0 \alpha_1 s_1 \cdots \alpha_t s_t) =\;& 
        \begin{cases}
            d_t \text{ if } (h_t, d_t) \in J', \\
            d_t|_{\SafeAct(\last(h_t))} \text{ otherwise,}
        \end{cases}\\
    \text{where } h_t =\;& s_0 \; \T_{\shield}(\pi)(s_0) \; \alpha_1 \; s_1 \; \T_{\shield}(\pi)(s_0\alpha_1s_1) \; \alpha_2 \; \cdots \; \alpha_t \; s_t,\\
    \text{and } d_t =\;& \pi(s_0 \alpha_1 s_1 \cdots \alpha_t s_t).
\end{align*}
We will prove that for all paths \(\tau_t = s_0 \alpha_1 s_1 \cdots \alpha_t s_t\), we have \(V_{\pi'}(\tau_t) \leq V_{\shield}(h_t)\) by backward induction over \(t\). Then, it follows that \(V_{\pi'}(s_0) \leq V_{\shield}(s_0)\) and thus, the shield is safe. \emph{Note}: In the context of the MDP mentioned above, we are performing value iteration on an induced Markov chain, and showing that this value is smaller than the maximal value of the entire MDP.

The base cases are all \(t\) such that \((h_t, d_t) \notin J'\). Then, \(V_{\pi'}(\tau_t) = V_{\shield}(h_t) = V_{\min}(\last(h))\).

As the step, suppose \((h_t, d_t) \in J'\).
We have 
\[
V_{\shield}(h_t) = Q_{\shield}(h_t, d_t) = \sum_{\alpha \in \mathrm{Supp}(d_t)} d_t(\alpha) \sum_{s' \in S} \mathcal{P}(\last(h), \alpha, s') \cdot V_{\shield}(h_t \cdot d_t \cdot \alpha \cdot s').
\]
By the induction hypothesis, we know that for all \(\alpha \in \mathrm{Supp}(d_t), s' \in S\): \[
V_{\pi}(\tau_t \cdot \alpha \cdot s') \leq V_{\shield}(h_t \cdot d_t \cdot \alpha \cdot s').
\]
Then we have:
\begin{align*}
    V_{\pi}(\tau_t) &= \sum_{\alpha \in \mathrm{Supp}(d_t)} d_t(\alpha) \sum_{s' \in S} \mathcal{P}(\last(h), \alpha, s') \cdot V_{\pi}(h_t \cdot \alpha \cdot s') \\
    &\leq \sum_{\alpha \in \mathrm{Supp}(d_t)} d_t(\alpha) \sum_{s' \in S} \mathcal{P}(\last(h), \alpha, s') \cdot V_{\shield}(h_t \cdot d_t \cdot \alpha \cdot s') \\
    &= V_{\shield}(h_t, d_t).
\end{align*}
\end{proof}

\thmagreement*
 
\begin{proof}[\cref{thm:agreement}]
    The shield \(\shield_{n+1}\) is safe by definition of \(\oplus_{\mathit{safe}}\). Consider the set \(F \subseteq L_{\shield}\) of safe canonical shields that are more permissive than \(\shield_{n+1}\). Following the proof of \cref{thm:saturatedexists}, \(F\) has at least one maximal element \(\shield^*\), which is a canonical saturated shield.
    Let \((h,d) \in H\). We show that \(\shield_{n+1}(h, d) = \shield^*(h, d)\):
    \begin{itemize}
        \item If \({\shield}_{n+1}(h,d) = d\), then \(\shield^*(h,d) = d\) because \(\shield_{n+1} \sqsubseteq \shield^*\).
        \item If \({\shield}_{n+1}(h,d) \neq d\), then \(\shield_{n+1} \sqcup \shield_{(h,d)}\) is unsafe by definition of \(\oplus_{\mathit{safe}}\). Therefore, any shield that is more permissive than \(\shield_{n+1} \sqcup \shield_{(h,d)}\) is unsafe, so \(\shield^*(h,d) = d\) implies that \(\shield^*\) is unsafe, thus \(\shield^*(h,d) \neq d\). As \(\shield_{n+1}\) and \(\shield^*\) are canonical, we have \({\shield}_{n+1}(h,d) = \shield^*(h,d) = d|_{\SafeAct(\last(h))}\).
    \end{itemize}
\end{proof}

\constructedtime*

\begin{proof}[\cref{thm:constructedtime}]
    \emph{Proof that \(\shield(h,d)\) can be computed in time \(\mathcal{O}(|h| \cdot |J| \cdot |\Act|)\):} Suppose that \(J\) is given as a finite list. Then one can search whether \((h,d)\) is in \(J\) by performing a linear search on the list, which takes time \(\mathcal{O}(|h| \cdot |J|)\). If the search fails, we need to compute the action projection, which can be done in time \(\mathcal{O}(|\Act|)\).

    \emph{Proof that \(V_{\shield}(s_0)\) can be computed in time \(\mathcal{O}(|J| \cdot L \cdot |\Act| \cdot |S|)\), where \(L\) is the length of the longest history in \(J\):} We have \(|J'| \leq |J| + |J| \cdot L\), as each history in \(J\) has at most \(L\) prefixes. Then, one can solve the equation for \(V_{\shield}\) by dynamic programming on the directed acyclic graph whose nodes are the elements of \(J'\). Considering the evaluation of \(Q_{\shield}\), this takes time \(\mathcal{O}(|J| \cdot L \cdot |\Act| \cdot |S|)\).
\end{proof}

\thmunsafetransformer*

\begin{proof}[\cref{thm:unsafetransformer}]
    Consider the MDP in \cref{fig:mdp}. For the unsafe policy
    \(\pi=\{s_0 \mapsto \Dirac_\varepsilon, s_0 \varepsilon s_1\mapsto \Dirac_\beta, s_0\varepsilon s_2 \mapsto \Dirac_\delta\}\),
    we have \(\T'_{\shieldSafe}(\pi) = \pi\):
    \begin{align*}
        \T'_{\shieldSafe}(\pi)(s_0) =& \; (\shieldSafe \oplus_{\mathit{safe}} \shield_{(s_0, \Dirac_\varepsilon)})(s_0, \Dirac_\varepsilon) = \Dirac_\varepsilon, \\
        \T'_{\shieldSafe}(\pi)(s_0 \varepsilon s_1) =& \; (\shieldSafe \oplus_{\mathit{safe}} \shield_{(s_0, \Dirac_\varepsilon)} \oplus_{\mathit{safe}} \shield_{(s_0 \Dirac_\varepsilon \varepsilon s_1, \Dirac_\beta)})(s_0 \Dirac_\varepsilon \varepsilon s_1, \Dirac_\beta) = \Dirac_\beta, \\
        \T'_{\shieldSafe}(\pi)(s_0 \varepsilon s_2) =& \; (\shieldSafe \oplus_{\mathit{safe}} \shield_{(s_0, \Dirac_\varepsilon)} \oplus_{\mathit{safe}} \shield_{(s_0 \Dirac_\varepsilon \varepsilon s_2, \Dirac_\delta)})(s_0 \Dirac_\varepsilon \varepsilon s_2, \Dirac_\delta) = \Dirac_\delta. \\
    \end{align*}
\end{proof}

\newpage
\section{Extended Evaluation}

In this section, we provide more insight into the setup of the experiments and provide more empirical results.

\subsection{Evaluation Explanation}
\label{app:eval-explanation}

For \cref{tab:main} we report the value of reaching a bad state and the expected allowed choices for the first 50 steps (as this was the episode length we used for all of our experiments). To evaluate the shields, we construct the induced DTMC based on the shielded choices and perform model checking to obtain the results. On optimistic (\(\shield_{opt}\)) and online (\(\shield_{\mathit{onl}}\)) shields, this is not possible as for \(\shield_{opt}\) we would need to unfold every possible path while the risk is below the threshold\footnote{Theoretically, this is possible, but in practice this is often infeasible.} and the online shield constantly changes so we cannot construct a single object for model checking. In these cases, we evaluate the shielded agent via simulations. We run one million simulation steps (for an episode length of 50, this equals to at least 20,000 completed episodes) three times and report the average value from these runs. The offline ($\shield_{\mathit{off}}$) used in the experiments was constructed on one million steps of the agent.

\subsection{Complete Results}
\label{app:complete-results}

This section includes the complete versions of the Tables~\ref{tab:main} and~\ref{tab:runtimes}, with all of the omitted results included here. Tables~\ref{tab:app:main-pt1} and~\ref{tab:app:main-pt2} contain the extended results for the Table~\ref{tab:main}. Table~\ref{tab:app-runtimes-seconds} reports the total runtime in seconds of the simulator with the equipped shield.

\begin{table}[h]
\renewcommand{\arraystretch}{0.82}%
\setlength{\tabcolsep}{1pt}

\caption{Comparison of different shields. The second column reports the safety values of the unshielded agents. The other columns report the safety value of the shielded agent and the ratio of allowed actions (higher is better). For each row, the most permissive safe shields are in boldface. For shields marked with *, the results are obtained using simulations and are thus subject to a statistical imprecision.} %
\begin{adjustbox}{angle=270}

\scalebox{0.9}{

\begin{tabular}{l@{\hskip 12pt} l@{\hskip 12pt} r@{\hskip 12pt} rr@{\hskip 12pt} rr@{\hskip 12pt} rr@{\hskip 12pt} rr@{\hskip 12pt} rr@{\hskip 12pt} rr@{\hskip 12pt} rr@{\hskip 12pt} rr@{\hskip 12pt}}
\toprule
\multirow{2}{*}{Model} & Agent & \multirow{2}{*}{\(\nu\)} & \multicolumn{2}{c@{\hskip 12pt}}{\(\shieldSafe\)} & 
\multicolumn{2}{c@{\hskip 12pt}}{\(\shield_{\delta}\)} & 
\multicolumn{2}{c@{\hskip 12pt}}{\(\shield_{\delta^+}\)} &
\multicolumn{2}{c@{\hskip 12pt}}{\(\shield^{*}_{opt}\)}& 
\multicolumn{2}{c@{\hskip 12pt}}{\(\shield^{*}_{pess}\)}& 
\multicolumn{2}{c@{\hskip 12pt}}{\(\shield^{*}_{onl}\)} & \multicolumn{2}{c@{\hskip 12pt}}{\(\shield_{off}\)} & \multicolumn{2}{c@{\hskip 12pt}}{\(\shield_{\mathit{ML}}\)} \\

& (value) & & \multicolumn{2}{c@{\hskip 12pt}}{\texttt{\textcolor{green!30!black}{SAFE}}} & \multicolumn{2}{c@{\hskip 12pt}}{\texttt{\textcolor{red!80}{UNSAFE}}} & 
\multicolumn{2}{c@{\hskip 12pt}}{\texttt{\textcolor{red!80}{UNSAFE}}} &
\multicolumn{2}{c@{\hskip 12pt}}{\texttt{\textcolor{red!80}{UNSAFE}}} & 
\multicolumn{2}{c@{\hskip 12pt}}{\texttt{\textcolor{green!30!black}{SAFE}}} &
\multicolumn{2}{c@{\hskip 12pt}}{\texttt{\textcolor{green!30!black}{SAFE}}} & \multicolumn{2}{c@{\hskip 12pt}}{\texttt{\textcolor{green!30!black}{SAFE}}} & \multicolumn{2}{c@{\hskip 12pt}}{\texttt{\textcolor{green!30!black}{SAFE}}} \\
\midrule

\multirow{9}{*}{corridor} & \multirow{3}{*}{greedy} & 0.05 & \textbf{.000} & \textbf{.020} & \textbf{.000} & \textbf{.020} & \textbf{.000} & \textbf{.020} & .000 & .020 & \textbf{.000} & \textbf{.020} & \textbf{.000} & \textbf{.020} & \textbf{.000} & \textbf{.020} & \textbf{.000} & \textbf{.020} \\
 &  & 0.1 & .000 & .020 & .000 & .020 & .000 & .020 & \cellcolor{red!20}.101 & \cellcolor{red!20}.209 & .000 & .020 & \textbf{.100} & \textbf{.204} & .100 & .202 & .000 & .020 \\
 & (.125) & 0.2 & .000 & .020 & .000 & .020 & .125 & .735 & .125 & ~~{=}1 & .000 & .020 & .125 & .936 & \textbf{.125} & \textbf{.968} & .125 & .735 \\
 \cmidrule(lr){2-19}
 & \multirow{3}{*}{timid} & 0.05 & \textbf{.000} & \textbf{.020} & \textbf{.000} & \textbf{.020} & \textbf{.000} & \textbf{.020} & .000 & .020 & \textbf{.000} & \textbf{.020} & \textbf{.000} & \textbf{.020} & \textbf{.000} & \textbf{.020} & \textbf{.000} & \textbf{.020} \\
 &  & 0.1 & .000 & .020 & .000 & .020 & .000 & .020 & .100 & .408 & .000 & .020 & \textbf{.100} & \textbf{.406} & \textbf{.100} & \textbf{.406} & .000 & .020 \\
 & (.125) & 0.2 & .000 & .020 & .000 & .020 & \textbf{.125} & \textbf{~~{=}1} & .125 & ~~{=}1 & .000 & .020 & .124 & .993 & .125 & .997 & .125 & .925 \\
 \cmidrule(lr){2-19}
 & \multirow{3}{*}{random} & 0.05 & .000 & .116 & .000 & .116 & .000 & .116 & \cellcolor{red!20}.731 & \cellcolor{red!20}~~{=}1 & .000 & .116 & .049 & .134 & \textbf{.050} & \textbf{.142} & .000 & .116 \\
 &  & 0.1 & .000 & .116 & .000 & .116 & .000 & .116 & \cellcolor{red!20}.732 & \cellcolor{red!20}~~{=}1 & .000 & .116 & \textbf{.098} & \textbf{.153} & .100 & .148 & .000 & .116 \\
 & (.733) & 0.2 & .000 & .116 & .000 & .116 & .000 & .116 & \cellcolor{red!20}.729 & \cellcolor{red!20}~~{=}1 & .000 & .116 & .199 & .190 & \textbf{.200} & \textbf{.215} & .000 & .116 \\
 \cmidrule(lr){1-19}
\multirow{9}{*}{dpm} & \multirow{3}{*}{greedy} & 0.01 & .000 & .771 & .000 & .771 & \cellcolor{red!20}.479 & \cellcolor{red!20}~~{=}1 & \cellcolor{red!20}.165 & \cellcolor{red!20}~~{=}1 & .000 & .767 & .010 & .790 & .010 & .792 & \textbf{.009} & \textbf{.794} \\
 &  & 0.05 & .000 & .771 & .000 & .771 & \cellcolor{red!20}.479 & \cellcolor{red!20}~~{=}1 & \cellcolor{red!20}.161 & \cellcolor{red!20}~~{=}1 & .000 & .767 & .047 & .839 & \textbf{.050} & \textbf{.848} & .044 & .844 \\
 & (.479) & 0.2 & .000 & .771 & .000 & .771 & \cellcolor{red!20}.479 & \cellcolor{red!20}~~{=}1 & .165 & ~~{=}1 & .000 & .767 & .062 & .870 & .097 & .908 & \textbf{.155} & \textbf{.924} \\
 \cmidrule(lr){2-19}
 & \multirow{3}{*}{timid} & 0.01 & \textbf{.000} & \textbf{.981} & \textbf{.000} & \textbf{.981} & \cellcolor{red!20}.063 & \cellcolor{red!20}~~{=}1 & .002 & ~~{=}1 & .000 & .979 & .000 & .979 & \textbf{.000} & \textbf{.981} & \textbf{.000} & \textbf{.981} \\
 &  & 0.05 & .000 & .981 & .000 & .981 & \cellcolor{red!20}.063 & \cellcolor{red!20}~~{=}1 & .002 & ~~{=}1 & .000 & .979 & .000 & .979 & .000 & .981 & \textbf{.010} & \textbf{.984} \\
 & (.063) & 0.2 & .000 & .981 & .000 & .981 & \textbf{.063} & \textbf{~~{=}1} & .002 & ~~{=}1 & .000 & .979 & .000 & .979 & .000 & .981 & \textbf{.063} & \textbf{~~{=}1} \\
 \cmidrule(lr){2-19}
 & \multirow{3}{*}{random} & 0.01 & .000 & .798 & .000 & .798 & \cellcolor{red!20}.131 & \cellcolor{red!20}~~{=}1 & \cellcolor{red!20}.029 & \cellcolor{red!20}~~{=}1 & .000 & .794 & .005 & .831 & \textbf{.006} & \textbf{.841} & .006 & .839 \\
 &  & 0.05 & .000 & .798 & .000 & .798 & \cellcolor{red!20}.131 & \cellcolor{red!20}~~{=}1 & .028 & ~~{=}1 & .000 & .794 & .005 & .831 & .006 & .841 & \textbf{.037} & \textbf{.989} \\
 & (.131) & 0.2 & .000 & .798 & .000 & .798 & \textbf{.131} & \textbf{~~{=}1} & .029 & ~~{=}1 & .000 & .794 & .005 & .831 & .006 & .841 & \textbf{.131} & \textbf{~~{=}1} \\

\bottomrule
\end{tabular}

}
\end{adjustbox}

\label{tab:app:main-pt1}
\end{table}

\begin{table}[h]
\renewcommand{\arraystretch}{0.82}%
\setlength{\tabcolsep}{1pt}

\caption{Comparison of different shields. The second column reports the safety values of the unshielded agents. The other columns report the safety value of the shielded agent and the ratio of allowed actions (higher is better). For each row, the most permissive safe shields are in boldface. For shields marked with *, the results are obtained using simulations and are thus subject to a statistical imprecision.} %
\begin{adjustbox}{angle=270}

\scalebox{0.9}{

\begin{tabular}{l@{\hskip 12pt} l@{\hskip 12pt} r@{\hskip 12pt} rr@{\hskip 12pt} rr@{\hskip 12pt} rr@{\hskip 12pt} rr@{\hskip 12pt} rr@{\hskip 12pt} rr@{\hskip 12pt} rr@{\hskip 12pt} rr@{\hskip 12pt}}
\toprule
\multirow{2}{*}{Model} & Agent & \multirow{2}{*}{\(\nu\)} & \multicolumn{2}{c@{\hskip 12pt}}{\(\shieldSafe\)} & 
\multicolumn{2}{c@{\hskip 12pt}}{\(\shield_{\delta}\)} & 
\multicolumn{2}{c@{\hskip 12pt}}{\(\shield_{\delta^+}\)} &
\multicolumn{2}{c@{\hskip 12pt}}{\(\shield^{*}_{opt}\)}& 
\multicolumn{2}{c@{\hskip 12pt}}{\(\shield^{*}_{pess}\)}& 
\multicolumn{2}{c@{\hskip 12pt}}{\(\shield^{*}_{onl}\)} & \multicolumn{2}{c@{\hskip 12pt}}{\(\shield_{off}\)} & \multicolumn{2}{c@{\hskip 12pt}}{\(\shield_{\mathit{ML}}\)} \\

& (value) & & \multicolumn{2}{c@{\hskip 12pt}}{\texttt{\textcolor{green!30!black}{SAFE}}} & \multicolumn{2}{c@{\hskip 12pt}}{\texttt{\textcolor{red!80}{UNSAFE}}} & 
\multicolumn{2}{c@{\hskip 12pt}}{\texttt{\textcolor{red!80}{UNSAFE}}} &
\multicolumn{2}{c@{\hskip 12pt}}{\texttt{\textcolor{red!80}{UNSAFE}}} & 
\multicolumn{2}{c@{\hskip 12pt}}{\texttt{\textcolor{green!30!black}{SAFE}}} &
\multicolumn{2}{c@{\hskip 12pt}}{\texttt{\textcolor{green!30!black}{SAFE}}} & \multicolumn{2}{c@{\hskip 12pt}}{\texttt{\textcolor{green!30!black}{SAFE}}} & \multicolumn{2}{c@{\hskip 12pt}}{\texttt{\textcolor{green!30!black}{SAFE}}} \\
\midrule

\multirow{9}{*}{drone} & \multirow{3}{*}{greedy} & 0.01 & .000 & .207 & .000 & .207 & .001 & .211 & \cellcolor{red!20}.161 & \cellcolor{red!20}.591 & .000 & .207 & .009 & .214 & \textbf{.010} & \textbf{.218} & .000 & .208 \\
 &  & 0.05 & .000 & .207 & .000 & .207 & .001 & .211 & \cellcolor{red!20}.234 & \cellcolor{red!20}.813 & .000 & .207 & .049 & .254 & \textbf{.050} & \textbf{.259} & .000 & .208 \\
 & (.240) & 0.2 & .000 & .207 & .000 & .207 & \cellcolor{red!20}.240 & \cellcolor{red!20}~~{=}1 & \cellcolor{red!20}.242 & \cellcolor{red!20}~~{=}1 & .000 & .207 & .160 & .538 & \textbf{.200} & \textbf{.656} & .054 & .239 \\
 \cmidrule(lr){2-19}
 & \multirow{3}{*}{timid} & 0.01 & .000 & .554 & .000 & .554 & .002 & .728 & \cellcolor{red!20}.014 & \cellcolor{red!20}~~{=}1 & .000 & .558 & .009 & .808 & \textbf{.010} & \textbf{.830} & .000 & .554 \\
 &  & 0.05 & .000 & .554 & .000 & .554 & .002 & .728 & .014 & ~~{=}1 & .000 & .558 & .008 & .811 & \textbf{.011} & \textbf{.855} & .000 & .554 \\
 & (.014) & 0.2 & .000 & .554 & .000 & .554 & \textbf{.014} & \textbf{~~{=}1} & .015 & ~~{=}1 & .000 & .557 & .009 & .810 & .011 & .855 & .001 & .640 \\
 \cmidrule(lr){2-19}
 & \multirow{3}{*}{random} & 0.01 & \textbf{.000} & \textbf{.722} & \textbf{.000} & \textbf{.722} & \cellcolor{red!20}.052 & \cellcolor{red!20}.839 & \cellcolor{red!20}.270 & \cellcolor{red!20}~~{=}1 & .000 & .719 & .000 & .721 & \textbf{.000} & \textbf{.722} & \textbf{.000} & \textbf{.722} \\
 &  & 0.05 & .000 & .722 & .000 & .722 & \cellcolor{red!20}.052 & \cellcolor{red!20}.839 & \cellcolor{red!20}.269 & \cellcolor{red!20}~~{=}1 & .000 & .721 & .000 & .723 & .000 & .722 & \textbf{.006} & \textbf{.749} \\
 & (.962) & 0.2 & .000 & .722 & .000 & .722 & \textbf{.052} & \textbf{.839} & \cellcolor{red!20}.269 & \cellcolor{red!20}~~{=}1 & .000 & .722 & .000 & .722 & .000 & .722 & .027 & .812 \\
 \cmidrule(lr){1-19}
\multirow{9}{*}{drone-b} & \multirow{3}{*}{greedy} & 0.01 & \textbf{.000} & \textbf{.160} & \textbf{.000} & \textbf{.160} & \textbf{.000} & \textbf{.160} & .000 & .160 & \textbf{.000} & \textbf{.160} & \textbf{.000} & \textbf{.160} & \textbf{.000} & \textbf{.160} & \textbf{.000} & \textbf{.160} \\
 &  & 0.05 & .000 & .160 & .000 & .160 & .000 & .160 & \cellcolor{red!20}.052 & \cellcolor{red!20}.260 & .000 & .160 & .050 & .247 & \textbf{.050} & \textbf{.256} & .000 & .160 \\
 & (.912) & 0.2 & .000 & .160 & .000 & .160 & \textbf{.139} & \textbf{.915} & \cellcolor{red!20}.586 & \cellcolor{red!20}.967 & .000 & .160 & .197 & .733 & .200 & .743 & .000 & .160 \\
 \cmidrule(lr){2-19}
 & \multirow{3}{*}{timid} & 0.01 & .000 & .987 & .000 & .987 & .000 & .987 & \cellcolor{red!20}.015 & \cellcolor{red!20}~~{=}1 & .000 & .987 & .002 & .989 & \textbf{.004} & \textbf{.990} & .000 & .987 \\
 &  & 0.05 & .000 & .987 & .000 & .987 & \textbf{.014} & \textbf{.999} & .014 & ~~{=}1 & .000 & .987 & .002 & .989 & .004 & .990 & .000 & .987 \\
 & (.016) & 0.2 & .000 & .987 & .000 & .987 & \textbf{.014} & \textbf{.999} & .015 & ~~{=}1 & .000 & .987 & .002 & .989 & .004 & .990 & .000 & .987 \\
 \cmidrule(lr){2-19}
 & \multirow{3}{*}{random} & 0.01 & .000 & .762 & .000 & .762 & \cellcolor{red!20}.016 & \cellcolor{red!20}.764 & \cellcolor{red!20}.785 & \cellcolor{red!20}~~{=}1 & .000 & .762 & .010 & .762 & \textbf{.010} & \textbf{.763} & .000 & .762 \\
 &  & 0.05 & .000 & .762 & .000 & .762 & \cellcolor{red!20}.101 & \cellcolor{red!20}.764 & \cellcolor{red!20}.782 & \cellcolor{red!20}~~{=}1 & .000 & .761 & .042 & .764 & \textbf{.050} & \textbf{.765} & .000 & .762 \\
 & (~~{=}1) & 0.2 & .000 & .762 & .000 & .762 & .101 & .764 & \cellcolor{red!20}.781 & \cellcolor{red!20}~~{=}1 & .000 & .761 & .044 & .763 & \textbf{.074} & \textbf{.766} & .000 & .762 \\

\bottomrule
\end{tabular}

}
\end{adjustbox}

\label{tab:app:main-pt2}
\end{table}

\begin{table}[t]
\renewcommand{\arraystretch}{1}%
\setlength{\tabcolsep}{1pt}

\caption{Comparison of runtimes of different shields. We report the time it took to shield one million actions in seconds (including the simulation overhead). For all experiments, we used greedy agent and \(\nu=0.2\).}

\centering
\scalebox{0.9}{
\begin{tabular}{l@{\hskip 6pt} r@{\hskip 12pt} r@{\hskip 6pt} r@{\hskip 6pt} r@{\hskip 6pt}
r@{\hskip 6pt} r@{\hskip 6pt}
r@{\hskip 6pt} r@{\hskip 6pt} r}
\toprule

Model & \(|M|\) & \multicolumn{1}{c@{\hskip 6pt}}{\(\shieldSafe\)} & \multicolumn{1}{c@{\hskip 6pt}}{\(\shield_{\delta^{+}}\)} & 
\multicolumn{1}{c@{\hskip 6pt}}{\(\shield_{\delta}\)} & \multicolumn{1}{c@{\hskip 6pt}}{\(\shield_{opt}\)} & 
\multicolumn{1}{c@{\hskip 6pt}}{\(\shield_{\mathit{pess}}\)} & 
\multicolumn{1}{c@{\hskip 6pt}}{\(\shield_{\mathit{onl}}\)} & \multicolumn{1}{c@{\hskip 6pt}}{\(\shield_{\mathit{off}}\)} &\multicolumn{1}{c}{\(\shield_{\mathit{ML}}\)}
\\

\midrule

corridor & 15 & 373 & 454 & 454 & 536 & 
528 & 
772 & 476 & 565 \\
dpm & 797 & 351 & 340 & 340 & 418 & 
383 & 
647 & 447 & 412 \\
drone & 1859 & 377 & 386 & 386 & 555 &
429 &
1471 & 608 & 824 \\
drone-b & 87k & 462 & 454 & 454 & 664 & 
590 & 
1106 & 552 & 615 \\

\bottomrule
\end{tabular}

}

\label{tab:app-runtimes-seconds}
\end{table}

\subsection{Influence of Data on Construction Scalability}
\label{app:complete-results-convergence}

We report the influence of changing the number of construction steps and episode length on the construction time performance. \cref{tab:episodes} showcases the speedup in the number of shield queries per second for the online construction as the number of construction steps grows. \cref{tab:steps} shows how the number of online shield queries per second decreases when using longer episode lengths. These results show that considering lengthy episodes is the most costly characteristic for the online shield, but even in the case where episode length 200 was considered and the runtime was slowest, the average shield query takes less than 4ms as the numbers in the table include the overhead from the simulator.

\begin{table}[t]
\renewcommand{\arraystretch}{1}%
\setlength{\tabcolsep}{1pt}

\caption{Runtime comparison of the online construction for the greedy agent with increasing number of steps used for construction. We report the average number of shield queries per second.}

\centering
\scalebox{0.9}{

\begin{tabular}{l@{\hskip 6pt} r@{\hskip 12pt} r@{\hskip 6pt} r@{\hskip 6pt} r@{\hskip 6pt} r@{\hskip 6pt} r@{\hskip 6pt} r}
\toprule

Model & \(|M|\) & 63k & 126k & 253k & 500k & 1M & 2M \\

\midrule

corridor & 15 & 1.1k & 1.2k & 1.4k & 1.5k & 1.8k & 1.9k \\
dpm & 797 & 1.4k & 1.4k & 1.4k & 1.4k & 1.6k & 1.7k \\
drone & 1859 & 0.5k & 0.6k & 0.6k & 0.7k & 0.7k & 0.8k \\
drone-b & 87k & 0.8k & 0.9k & 0.9k & 0.9k & 0.9k & 0.9k \\

\bottomrule
\end{tabular}

}

\label{tab:steps}
\end{table}

\begin{table}[t]
\renewcommand{\arraystretch}{1}%
\setlength{\tabcolsep}{1pt}

\caption{Runtime comparison of the online construction for the greedy agent with increasing episode length used for construction. All constructions ran for 1 million shield calls. We report the average number of shield queries per second.}

\centering
\scalebox{0.9}{

\begin{tabular}{l@{\hskip 6pt} r@{\hskip 12pt} r@{\hskip 6pt} r@{\hskip 6pt} r@{\hskip 6pt} r@{\hskip 6pt} r@{\hskip 6pt} r}
\toprule

Model & \(|M|\) & 10 & 25 & 50 & 75 & 100 & 200 \\

\midrule

corridor & 15 & 2.0k & 1.9k & 1.8k & 1.7k & 1.4k & 0.9k \\
dpm & 797 & 2.3k & 2.1k & 1.6k & 1.2k & 1.1k & 0.8k \\
drone & 1859 & 2.2k & 1.1k & 0.7k & 0.5k & 0.4k & 0.3k \\
drone-b & 87k & 1.9k & 1.1k & 0.9k & 0.7k & 0.7k & 0.5k \\

\bottomrule
\end{tabular}

}

\label{tab:episodes}
\end{table}

\newpage
\section{Towards Sliding-Window Constructed Shields}
\label{app:slidingwindow}

Shields (\cref{def:shield}) have access to the full history. However, practical limitations provide motivation for considering shields that have access to less information. We first introduce a shield that observes the equivalence class of the history:
\begin{definition}[\({\sim}\)-Shield]
    Let \({\sim}\) be an equivalence relation on histories. A \({\sim}\)-shield \(\shield_\sim\) is a function \(\Hist(\M)/{\sim} \times \Distr(\Act) \rightarrow \Distr(\Act)\). For a history \(h \in \Hist(\M)\) and choice \(d \in \Distr(\Act)\), we write \(\shield_{\sim}(h, d) \coloneq \shield_\sim([h]_{\sim}, d)\).
\end{definition}
The existing safety and permissiveness guarantees remain the same. However, it is helpful to define a \(\sim\)-variant of saturated permissiveness:
\begin{mdframed}[style=MyFrame,nobreak=true]
    \textbf{Guarantee (\(\sim\)-P*): \(\sim\)-Saturated Permissiveness.} For all \(\pi \in \Pi_\M \setminus \Allow(\shield)\), there is no safe \(\sim\)-shield \(\shield'\) such that \(\Allow(\shield) \cup \{\pi\} \subseteq \Allow(\shield')\).
\end{mdframed}

We say that a history \(h \in \Hist(\M)\) is \(\sim\)-consistent with policy \(\pi\) if there exists a history \(h' \sim h\) such that \(h'\) is consistent with \(\pi\).
We obtain the lattice structure by defining \(\mix_\sim\):
\begin{definition}[\(\sim\)-History Mixing]
    Given a set of policies \(P\subseteq\Pi_\M\), the \emph{\(\sim\)-history mixing} is the set \(\mix_\sim(P)\subseteq\Pi_\M\) such that \(\pi \in \mix_\sim(P)\) if:
    \[
    \forall h \in \Hist(\M). \; h \text{ is \(\sim\)-consistent with } \pi \implies (\exists \pi' \in P. \; h \text{ is \(\sim\)-consistent with } \pi').
    \]
\end{definition}

We now give an instance of \(\sim\)-shields:
A \emph{sliding-window shield} has access to the last \(n\) history elements.
\begin{definition}[Sliding-Window Shield]
    \label{def:slidingwindow}
    Given a window size \(n \in \mathbb{N}\) with \(n \geq 0\), a sliding window-shield is a \(\sim_n\)-shield, where
    \(
        h \sim_n h' :\Leftrightarrow h|^n = h'|^n
    \)
    and \(h|^k \coloneq s_{t-k} \, d_{t-k} \, \alpha_{t-k+1} \, \cdots s_t\) is the history suffix of length $k$, \(0 \leq k \leq t\).
\end{definition}
Now consider sliding-window shields and \(\sim_n\)-saturated permissiveness. We can define a \(\sim_n\)-value of shield from \cref{def:valueofshield} by replacing all histories with their respective equivalence classes. Note that while this construction does not necessarily go through for arbitrary equivalence relations (e.g., if the abstraction does not preserve the current state \(\last(h)\), it is well-defined for \(\sim_n\). In the case of \(\sim_n\), this is a Bellman operator of a (potentially cyclic) MDP. One can then show a version of \cref{thm:shieldsafe}.

\endgroup

\end{document}